\def\figcaption{\def\@captype{figure}\caption}
\newtheorem{theorem}{\bf Theorem}[section]
\newtheorem{lemma}[theorem]{\bf Lemma}
\newtheorem{proposition}[theorem]{\bf Proposition}
\newtheorem{corollary}[theorem]{\bf Corollary}
\newcommand{\ro}[1]{\expandafter{\romannumeral#1}}
\newcommand{\Ro}[1]{\uppercase\expandafter{\romannumeral#1}}
\title{Spectrum of the Laplacian\\ on a covering graph with pendant edges I:\\
	The one-dimensional lattice and beyond}
\author{
Akito Suzuki\thanks{  e-mail: akito@shinshu-u.ac.jp}\\
        Department of Mathematics,\\
        Faculty of Engineering,
        Shinshu University,\\
        Wakasato, Nagano 380, Japan
}
\begin{document}
\maketitle
\vspace{-18mm}
\tableofcontents
\vspace{10mm}
\begin{abstract}
In this paper, we examine covering graphs
that are obtained from the $d$-dimensional integer lattice by adding pendant edges.
In the case of $d=1$, 
we show that the Laplacian on the graph has a spectral gap
and establish a necessary and sufficient condition under which the Laplacian has no eigenvalues.
In the case of $d=2$,
we show that there exists an arrangement of the pendant edges such that 
the Laplacian has no spectral gap.
\end{abstract}
\section{Introduction} 

\subsection{Definitions and examples}

Let $G=(V(G), E(G))$ be an undirected graph,
where $V(G)$ and $E(G)$ are the sets of vertices and edges, respectively. 
We denote an edge that connects vertices $x$ and $y$ as an unordered pair $\{x,y\}$ 
and write $x \sim y$ if $\{x,y\} \in E(G)$.
For a vertex $x \in V(G)$, 
we denote 
the number of edges incident to $x$ as ${\rm deg}x$, i.e., ${\rm deg} x = \# \{ y \mid y \sim x \}$.
A vertex $x$ is called an {\it end vertex} if ${\rm deg}x = 1$, 
and an edge that contains an end vertex is called a {\it pendant edge}. 
In this paper, we first consider a graph satisfying the following conditions:
\begin{itemize}
\item[(G$_1$)] $\{ (n,0) \in \mathbb{Z} \times \{ 0,1 \} \mid n \in \mathbb{Z} \} \subset V(G) 
	\subset \mathbb{Z} \times \{ 0,1 \}$ and a vertex $(n,0) \in V(G)$ is connected to $(n \pm 1,0) \in V(G)$.
\item[(G$_2$)] 
	If $(n,1) \in V(G)$, then $(n,1)$ is an end vertex connected to $(n,0)$ and ${\rm deg}(n,0)=3$.
	If $(n,1) \not\in V(G)$, then ${\rm deg}(n,0)=2$.
\end{itemize}
Such a graph is obtained by adding pendant edges 
to the one-dimensional lattice $\mathbb{Z}$,
in which case a vertex $n \in \mathbb{Z}$ and an end vertex connected to $n \in \mathbb{Z}$ 
are identified with $(n,0) \in V(G)$ 
and $(n,1) \in V(G)$, respectively.
Let
\[ \ell^2(V(G)) = \left\{ \psi:V(G) \to \mathbb{C} ~\Big|~ \langle \psi | \psi \rangle < \infty \right\} \]
be the Hilbert space of square summable functions on $V(G)$ with the inner product:
\[ \langle \psi | \phi \rangle = \sum_{x \in V(G)} \overline{\psi}(x) \phi(x) {\rm deg} x,
	\quad \psi, \phi \in \ell^2(V(G)). \]
The transition operator $L_G$ is a bounded self-adjoint operator on $\ell^2(V(G))$
defined as
\[ (L_G \psi)(x) = \frac{1}{{\rm deg} x} \sum_{y \sim x} \psi(y), \quad \psi \in \ell^2(V(G)). \]
We are interested in the spectrum $\sigma(-\Delta_G)$ of the (negative) Laplacian $- \Delta_G$ 
defined by $- \Delta_G = 1- L_G$.
From the spectral mapping theorem, we know that $\sigma(-\Delta_G) = \{ 1-l \mid l \in \sigma(L_G) \}$
and that $\sigma_{\rm  p}(-\Delta_G)= \{ 1-l \mid l \in \sigma_{\rm p}(L_G) \}$,
where $\sigma_{\rm p}(L)$ denotes the set of eigenvalues of the operator $L$.
Hence, we know that $L_G$ has a spectral gap (resp. an eigenvalue) if and only if
$-\Delta_G$ has a spectral gap (resp. an eigenvalue).
In this paper, we consider the spectrum $\sigma(L_G)$ of the transition operator $L_G$.

A trivial example of a graph $G$ with properties (G$_1$) and (G$_2$) is the one dimensional lattice $\mathbb{Z}$,
{\it i.e.} $G$ has no pendant edges.
It is well known and easy to show that $\sigma(L_\mathbb{Z}) = [-1,1]$
and that $L_\mathbb{Z}$ has no eigenvalues.

Adding a single pendant edge to each vertex of the lattice $\mathbb{Z}$,
we obtain a graph $G_{1,1}$ such that $V(G_{1,1})= \mathbb{Z} \times \{0,1\}$
(see Figure \ref{fig00},
where we denote a vertex identified with an integer and 
an end vertex by $\bigcirc$ and $\triangle$, respectively).
$G_{1,1}$ satisfies properties (G$_1$) and (G$_2$).
Simple calculations
establish that
\begin{align*} 
\sigma(L_{G_{1,1}}) 
	& = \left[-1, -\frac{1}{3}\right] \cup \left[\frac{1}{3},1\right]
\end{align*}
and that $L_G$ has no eigenvalues. 
In particular, $L_{G_{1,1}}$ has a spectral gap.
 
\begin{center}
\unitlength 0.1in
\begin{picture}( 33.5400, 10.6000)(  2.8000,-15.6000)
%
\special{pn 20}%
\special{sh 0}%
\special{ar 956 1462 124 98  0.0000000 6.2831853}%
%
\special{pn 20}%
\special{pa 960 720}%
\special{pa 960 1364}%
\special{fp}%
%
\special{pn 20}%
\special{pa 880 1430}%
\special{pa 1472 1430}%
\special{fp}%
%
\special{pn 8}%
\special{pa 952 1444}%
\special{pa 952 1444}%
\special{fp}%
%
\special{pn 20}%
\special{ar 956 1462 124 98  6.1795578 6.2831853}%
\special{ar 956 1462 124 98  0.0000000 4.0499627}%
%
\special{pn 20}%
\special{pa 960 720}%
\special{pa 960 1364}%
\special{fp}%
%
\special{pn 20}%
\special{sh 0}%
\special{ar 956 1462 124 98  0.0000000 6.2831853}%
%
\special{pn 20}%
\special{pa 960 720}%
\special{pa 960 1364}%
\special{fp}%
%
\special{pn 20}%
\special{pa 2208 704}%
\special{pa 2208 1348}%
\special{fp}%
%
\special{pn 20}%
\special{pa 2208 704}%
\special{pa 2208 1348}%
\special{fp}%
%
\special{pn 20}%
\special{pa 2208 704}%
\special{pa 2208 1348}%
\special{fp}%
%
\special{pn 20}%
\special{sh 0}%
\special{ar 2828 1440 124 100  0.0000000 6.2831853}%
%
\special{pn 20}%
\special{pa 2832 696}%
\special{pa 2832 1342}%
\special{fp}%
%
\special{pn 20}%
\special{pa 2826 1430}%
\special{pa 3364 1430}%
\special{fp}%
%
\special{pn 8}%
\special{pa 2826 1422}%
\special{pa 2826 1422}%
\special{fp}%
%
\special{pn 20}%
\special{ar 2828 1440 124 100  6.1821967 6.2831853}%
\special{ar 2828 1440 124 100  0.0000000 4.0450828}%
%
\special{pn 20}%
\special{pa 2832 696}%
\special{pa 2832 1342}%
\special{fp}%
%
\special{pn 20}%
\special{sh 0}%
\special{ar 2828 1440 124 100  0.0000000 6.2831853}%
%
\special{pn 20}%
\special{pa 2832 696}%
\special{pa 2832 1342}%
\special{fp}%
%
\special{pn 20}%
\special{pa 520 1440}%
\special{pa 822 1440}%
\special{fp}%
%
\special{pn 8}%
\special{pa 3370 1420}%
\special{pa 3634 1420}%
\special{dt 0.045}%
%
\special{pn 8}%
\special{pa 280 1440}%
\special{pa 570 1440}%
\special{dt 0.045}%
%
\special{pn 20}%
\special{pa 956 510}%
\special{pa 1066 710}%
\special{fp}%
\special{pa 1064 714}%
\special{pa 840 714}%
\special{fp}%
\special{pa 840 714}%
\special{pa 956 514}%
\special{fp}%
%
\special{pn 20}%
\special{pa 2838 500}%
\special{pa 2948 700}%
\special{fp}%
\special{pa 2944 704}%
\special{pa 2722 704}%
\special{fp}%
\special{pa 2722 704}%
\special{pa 2838 504}%
\special{fp}%
%
\special{pn 20}%
\special{pa 956 510}%
\special{pa 1066 710}%
\special{fp}%
\special{pa 1064 714}%
\special{pa 840 714}%
\special{fp}%
\special{pa 840 714}%
\special{pa 956 514}%
\special{fp}%
%
\special{pn 20}%
\special{pa 2204 510}%
\special{pa 2314 710}%
\special{fp}%
\special{pa 2312 714}%
\special{pa 2088 714}%
\special{fp}%
\special{pa 2088 714}%
\special{pa 2204 514}%
\special{fp}%
%
\special{pn 20}%
\special{pa 1568 704}%
\special{pa 1568 1348}%
\special{fp}%
%
\special{pn 20}%
\special{pa 1580 1440}%
\special{pa 2060 1440}%
\special{fp}%
%
\special{pn 8}%
\special{pa 1560 1428}%
\special{pa 1560 1428}%
\special{fp}%
%
\special{pn 20}%
\special{pa 1568 704}%
\special{pa 1568 1348}%
\special{fp}%
%
\special{pn 20}%
\special{pa 1568 704}%
\special{pa 1568 1348}%
\special{fp}%
%
\special{pn 20}%
\special{pa 1564 510}%
\special{pa 1674 710}%
\special{fp}%
\special{pa 1672 714}%
\special{pa 1448 714}%
\special{fp}%
\special{pa 1448 714}%
\special{pa 1564 514}%
\special{fp}%
%
\special{pn 20}%
\special{sh 0}%
\special{ar 1570 1450 124 98  0.0000000 6.2831853}%
%
\special{pn 20}%
\special{sh 0}%
\special{ar 2200 1450 124 98  0.0000000 6.2831853}%
%
\special{pn 20}%
\special{pa 2340 1430}%
\special{pa 2710 1430}%
\special{fp}%
\end{picture}%
\figcaption{Graph $G_{1,1}$.} \label{fig00}
\end{center}
\noindent
In what follows, we consider a graph $G_{2,1}$ 
whose pendant edges are connected to alternate vertices of $\mathbb{Z}$:
$V(G_{2,1}) = \{ (n,0) \mid n \in \mathbb{Z} \} \cup \{(2n+1,1) \mid n \in \mathbb{Z} \}$  (see Figure \ref{fig01}).
$G_{2,1}$ satisfies (G$_1$) and (G$_2$). 
Simple calculations show that
\[ \sigma(L_{G_{2,1}}) = \left[-1, -\frac{1}{\sqrt 3}\right] \cup \{ 0 \} \cup \left[\frac{1}{\sqrt 3}, 1\right] 
	\]
and that $0$ is an eigenvalue of $L_{G_{2,1}}$.
Hence, $L_{G_{2,1}}$ has a spectral gap and a zero eigenvalue.
\begin{center}
\unitlength 0.1in
\begin{picture}( 44.2000, 10.4000)(  2.9000,-15.5000)
%
\special{pn 13}%
\special{sh 0}%
\special{ar 890 1454 112 98  0.0000000 6.2831853}%
%
\special{pn 20}%
\special{pa 894 718}%
\special{pa 894 1356}%
\special{fp}%
%
\special{pn 20}%
\special{sh 0}%
\special{ar 1536 1448 112 98  0.0000000 6.2831853}%
%
\special{pn 20}%
\special{pa 888 1448}%
\special{pa 1422 1448}%
\special{fp}%
%
\special{pn 8}%
\special{pa 888 1434}%
\special{pa 888 1434}%
\special{fp}%
%
\special{pn 13}%
\special{ar 890 1454 112 98  6.1801541 6.2831853}%
\special{ar 890 1454 112 98  0.0000000 4.0459237}%
%
\special{pn 20}%
\special{pa 894 718}%
\special{pa 894 1356}%
\special{fp}%
%
\special{pn 20}%
\special{sh 0}%
\special{ar 1536 1448 112 98  0.0000000 6.2831853}%
%
\special{pn 13}%
\special{sh 0}%
\special{ar 890 1454 112 98  0.0000000 6.2831853}%
%
\special{pn 20}%
\special{pa 894 718}%
\special{pa 894 1356}%
\special{fp}%
%
\special{pn 20}%
\special{sh 0}%
\special{ar 1536 1448 112 98  0.0000000 6.2831853}%
%
\special{pn 20}%
\special{sh 0}%
\special{ar 2148 1448 110 98  0.0000000 6.2831853}%
%
\special{pn 20}%
\special{pa 2152 712}%
\special{pa 2152 1350}%
\special{fp}%
%
\special{pn 20}%
\special{ar 2792 1440 112 98  0.0000000 6.2831853}%
%
\special{pn 20}%
\special{pa 2146 1440}%
\special{pa 2678 1440}%
\special{fp}%
%
\special{pn 8}%
\special{pa 2146 1428}%
\special{pa 2146 1428}%
\special{fp}%
%
\special{pn 20}%
\special{ar 2148 1448 110 98  6.1814666 6.2831853}%
\special{ar 2148 1448 110 98  0.0000000 4.0444599}%
%
\special{pn 20}%
\special{pa 2152 712}%
\special{pa 2152 1350}%
\special{fp}%
%
\special{pn 20}%
\special{ar 2792 1440 112 98  0.0000000 6.2831853}%
%
\special{pn 20}%
\special{sh 0}%
\special{ar 2148 1448 110 98  0.0000000 6.2831853}%
%
\special{pn 20}%
\special{pa 2152 712}%
\special{pa 2152 1350}%
\special{fp}%
%
\special{pn 20}%
\special{ar 2792 1440 112 98  0.0000000 6.2831853}%
%
\special{pn 20}%
\special{sh 0}%
\special{ar 3430 1440 112 98  0.0000000 6.2831853}%
%
\special{pn 20}%
\special{pa 3434 704}%
\special{pa 3434 1344}%
\special{fp}%
%
\special{pn 20}%
\special{ar 4076 1434 112 98  0.0000000 6.2831853}%
%
\special{pn 20}%
\special{pa 3428 1434}%
\special{pa 3960 1434}%
\special{fp}%
%
\special{pn 8}%
\special{pa 3428 1422}%
\special{pa 3428 1422}%
\special{fp}%
%
\special{pn 20}%
\special{ar 3430 1440 112 98  6.1814666 6.2831853}%
\special{ar 3430 1440 112 98  0.0000000 4.0390198}%
%
\special{pn 20}%
\special{pa 3434 704}%
\special{pa 3434 1344}%
\special{fp}%
%
\special{pn 20}%
\special{ar 4076 1434 112 98  0.0000000 6.2831853}%
%
\special{pn 20}%
\special{sh 0}%
\special{ar 3430 1440 112 98  0.0000000 6.2831853}%
%
\special{pn 20}%
\special{pa 3434 704}%
\special{pa 3434 1344}%
\special{fp}%
%
\special{pn 20}%
\special{ar 4076 1434 112 98  0.0000000 6.2831853}%
%
\special{pn 20}%
\special{pa 1656 1440}%
\special{pa 2042 1440}%
\special{fp}%
%
\special{pn 20}%
\special{pa 2898 1432}%
\special{pa 3316 1432}%
\special{fp}%
%
\special{pn 20}%
\special{pa 4196 1434}%
\special{pa 4462 1434}%
\special{fp}%
%
\special{pn 20}%
\special{pa 506 1454}%
\special{pa 780 1454}%
\special{fp}%
%
\special{pn 8}%
\special{pa 4450 1434}%
\special{pa 4710 1434}%
\special{dt 0.045}%
%
\special{pn 8}%
\special{pa 290 1454}%
\special{pa 550 1454}%
\special{dt 0.045}%
%
\special{pn 20}%
\special{pa 892 510}%
\special{pa 990 708}%
\special{fp}%
\special{pa 988 712}%
\special{pa 786 712}%
\special{fp}%
\special{pa 786 712}%
\special{pa 892 514}%
\special{fp}%
%
\special{pn 20}%
\special{pa 3438 510}%
\special{pa 3538 708}%
\special{fp}%
\special{pa 3536 712}%
\special{pa 3334 712}%
\special{fp}%
\special{pa 3334 712}%
\special{pa 3438 514}%
\special{fp}%
%
\special{pn 20}%
\special{pa 892 510}%
\special{pa 990 708}%
\special{fp}%
\special{pa 988 712}%
\special{pa 786 712}%
\special{fp}%
\special{pa 786 712}%
\special{pa 892 514}%
\special{fp}%
%
\special{pn 20}%
\special{pa 2148 520}%
\special{pa 2248 718}%
\special{fp}%
\special{pa 2244 722}%
\special{pa 2044 722}%
\special{fp}%
\special{pa 2044 722}%
\special{pa 2148 524}%
\special{fp}%
\end{picture}%
\figcaption{Graph $G_{2,1}$.} \label{fig01}
\end{center}

We note that graphs $G$ such as $G_{1,0} \equiv \mathbb{Z}$, $G_{1,1}$ and $G_{2,1}$ have the following periodicity:
\begin{itemize}
\item[(G$_3$)] There exists a number $r \in \mathbb{N}$ such that
\begin{equation} 
\label{02/24/23:02}
\mbox{$(n+r,1) \in V(G)$ if and only if $(n,1) \in V(G)$.}
\end{equation}
\end{itemize}
We use $\mathscr{G}$ to denote the set of graphs
that satisfy conditions (G$_1$) - (G$_3$).
Each graph $G \in \mathscr{G}$ is a covering graph.
The spectral properties of the Laplacian on general covering graphs
are reported in \cite{HiNo}.
In this paper, we focus on the covering graphs $G \in \mathscr{G}$ and investigate the spectrum of $L_G$ in detail.
Give an $n \in \mathbb{N}$ and
$r$ satisfying \eqref{02/24/23:02}, 
we define the following subset of $V(G)$, which we call a {\it cell} of $G$:
\[ V_{n,r}(G) = \{ (n + m,s) \in V(G) \mid m = 0, 1, \cdots, r-1, s=0,1 \}. 
	\] 
We note that the following relationships hold:
\[ V(G) = \bigcup_{n \in \mathbb{Z}} V_{nr,r}(G), \quad V_{nr,r}(G) \cap V_{mr,r}(G) = \emptyset \quad (n \not=m). \]
We use $s(r)$ to denote the number of end vertices in the cell $V_{1,r}(G)$ 
:
\begin{align*} 
& s(r) = \# \{(n,1) \in V(G) \mid n=1, 2, \cdots, r \}.
\end{align*}
For instance, we have $r(G_{1,0}) = 1$, $s(G_{1,0}) = 0$; $r(G_{1,1}) = 1$, $s(G_{1,1}) = 1$ 
and $r(G_{2,1}) = 2$, $s(G_{2,1}) = 1$.

We say that $G$ and $H$ are isomorphic and write this as $G \simeq H$ 
if there exists a bijection $\phi:V(G) \longrightarrow V(H)$ 
such that $\phi(x) \sim \phi(y)$ if and only if $x \sim y$. 
For any graphs $G$, $H \in \mathscr{G}$,
$G \simeq H$ holds if and only if there exists a number $r$ satisfying \eqref{02/24/23:02} 
for both $G$ and $H$ such that $V_{1,r}(G) = V_{n,r}(H)$
for some $n \in \mathbb{Z}$. 
If $G \simeq H$, then $\sigma(L_G) = \sigma(L_H)$.

A graph $G  \in \mathscr{G}$ is uniquely determined (up to isomorphism) by 
the number $s(r)$ of $(n,1) \in V_{1,r}(G)$ and the arrangement of $(n,1) \in V_{1,r}(G)$.
In particular, 
the graph $G \in \mathscr{G}$ with $s(1)=1$ has a single end vertex $(1,1) \in V(G)$ in the cell $V_{1,1}(G)$,
and hence $G = G_{1,1}$.
A graph $G \in \mathscr{G}$ with $s(2)=1$ has a single end vertex $(n,1) \in V(G)$ 
in the cell $V_{1,2}(G)$; hence, $G$ is isomorphic to $G_{2,1}$.
Similarly, for any $r \in \mathbb{N}$ 
we can define a graph $G_{r,1} \in \mathscr{G}$ that has a single end vertex $(n,1) \in V(G)$ in the cell $V_{1,r}(G)$. 
Clearly, $G_{r,1}$ is uniquely determined up to isomorphism.
For any $r \in \mathbb{N}$, we can define a graph $G_{r,{r-1}}$ such that $s(r) = r-1$, 
{\it i.e.} the number of end vertices $(n,1) \in V(G)$ in the cell $V_{1,r}(G)$ is equal to $r-1$.

On the other hand, in the case where $s(r) \not= 1$ or $s(r) \not= r-1$, 
there are a variety of graphs that are not isomorphic to each other.
In general, for two graphs $G$ and $\tilde{G}$ which are not isomorphic to each other, 
we cannot expect that $\sigma(L_{G}) = \sigma(L_{\tilde{G}})$.
For instance, assuming $r=4$ and $s(r)=2$, we have two different graphs as follows:
1) if the end vertices $(n,1) \in V_{1,4}(G_{4,2})$ are connected to alternate vertices, 
then the graph $G_{4,2} \in \mathscr{G}$ is isomorphic to $G_{2,1}$,
and
2) the graph $\tilde{G}_{4,2} \in \mathscr{G}$ with end vertices $(1,1)$ and $(2,1) \in V_{1,4}(\tilde{G}_{4,2})$ 
satisfies $s(4)=2$ but is not isomorphic to $G_{2,1}$. 
See Figures \ref{figG42} and \ref{figG42tilde}.
\begin{center}
\unitlength 0.1in
\begin{picture}( 33.5400, 12.8000)(  2.8000,-16.7000)
%
\special{pn 20}%
\special{sh 0}%
\special{ar 956 1462 124 98  0.0000000 6.2831853}%
%
\special{pn 20}%
\special{pa 960 720}%
\special{pa 960 1364}%
\special{fp}%
%
\special{pn 20}%
\special{pa 880 1430}%
\special{pa 1472 1430}%
\special{fp}%
%
\special{pn 8}%
\special{pa 952 1444}%
\special{pa 952 1444}%
\special{fp}%
%
\special{pn 20}%
\special{ar 956 1462 124 98  6.1795578 6.2831853}%
\special{ar 956 1462 124 98  0.0000000 4.0499627}%
%
\special{pn 20}%
\special{pa 960 720}%
\special{pa 960 1364}%
\special{fp}%
%
\special{pn 20}%
\special{sh 0}%
\special{ar 956 1462 124 98  0.0000000 6.2831853}%
%
\special{pn 20}%
\special{pa 960 720}%
\special{pa 960 1364}%
\special{fp}%
%
\special{pn 20}%
\special{pa 2208 704}%
\special{pa 2208 1348}%
\special{fp}%
%
\special{pn 20}%
\special{pa 2208 704}%
\special{pa 2208 1348}%
\special{fp}%
%
\special{pn 20}%
\special{pa 2208 704}%
\special{pa 2208 1348}%
\special{fp}%
%
\special{pn 20}%
\special{sh 0}%
\special{ar 2828 1440 124 100  0.0000000 6.2831853}%
%
\special{pn 20}%
\special{pa 2826 1430}%
\special{pa 3364 1430}%
\special{fp}%
%
\special{pn 8}%
\special{pa 2826 1422}%
\special{pa 2826 1422}%
\special{fp}%
%
\special{pn 20}%
\special{ar 2828 1440 124 100  6.1821967 6.2831853}%
\special{ar 2828 1440 124 100  0.0000000 4.0450828}%
%
\special{pn 20}%
\special{sh 0}%
\special{ar 2828 1440 124 100  0.0000000 6.2831853}%
%
\special{pn 20}%
\special{pa 520 1440}%
\special{pa 822 1440}%
\special{fp}%
%
\special{pn 8}%
\special{pa 3370 1420}%
\special{pa 3634 1420}%
\special{dt 0.045}%
%
\special{pn 8}%
\special{pa 280 1440}%
\special{pa 570 1440}%
\special{dt 0.045}%
%
\special{pn 20}%
\special{pa 956 510}%
\special{pa 1066 710}%
\special{fp}%
\special{pa 1064 714}%
\special{pa 840 714}%
\special{fp}%
\special{pa 840 714}%
\special{pa 956 514}%
\special{fp}%
%
\special{pn 20}%
\special{pa 956 510}%
\special{pa 1066 710}%
\special{fp}%
\special{pa 1064 714}%
\special{pa 840 714}%
\special{fp}%
\special{pa 840 714}%
\special{pa 956 514}%
\special{fp}%
%
\special{pn 20}%
\special{pa 2204 510}%
\special{pa 2314 710}%
\special{fp}%
\special{pa 2312 714}%
\special{pa 2088 714}%
\special{fp}%
\special{pa 2088 714}%
\special{pa 2204 514}%
\special{fp}%
%
\special{pn 20}%
\special{pa 1580 1440}%
\special{pa 2060 1440}%
\special{fp}%
%
\special{pn 8}%
\special{pa 1560 1428}%
\special{pa 1560 1428}%
\special{fp}%
%
\special{pn 20}%
\special{sh 0}%
\special{ar 1570 1450 124 98  0.0000000 6.2831853}%
%
\special{pn 20}%
\special{sh 0}%
\special{ar 2200 1450 124 98  0.0000000 6.2831853}%
%
\special{pn 20}%
\special{pa 2340 1430}%
\special{pa 2710 1430}%
\special{fp}%
%
\special{pn 8}%
\special{pa 570 390}%
\special{pa 3190 390}%
\special{pa 3190 1670}%
\special{pa 570 1670}%
\special{pa 570 390}%
\special{fp}%
\put(36.0000,-10.0000){\makebox(0,0){$V_{1,4}(G_{4,2})$}}%
\end{picture}%
\figcaption{Graph $G_{4,2} \in \mathscr{G}$ has end vertices $(1,1)$ 
and $(3,1) \in V_{1,4}(G_{4,2})$.} \label{figG42}
\end{center}
\begin{center}
\unitlength 0.1in
\begin{picture}( 33.5400, 12.8000)(  2.8000,-16.7000)
%
\special{pn 20}%
\special{sh 0}%
\special{ar 956 1462 124 98  0.0000000 6.2831853}%
%
\special{pn 20}%
\special{pa 960 720}%
\special{pa 960 1364}%
\special{fp}%
%
\special{pn 20}%
\special{pa 880 1430}%
\special{pa 1472 1430}%
\special{fp}%
%
\special{pn 8}%
\special{pa 952 1444}%
\special{pa 952 1444}%
\special{fp}%
%
\special{pn 20}%
\special{ar 956 1462 124 98  6.1795578 6.2831853}%
\special{ar 956 1462 124 98  0.0000000 4.0499627}%
%
\special{pn 20}%
\special{pa 960 720}%
\special{pa 960 1364}%
\special{fp}%
%
\special{pn 20}%
\special{sh 0}%
\special{ar 956 1462 124 98  0.0000000 6.2831853}%
%
\special{pn 20}%
\special{pa 960 720}%
\special{pa 960 1364}%
\special{fp}%
%
\special{pn 20}%
\special{sh 0}%
\special{ar 2828 1440 124 100  0.0000000 6.2831853}%
%
\special{pn 20}%
\special{pa 2832 696}%
\special{pa 2832 1342}%
\special{fp}%
%
\special{pn 20}%
\special{pa 2826 1430}%
\special{pa 3364 1430}%
\special{fp}%
%
\special{pn 8}%
\special{pa 2826 1422}%
\special{pa 2826 1422}%
\special{fp}%
%
\special{pn 20}%
\special{ar 2828 1440 124 100  6.1821967 6.2831853}%
\special{ar 2828 1440 124 100  0.0000000 4.0450828}%
%
\special{pn 20}%
\special{pa 2832 696}%
\special{pa 2832 1342}%
\special{fp}%
%
\special{pn 20}%
\special{sh 0}%
\special{ar 2828 1440 124 100  0.0000000 6.2831853}%
%
\special{pn 20}%
\special{pa 2832 696}%
\special{pa 2832 1342}%
\special{fp}%
%
\special{pn 20}%
\special{pa 520 1440}%
\special{pa 822 1440}%
\special{fp}%
%
\special{pn 8}%
\special{pa 3370 1420}%
\special{pa 3634 1420}%
\special{dt 0.045}%
%
\special{pn 8}%
\special{pa 280 1440}%
\special{pa 570 1440}%
\special{dt 0.045}%
%
\special{pn 20}%
\special{pa 956 510}%
\special{pa 1066 710}%
\special{fp}%
\special{pa 1064 714}%
\special{pa 840 714}%
\special{fp}%
\special{pa 840 714}%
\special{pa 956 514}%
\special{fp}%
%
\special{pn 20}%
\special{pa 2838 500}%
\special{pa 2948 700}%
\special{fp}%
\special{pa 2944 704}%
\special{pa 2722 704}%
\special{fp}%
\special{pa 2722 704}%
\special{pa 2838 504}%
\special{fp}%
%
\special{pn 20}%
\special{pa 956 510}%
\special{pa 1066 710}%
\special{fp}%
\special{pa 1064 714}%
\special{pa 840 714}%
\special{fp}%
\special{pa 840 714}%
\special{pa 956 514}%
\special{fp}%
%
\special{pn 20}%
\special{pa 1580 1440}%
\special{pa 2060 1440}%
\special{fp}%
%
\special{pn 8}%
\special{pa 1560 1428}%
\special{pa 1560 1428}%
\special{fp}%
%
\special{pn 20}%
\special{sh 0}%
\special{ar 1570 1450 124 98  0.0000000 6.2831853}%
%
\special{pn 20}%
\special{sh 0}%
\special{ar 2200 1450 124 98  0.0000000 6.2831853}%
%
\special{pn 20}%
\special{pa 2340 1430}%
\special{pa 2710 1430}%
\special{fp}%
%
\special{pn 8}%
\special{pa 3190 1670}%
\special{pa 570 1670}%
\special{pa 570 390}%
\special{pa 3190 390}%
\special{pa 3190 1670}%
\special{fp}%
\put(36.0000,-10.0000){\makebox(0,0){$V_{1,4}(\tilde{G}_{4,2})$}}%
\end{picture}%
\figcaption{Graph $\tilde{G}_{4,2} \in \mathscr{G}$ has end vertices $(1,1)$ 
and $(2,1) \in V_{1,4}(\tilde{G}_{4,2})$.} \label{figG42tilde}
\end{center}
As we shall see later, the spectra of $L_{G_{2,1}}$ and $L_{G_{4,2}}$ are different.
In particular, $L_{G_{2,1}}$ has zero as an eigenvalue 
whereas $L_{G_{4,2}}$ has no eigenvalue.

\subsection{Results}
Let $\mathscr{G}_\times = \mathscr{G} \setminus \{ \mathbb{Z}\}$.
We first identify spectral properties of $L_G$ that hold for all graphs 
$G \in \mathscr{G}_\times$.
\begin{theorem}
\label{24/02/29/17:13}
{\rm
Let $G \in \mathscr{G}_\times$. Then the following hold:
\begin{itemize}
\item[(1)] $\sigma(L_G) \subset [-1,1]$ is symmetric with respect to zero.
\item[(2)] There exists a constant $\epsilon > 0$ such that 
\[ (-\epsilon, 0) \cup (0, \epsilon) \subset \rho(L_G), 
	 \]
where $\rho(L_G)$ denotes the resolvent set of $L_G$.
\item[(3)] There is no eigenvalue in $\sigma(L_G) \setminus\{0\}$.
\end{itemize}
}
\end{theorem}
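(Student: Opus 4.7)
The plan is to prove the three parts together by reducing the spectral problem for $L_G$ to a periodic Jacobi operator on $\mathbb{Z}$. Part (1) can be disposed of quickly. The containment $\sigma(L_G) \subset [-1,1]$ is standard: a Cauchy--Schwarz estimate on the degree-weighted inner product gives $\|L_G\| \leq 1$. For the symmetry about zero, I would use the bipartite structure of $G$: colouring each vertex $(n,s) \in V(G)$ by $c(n,s) := (n+s)\bmod 2$ makes $G$ bipartite, and a direct computation shows that the unitary $(U\psi)(x) := (-1)^{c(x)}\psi(x)$ satisfies $UL_G U^{-1} = -L_G$, whence $\sigma(L_G) = -\sigma(L_G)$.

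For parts (2) and (3), fix $\lambda \neq 0$ and examine $L_G\psi = \lambda\psi$. The equation at a pendant vertex $(n,1)$ gives $\psi(n,1) = \psi(n,0)/\lambda$; substituting this into the equation at $(n,0)$ yields, uniformly in $n$, the three-term recurrence
\[
\psi(n-1,0)+\psi(n+1,0) = a_n(\lambda)\,\psi(n,0), \qquad a_n(\lambda) := \begin{cases} 2\lambda & \text{if } (n,1) \notin V(G), \\ 3\lambda - 1/\lambda & \text{if } (n,1) \in V(G). \end{cases}
\]
By (G$_3$) the sequence $\{a_n(\lambda)\}_n$ is $r$-periodic, so it defines a bounded periodic Jacobi operator $J(\lambda)$ on $\ell^2(\mathbb{Z})$. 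Running the same reduction for the inhomogeneous equation $(L_G-\lambda)\psi = f$ and transferring Weyl sequences in both directions then gives the key equivalence, for every $\lambda \neq 0$: $\lambda \in \sigma(L_G) \iff 0 \in \sigma(J(\lambda))$, and similarly for point spectrum.

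Given this reduction, (3) follows at once from the classical fact that a bounded periodic Jacobi operator on $\ell^2(\mathbb{Z})$ has purely absolutely continuous spectrum, hence no eigenvalues. For (2), I would invoke the Floquet characterisation: with transfer matrices $T_n(\lambda) = \bigl(\begin{smallmatrix} a_n(\lambda) & -1 \\ 1 & 0\end{smallmatrix}\bigr)$ and monodromy $P(\lambda) := T_r(\lambda)\cdots T_1(\lambda)$ (of determinant $1$), we have $0 \in \sigma(J(\lambda))$ if and only if $|\operatorname{tr} P(\lambda)| \leq 2$. The task then reduces to verifying $|\operatorname{tr} P(\lambda)| > 2$ on some punctured neighbourhood of $\lambda = 0$, which would deliver the gap.

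The main obstacle lies in this last step. A naive order-of-pole count is misleading, since cancellations can strip $\operatorname{tr} P(\lambda)$ of its $1/\lambda$ singularity: for $G_{2,1}$ one finds $\operatorname{tr} P(\lambda) = 6\lambda^2 - 4$, so the divergence disappears, yet $|{-4}| > 2$ still forces a gap at $\lambda = 0$. To treat arbitrary pendant arrangements uniformly, I would write $T_p(\lambda) = \lambda^{-1} E + R(\lambda)$ with $E := \operatorname{diag}(-1,0)$ of rank one and $R(\lambda)$ bounded, expand $P(\lambda)$ as a sum indexed by the choice of ``$\lambda^{-1}E$'' versus ``$R(\lambda)$'' at each pendant factor, and extract the leading Laurent coefficient of $\operatorname{tr} P(\lambda)$ in terms of the combinatorics of the end vertices in $V_{1,r}(G)$, checking in every case that the resulting quantity lies strictly outside $[-2,2]$ (or that the trace in fact diverges).
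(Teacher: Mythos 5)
Your reduction is a genuinely different route from the paper's. Where the paper diagonalizes the translation $T_r$ by a Fourier transform, writes $L_G$ as a direct integral of $(r+s(r))\times(r+s(r))$ fibre matrices $L_k$, and analyses the characteristic polynomial $\sum_m A_m\lambda^{2m+\star}-2(\cos k)\lambda^{s(r)}$ through a bijection between permutations and spanning unions of cycles, you eliminate the pendant variables by a Schur complement (legitimate for $\lambda\neq 0$, since the pendant block of $L_G-\lambda$ is $-\lambda I$) and study the discriminant of an $r$-periodic Jacobi operator. Parts (1) and (3) are sound as you give them: the bipartition $c(n,s)=(n+s)\bmod 2$ is valid under (G$_1$)--(G$_2$); and for $\lambda\neq 0$ the map $\psi\mapsto u=\psi(\cdot\,,0)$ carries $\ell^2$ eigenfunctions to nonzero $\ell^2$ solutions of $J(\lambda)u=0$ (nonzero because $\psi(n,1)=\psi(n,0)/\lambda$), so (3) follows from the purely absolutely continuous spectrum of periodic Jacobi matrices --- a shorter argument than the paper's, which instead uses strict monotonicity of the band functions together with \cite[Theorem XIII.85]{RS4}. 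Your approach buys concreteness and avoids the fibre decomposition entirely; the paper's buys the full band structure (Theorem \ref{24/03/05/23:40}) as a by-product.

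The genuine gap is in part (2), and you have located it yourself: the assertion that $|\operatorname{tr}P(\lambda)|>2$ on a punctured neighbourhood of $0$ is left as a programme (``checking in every case''), and that check is exactly the substantive content of the paper's Proposition \ref{24/03/27/10:36} and Lemma \ref{24/03/29/09:59}. Carrying out your own expansion one step shows why it is not routine. With $J=\begin{pmatrix}0&-1\\1&0\end{pmatrix}$ and $E=-e_1\,{}^{\rm t}e_1$ one has $EJ^mE=(J^m)_{11}\,e_1\,{}^{\rm t}e_1$ and $(J^m)_{11}=\cos(m\pi/2)$, so the coefficient of $\lambda^{-s(r)}$ in $\operatorname{tr}P(\lambda)$ is, up to sign, $\prod_k\cos(m_k\pi/2)$, where $m_k$ are the cyclic gaps between consecutive pendant sites. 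This product vanishes precisely when some gap is odd, i.e.\ precisely when $q(r)\geq 1$ --- precisely the cases in which, by Theorem \ref{24/03/01/16:03}, zero \emph{is} an eigenvalue. So in every such case the naive pole disappears and you must control the Laurent coefficient at order $\lambda^{q(r)-s(r)}$, where two things remain unproven in your outline: (i) that coefficient is a large signed sum of products and could a priori cancel to zero --- the paper excludes this by showing that every contribution to a fixed power of $\lambda$ carries the same sign $(-1)^{(r+s(r)+\star-2m)/2}$, so $A_m$ is up to sign the sum of nonnegative weights $B_m+C_m$, and by the minimality statement $q(r)=\inf\{2m+\star\mid A_m\neq 0\}$ of Proposition \ref{24/03/27/10:36}, whose proof (Lemma \ref{24/03/28/14:03}) is a nontrivial induction over spanning subgraphs; and (ii) in the borderline case $q(r)=s(r)$ the limiting constant must have modulus \emph{strictly} greater than $2$, which the paper obtains by exhibiting two distinct minimal spanning subgraphs $K$ and $K'$ and using that the loop vertex has degree $2$, yielding $|A_{m_0}|>2$ (Lemma \ref{24/03/29/09:59}(b)). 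Your $G_{2,1}$ computation $\operatorname{tr}P(\lambda)=6\lambda^2-4$ is an instance of (ii), not a proof of it. Until the transfer-matrix analogues of (i) and (ii) are established, part (2) is not proved.
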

We will prove Theorem \ref{24/02/29/17:13} in Section \ref{24/04/14/22:54};
here we will make two comments about the theorem:
(i) The symmetry of $\sigma(L_G)$ with respect to zero in Theorem \ref{24/02/29/17:13} (1) 
comes from the bipartiteness of $G \in \mathscr{G}$. 
If we remove condition (G$_2$) from the assumptions of Theorem \ref{24/02/29/17:13},
(1) will not hold true in general, as shown in Figure \ref{lattce}.
\begin{center}
\unitlength 0.1in
\begin{picture}( 33.5400,  6.6000)(  2.8000,-15.6000)
%
\special{pn 20}%
\special{sh 0}%
\special{ar 956 1462 124 98  0.0000000 6.2831853}%
%
\special{pn 20}%
\special{pa 880 1430}%
\special{pa 1472 1430}%
\special{fp}%
%
\special{pn 8}%
\special{pa 952 1444}%
\special{pa 952 1444}%
\special{fp}%
%
\special{pn 20}%
\special{ar 956 1462 124 98  6.1795578 6.2831853}%
\special{ar 956 1462 124 98  0.0000000 4.0499627}%
%
\special{pn 20}%
\special{sh 0}%
\special{ar 956 1462 124 98  0.0000000 6.2831853}%
%
\special{pn 20}%
\special{sh 0}%
\special{ar 2828 1440 124 100  0.0000000 6.2831853}%
%
\special{pn 20}%
\special{pa 2826 1430}%
\special{pa 3364 1430}%
\special{fp}%
%
\special{pn 8}%
\special{pa 2826 1422}%
\special{pa 2826 1422}%
\special{fp}%
%
\special{pn 20}%
\special{ar 2828 1440 124 100  6.1821967 6.2831853}%
\special{ar 2828 1440 124 100  0.0000000 4.0450828}%
%
\special{pn 20}%
\special{sh 0}%
\special{ar 2828 1440 124 100  0.0000000 6.2831853}%
%
\special{pn 20}%
\special{pa 520 1440}%
\special{pa 822 1440}%
\special{fp}%
%
\special{pn 8}%
\special{pa 3370 1420}%
\special{pa 3634 1420}%
\special{dt 0.045}%
%
\special{pn 8}%
\special{pa 280 1440}%
\special{pa 570 1440}%
\special{dt 0.045}%
%
\special{pn 20}%
\special{pa 1580 1440}%
\special{pa 2060 1440}%
\special{fp}%
%
\special{pn 8}%
\special{pa 1560 1428}%
\special{pa 1560 1428}%
\special{fp}%
%
\special{pn 20}%
\special{sh 0}%
\special{ar 1570 1450 124 98  0.0000000 6.2831853}%
%
\special{pn 20}%
\special{sh 0}%
\special{ar 2200 1450 124 98  0.0000000 6.2831853}%
%
\special{pn 20}%
\special{pa 2340 1430}%
\special{pa 2710 1430}%
\special{fp}%
%
\special{pn 20}%
\special{sh 0}%
\special{ar 1244 998 124 98  0.0000000 6.2831853}%
%
\special{pn 20}%
\special{ar 1244 998 124 98  6.1795578 6.2831853}%
\special{ar 1244 998 124 98  0.0000000 4.0499627}%
%
\special{pn 20}%
\special{sh 0}%
\special{ar 1244 998 124 98  0.0000000 6.2831853}%
%
\special{pn 20}%
\special{sh 0}%
\special{ar 2514 998 124 98  0.0000000 6.2831853}%
%
\special{pn 20}%
\special{ar 2514 998 124 98  6.1795578 6.2831853}%
\special{ar 2514 998 124 98  0.0000000 4.0499627}%
%
\special{pn 20}%
\special{sh 0}%
\special{ar 2514 998 124 98  0.0000000 6.2831853}%
%
\special{pn 20}%
\special{pa 1000 1360}%
\special{pa 1160 1080}%
\special{fp}%
\special{pa 1340 1080}%
\special{pa 1510 1340}%
\special{fp}%
\special{pa 2260 1350}%
\special{pa 2430 1090}%
\special{fp}%
\special{pa 2590 1080}%
\special{pa 2770 1340}%
\special{fp}%
\end{picture}%
\figcaption{
Let $G$ be a graph satisfying (G$_1$) and (G$_3$) with $r=2$ 
and suppose that $s(2) = 1$, ${\rm deg}(n,0) =3$, 
and $(1,1) \in V_{1,2}(G)$ is connected to vertices $(1,0)$ and $(2,0)$.
Then (G$_2$) does not hold. 
We observe that $1 \in \sigma(L_G)$, but $-1 \not\in \sigma(L_G)$;
hence, Theorem \ref{24/02/29/17:13} (1) does not hold. }
\label{lattce}
\end{center}
(ii) Theorem \ref{24/02/29/17:13} (2) implies that 
for all graphs $G \in \mathscr{G}_\times$, 
$L_G$ has a spectral gap around zero.
The situation is different from the two-dimensional case.
In Section \ref{twodim},
we will prove the following for graphs obtained from $\mathbb{Z}^2$ by adding pendant edges: 
\begin{itemize}
\item[(a)] there is an arrangement of pendant edges 
such that the Laplacian has no spectral gap (Theorem \ref{05/11/16:46} (a));
\item[(b)] there is an arrangement of pendant edges 
such that the Laplacian has a spectral gap (Theorem \ref{05/11/16:46} (b)).
\end{itemize}
In a companion paper \cite{TS}, 
we study graphs obtained from the hexagonal lattice by adding pendant edges
and establish results similar to that mentioned above.
We believe that the spectra of the Laplacians on such graphs are related 
to the electronic structure of hydrogenated graphenes and graphane;
see \cite{graphene} for details.

Subsequently, we introduce a decomposition of the cells of a graph $G$
in order to formulate the condition under which $L_G$ has an eigenvalue.
From Theorem \ref{24/02/29/17:13} (3), we know that
only the zero eigenvalue can exist:
if it exists, then it is an isolated eigenvalue of $L_G$. 
For any $r$ satisfying \eqref{02/24/23:02},
we define 
\[ U_{r}(G) = \{ (n,0) \in V_{1,r}(G) \mid (n,1) \not\in V_{1,r}(G) \}. \]
Without loss of generality,
we can assume that $(1,1) \in V_{1,r}(G)$
because there exists a graph $\tilde{G}$ isomorphic to $G$ such that $(1,1) \in V_{1,r}(\tilde{G})$.
Letting $\tilde{p}(r) = \#  U_r(G)$,
we can write
\[ U_r(G) = \{ (n_1,0), (n_2,0), \cdots, (n_{\tilde{p}(r)},0) \} \]
with $1 < n_1 < n_2 < \cdots < n_{\tilde{p}(r)}$.
Then there exist a $p(r)$ with $1 \leq p(r) \leq \tilde{p}(r)$ and $l_1, l_2, \cdots, l_{p(r)} \geq 1$ such that 
$U_r(G)$ can be decomposed into the connected subgraphs 
$U_i$ ($i=1,2,\cdots,p(r)$):
\begin{equation} 
\label{24/03/01/15:53}
U_r(G) = \bigcup_{i=1}^{p(r)} U_i, 
\end{equation}
where 
\begin{equation}
\label{24/03/24/15:56} 
U_i = \left\{ \left(m_{1+ \sum_{j=1}^{i-1} l_j},0 \right), \left(m_{2+ \sum_{j=1}^{i-1} l_j},0 \right),
	\cdots, \left(m_{\sum_{j=1}^{i} l_j},0 \right) \right\} 
\end{equation}
with 
\[ m_{t+1} - m_{t} = 1, \quad t= 1+ \sum_{j=1}^{i-1} l_j, 2 + \sum_{j=1}^{i-1} l_j, \cdots,l_i - 1 + \sum_{j=1}^{i-1} l_j \]
\[ m_{t+1} - m_{t} > 1, \quad t = \sum_{j=1}^{i} l_j, ~ i=1,2,\cdots,p(r)-1 \] 
and $\sum_{i= 1}^{p(r)} l_i = \tilde{p}(r)$. 
See Figure \ref{fig03} for an example. 
We use {\Large $\bullet$}, {\Large $\circ$} and {\small $\triangle$} 
to denote
the vertices $(n,0) \in U_r(G)$, $(n,0) \not\in U_r(G)$ and $(n,1) \in V_{1,r}(G)$, respectively.
Clearly, this decomposition is unique. 
\begin{center}
\unitlength 0.1in
\begin{picture}( 44.1000, 15.1700)(  2.9000,-18.7700)
%
\special{pn 13}%
\special{sh 0}%
\special{ar 888 1454 112 98  0.0000000 6.2831853}%
%
\special{pn 20}%
\special{pa 892 718}%
\special{pa 892 1356}%
\special{fp}%
%
\special{pn 20}%
\special{sh 0.600}%
\special{ar 1532 1448 112 98  0.0000000 6.2831853}%
%
\special{pn 20}%
\special{pa 886 1448}%
\special{pa 1418 1448}%
\special{fp}%
%
\special{pn 8}%
\special{pa 886 1434}%
\special{pa 886 1434}%
\special{fp}%
%
\special{pn 13}%
\special{ar 888 1454 112 98  6.1800028 6.2831853}%
\special{ar 888 1454 112 98  0.0000000 4.0583063}%
%
\special{pn 20}%
\special{pa 892 718}%
\special{pa 892 1356}%
\special{fp}%
%
\special{pn 20}%
\special{sh 0.600}%
\special{ar 1532 1448 112 98  0.0000000 6.2831853}%
%
\special{pn 13}%
\special{sh 0}%
\special{ar 888 1454 112 98  0.0000000 6.2831853}%
%
\special{pn 20}%
\special{pa 892 718}%
\special{pa 892 1356}%
\special{fp}%
%
\special{pn 20}%
\special{sh 0.600}%
\special{ar 1532 1448 112 98  0.0000000 6.2831853}%
%
\special{pn 20}%
\special{sh 0.600}%
\special{ar 2786 1440 112 98  0.0000000 6.2831853}%
%
\special{pn 20}%
\special{pa 2270 1450}%
\special{pa 2682 1450}%
\special{fp}%
%
\special{pn 8}%
\special{pa 2144 1426}%
\special{pa 2144 1426}%
\special{fp}%
%
\special{pn 20}%
\special{sh 0.600}%
\special{ar 2786 1440 112 98  0.0000000 6.2831853}%
%
\special{pn 20}%
\special{sh 0.600}%
\special{ar 2786 1440 112 98  0.0000000 6.2831853}%
%
\special{pn 20}%
\special{sh 0}%
\special{ar 3424 1440 112 98  0.0000000 6.2831853}%
%
\special{pn 20}%
\special{pa 3426 704}%
\special{pa 3426 1344}%
\special{fp}%
%
\special{pn 20}%
\special{sh 0.600}%
\special{ar 4066 1434 112 98  0.0000000 6.2831853}%
%
\special{pn 20}%
\special{pa 3420 1434}%
\special{pa 3952 1434}%
\special{fp}%
%
\special{pn 8}%
\special{pa 3420 1422}%
\special{pa 3420 1422}%
\special{fp}%
%
\special{pn 20}%
\special{ar 3424 1440 112 98  6.1827833 6.2831853}%
\special{ar 3424 1440 112 98  0.0000000 4.0335044}%
%
\special{pn 20}%
\special{pa 3426 704}%
\special{pa 3426 1344}%
\special{fp}%
%
\special{pn 20}%
\special{sh 0.600}%
\special{ar 4066 1434 112 98  0.0000000 6.2831853}%
%
\special{pn 20}%
\special{sh 0}%
\special{ar 3424 1440 112 98  0.0000000 6.2831853}%
%
\special{pn 20}%
\special{pa 3426 704}%
\special{pa 3426 1344}%
\special{fp}%
%
\special{pn 20}%
\special{sh 0.600}%
\special{ar 4066 1434 112 98  0.0000000 6.2831853}%
%
\special{pn 20}%
\special{pa 1652 1440}%
\special{pa 2038 1440}%
\special{fp}%
%
\special{pn 20}%
\special{pa 2892 1432}%
\special{pa 3310 1432}%
\special{fp}%
%
\special{pn 20}%
\special{pa 4188 1434}%
\special{pa 4454 1434}%
\special{fp}%
%
\special{pn 20}%
\special{pa 506 1454}%
\special{pa 778 1454}%
\special{fp}%
%
\special{pn 8}%
\special{pa 4440 1434}%
\special{pa 4700 1434}%
\special{dt 0.045}%
%
\special{pn 8}%
\special{pa 290 1454}%
\special{pa 550 1454}%
\special{dt 0.045}%
%
\special{pn 20}%
\special{pa 890 510}%
\special{pa 988 708}%
\special{fp}%
\special{pa 986 712}%
\special{pa 786 712}%
\special{fp}%
\special{pa 786 712}%
\special{pa 890 514}%
\special{fp}%
%
\special{pn 20}%
\special{pa 3432 510}%
\special{pa 3530 708}%
\special{fp}%
\special{pa 3528 712}%
\special{pa 3328 712}%
\special{fp}%
\special{pa 3328 712}%
\special{pa 3432 514}%
\special{fp}%
%
\special{pn 20}%
\special{pa 890 510}%
\special{pa 988 708}%
\special{fp}%
\special{pa 986 712}%
\special{pa 786 712}%
\special{fp}%
\special{pa 786 712}%
\special{pa 890 514}%
\special{fp}%
%
\special{pn 20}%
\special{sh 0.600}%
\special{ar 2150 1450 112 98  0.0000000 6.2831853}%
%
\special{pn 8}%
\special{pa 694 1564}%
\special{pa 694 1878}%
\special{fp}%
%
\special{pn 8}%
\special{pa 684 1874}%
\special{pa 4410 1874}%
\special{fp}%
\put(48.5000,-9.5000){\makebox(0,0){$V_{1,r}(G)$}}%
%
\special{pn 8}%
\special{pa 4306 1008}%
\special{pa 3854 1018}%
\special{dt 0.045}%
%
\special{pn 8}%
\special{pa 4082 1004}%
\special{pa 4082 768}%
\special{dt 0.045}%
\put(40.8100,-5.9500){\makebox(0,0){$U_2$}}%
%
\special{pn 8}%
\special{pa 2978 1020}%
\special{pa 1334 1028}%
\special{dt 0.045}%
%
\special{pn 8}%
\special{pa 2160 1016}%
\special{pa 2160 792}%
\special{dt 0.045}%
\put(21.5900,-6.2800){\makebox(0,0){$U_1$}}%
%
\special{pn 8}%
\special{pa 690 1536}%
\special{pa 690 370}%
\special{fp}%
\special{pa 690 370}%
\special{pa 4410 370}%
\special{fp}%
\special{pa 4400 360}%
\special{pa 4400 1878}%
\special{fp}%
%
\special{pn 8}%
\special{pa 1330 1030}%
\special{pa 1330 1670}%
\special{dt 0.045}%
\special{pa 3000 1680}%
\special{pa 3000 1030}%
\special{dt 0.045}%
%
\special{pn 8}%
\special{pa 1330 1680}%
\special{pa 3000 1680}%
\special{dt 0.045}%
%
\special{pn 8}%
\special{pa 3830 1010}%
\special{pa 3830 1680}%
\special{dt 0.045}%
\special{pa 4320 1680}%
\special{pa 4320 1020}%
\special{dt 0.045}%
%
\special{pn 8}%
\special{pa 3830 1680}%
\special{pa 4320 1680}%
\special{dt 0.045}%
\end{picture}%
\figcaption{Let $G \in \mathscr{G}$ satisfy \eqref{02/24/23:02} with $r=6$,
and let $(1,1), (5,1) \in V_{1,6}(G)$ be the end vertices of $G$.
Then the decomposition of $U_{6}(G)$ 
is $U_{6}(G) = U_1 \cup U_2$ with $U_1 = \{ (2,0), (3,0), (4,0)\}$ and $U_2 = \{ (6,0) \}$.} \label{fig03}
\end{center}

We have the following theorem:
 
\begin{theorem}
\label{24/03/01/16:03}
{\rm
Let $G \in \mathscr{G}_\times$ and let $U_r(G) = \bigcup_{i=1}^{p(r)} U_i$ be the decomposition defined in \eqref{24/03/01/15:53}.
Then zero is an eigenvalue of $L_G$
if and only if there exists an $i \in \{1,2, \cdots,p(r)\}$ such that $l_i = \#  U_i$ is odd.
}
\end{theorem}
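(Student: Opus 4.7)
My plan is to characterize $\ker L_G$ by solving $L_G \psi = 0$ vertex by vertex. At an end vertex $(n,1)$ the equation reduces to $\psi(n,0) = 0$, so $\psi$ must vanish at every degree-$3$ vertex $(n,0)$; at such a vertex the equation in turn determines $\psi(n,1) = -\psi(n-1,0) - \psi(n+1,0)$; and at a degree-$2$ vertex $(n,0)$ (belonging to some translate of $U_r(G)$) it reads
\begin{equation}\label{prop-rec}
\psi(n-1,0) + \psi(n+1,0) = 0.
\end{equation}
Hence an element of $\ker L_G$ is determined entirely by its values on degree-$2$ vertices, subject to \eqref{prop-rec} at every such vertex and the Dirichlet-type condition $\psi(\cdot,0) = 0$ at the adjacent degree-$3$ vertices.

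The assumption $(1,1) \in V_{1,r}(G)$ together with the periodicity \eqref{02/24/23:02} forces $(1+r,1) \in V(G)$, so no run in the decomposition \eqref{24/03/01/15:53} can cross a cell boundary, and the maximal runs of degree-$2$ vertices in $V(G)$ are exactly the $\mathbb{Z}$-translates of the $U_i$. Fix such a run of length $l_i$, relabel its positions as $1, 2, \ldots, l_i$, and impose $\psi(0,0) = \psi(l_i+1,0) = 0$. The recurrence \eqref{prop-rec} now reads $\psi(k+1,0) = -\psi(k-1,0)$, and iterating from the left boundary yields $\psi(k,0) = 0$ for every even $k$ and $\psi(k,0) = (-1)^{(k-1)/2}\psi(1,0)$ for every odd $k$. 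The right boundary $\psi(l_i+1,0) = 0$ is then automatic when $l_i$ is odd, while it forces $\psi(1,0) = 0$ when $l_i$ is even.

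To conclude: if some $l_i$ is odd, define $\psi$ to be the nonzero alternating pattern above on one fixed $\mathbb{Z}$-translate of $U_i$, zero on every other degree-$2$ vertex and on every degree-$3$ vertex, and $\psi(n,1) = -\psi(n-1,0) - \psi(n+1,0)$ at every end vertex (nonzero only at the two end vertices flanking the chosen run). This $\psi$ has finite support, hence lies in $\ell^2(V(G))$, and satisfies $L_G\psi = 0$ by the vertex-by-vertex check, giving $0 \in \sigma_{\rm p}(L_G)$. Conversely, if every $l_i$ is even, the run-by-run parity argument forces $\psi(n,0) = 0$ at every degree-$2$ vertex, hence $\psi(n,1) = 0$ at every end vertex as well, so $\psi \equiv 0$. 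The main step is the single-run parity analysis in the second paragraph; the only point that needs care is verifying that \eqref{24/03/01/15:53} really enumerates the global runs up to translation, which is precisely what the boundary-avoiding observation at the start of that paragraph secures.
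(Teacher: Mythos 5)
Your proof is correct, and it takes a genuinely different route from the paper. The paper deduces Theorem \ref{24/03/01/16:03} from its Floquet machinery: it decomposes $L_G$ into fibers $L_k$ (Proposition \ref{24/03/04/1:39}), derives the characteristic polynomial of $L_k$ (Proposition \ref{24/03/19.17:14}), shows by a combinatorial analysis of cycle covers of the weighted graph $G_k$ that the order of vanishing of that polynomial at $\lambda=0$ equals $q(r)$, the number of odd runs (Proposition \ref{24/03/27/10:36} together with Lemma \ref{24/03/28/14:03}), and finally invokes \cite[Theorem XIII.85]{RS4} to conclude that $0\in\sigma_{\rm p}(L_G)$ if and only if the flat band occurs, i.e., $q(r)\geq 1$. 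You instead solve $L_G\psi=0$ pointwise: Dirichlet conditions at the degree-$3$ vertices, the two-step recurrence \eqref{prop-rec} on each maximal run of degree-$2$ vertices, and the parity dichotomy (odd run length admits the alternating solution, even run length forces the trivial one). Your preliminary observation --- normalizing $(1,1)\in V_{1,r}(G)$, which the paper itself assumes, so that no maximal run crosses a cell boundary and the runs are exactly the translates of the $U_i$ in \eqref{24/03/01/15:53} --- is precisely the point needing justification, and you supply it; your constructed eigenfunction does satisfy every vertex equation (at the two flanking degree-$3$ vertices the end-vertex values $\psi(n,1)$ absorb the run values), and your converse in fact excludes all pointwise solutions of $L_G\psi=0$, not merely $\ell^2$ ones, with the pendant-free case $U_r(G)=\emptyset$ covered vacuously. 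As for what each approach buys: the paper's route simultaneously yields the fiberwise multiplicity $\dim\ker L_k=q(r)$ for every $k$ (Proposition \ref{24/03/05/23:40.1}), which is then essential for the band structure in Theorem \ref{24/03/05/23:40} and the gap argument in Subsection \ref{24/03/28/21:06}; your route is far more elementary --- no Fourier transform, no determinant combinatorics, no recourse to \cite[Theorem XIII.85]{RS4} --- and gives extra information for free, namely that the zero eigenspace, when nontrivial, contains compactly supported eigenfunctions and is infinite dimensional (translate the support by multiples of $r$), exhibiting the flat band concretely.
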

We will prove Theorem \ref{24/03/01/16:03} in Subsection \ref{ss.3.3}.  

Because, by virtue of (G$_3$), $L_G$ is translation invariant for any $G \in \mathscr{G}$, 
we can produce more detailed information about the spectrum of $L_G$.
For an $r$ that satisfies \eqref{02/24/23:02},
we define the operator
$T_r : \ell^2(V(G)) \longrightarrow \ell^2(V(G))$ as
\[ (T_r\psi)(n,s) = \psi(\tau_r(n,s)), \quad \psi \in  \ell^2(V(G)), \]
where $\tau_r(n,s) = (n+r,s)$.
We can show that $L_G$ commutes with $T_r$, {\it i.e.} $[L_G, T_r]=0$
because for any $\psi \in  \ell^2(V(G))$,
\begin{align*} 
(T_r L_G \psi)(n,s) & = (L\psi)(\tau_r(n,s)) = \frac{1}{{\rm deg} (\tau_r(n,s))} \sum_{y \sim \tau_r(n,s)} \psi(y) \\
	& = \frac{1}{{\rm deg} (n,s)} \sum_{\tau_r^{-1}(y) \sim (n,s)} \psi(y) 
		= \frac{1}{{\rm deg} (n,s)} \sum_{\tilde{y} \sim (n,s)} \psi(\tau_r(\tilde{y})) \\
	& = (L_G T_r\psi)(n,s),
\end{align*}
where we have used the fact that ${\rm deg}(\tau_r(n,s)) = {\rm deg}(n,s)$ and set $\tilde y = \tau_r(y)$.

Let $X_r(G) = \{ (n,0) \in V_{1,r}(G) \mid (n,0) \not\in U_r(G) \}$ and set
\begin{equation}
\label{24/03/27/16:20} 
X_r(G) = \{ (i_1,0), (i_2,0), \cdots, (i_{s(r)},0) \}, 
	\quad i_1 < i_2 < \cdots < i_{s(r)}. 
\end{equation}
As mentioned above, we can assume that $i_1 = 1$ without loss of generality.
We define an identification $\iota:V(G) \longrightarrow \mathbb{Z} \times \{1,2,\cdots,r+s(r)\}$ with
\[ \iota(nr + j, s) 
	= \begin{cases} 
		(n,j) & \mbox{if $j=1, \cdots, r$ and $s=0$,} \\ 
		(n,r+p) & \mbox{if $j=i_p$, $p=1,2,\cdots,s(r)$ and $s=1$.} 
		\end{cases} \] 
for $(nr + j, s) \in V_{nr,r}(G)$,
and we define a unitary operator $\mathcal{J}:\ell^2(V(G)) \longrightarrow \ell^2(\mathbb{Z};\mathbb{C}^{r+s(r)})$ as
\begin{align*}
(\mathcal{J}\psi)(n) 
	= \begin{pmatrix} \psi_1(n) \\ 
			\mbox{\rotatebox{90}{$\cdots$}}\\ \psi_{r+s(r)}(n) 
		\end{pmatrix}
\end{align*}
with $\psi_i(n) = \psi(\iota^{-1}(n,i))$ ($i=1,2, \cdots,r+s(r)$).
The inner product of $\ell^2(\mathbb{Z};\mathbb{C}^{r+s(r)})$ is
\[ \langle \mathcal{J}\psi | \mathcal{J}\phi \rangle 
	:= \sum_{n \in \mathbb{Z}} \langle \mathcal{J}\psi(n), D \mathcal{J}\phi(n) \rangle_{\mathbb{C}^{r+s(r)}} \]
with $D = {\rm diag}(d_1, d_2, \cdots, d_{r+s(r)})$ and
\[ d_i = {\rm deg}(\iota^{-1}(n,i)), \quad i=1,\cdots, r + s(r). \]
Note that $d_i$ is independent of $n \in \mathbb{Z}$ 
because if $\iota^{-1}(n,i) = (nr+j,s)$, then
$\iota^{-1}(n+1,i) = ((n+1)r + j,s) = \tau_r(nr+j,s)$. 
Let $\mathcal{F}:\ell^2(\mathbb{Z};\mathbb{C}^{r+s(r)}) \longrightarrow L^2([-\pi,\pi];\mathbb{C}^{r+s(r)})$ 
be the discrete Fourier transformation:
\[ (\mathcal{F}\mathcal{J}\psi)(k) 
	= \begin{pmatrix} \widehat{\psi}_1(k) \\ 
			\mbox{\rotatebox{90}{$\cdots$}}\\ \widehat{\psi}_{r+s(r)}(k) 
		\end{pmatrix}, \quad k \in [-\pi,\pi], \]
where $\widehat{f}(k) = (2\pi)^{-1/2}\sum_{n \in \mathbb{Z}} e^{-ik n} f(n)$.
We define the quasi momentum operator $P$ as
\[ P = (\mathcal{F}\mathcal{J})^{-1} M_k (\mathcal{F}\mathcal{J}), \]
where $M_{g(k)}$ denotes the multiplication operator by a function $g(k)$ on $L^2([-\pi,\pi];\mathbb{C}^{r+s(r)})$. 
Clearly, we have $\sigma(P) = [-\pi,\pi]$.
We have the following lemma:
\begin{lemma}
\label{24/03/04/1:17}
{\rm 
Let $r$ satisfy \eqref{02/24/23:02}. Then $T_r = e^{iP}$. 
} 
\end{lemma}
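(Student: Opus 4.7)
The plan is to reduce the identity $T_r = e^{iP}$ to the standard Fourier/translation correspondence on $\ell^2(\mathbb{Z};\mathbb{C}^{r+s(r)})$. More precisely, I would show that under the unitary $\mathcal{J}$, the translation $T_r$ on $\ell^2(V(G))$ becomes the unit shift on $\ell^2(\mathbb{Z};\mathbb{C}^{r+s(r)})$, which the Fourier transform $\mathcal{F}$ diagonalizes as multiplication by $e^{ik}$; applying the functional calculus to $P$ then yields the conclusion.

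First I would check the conjugation of $T_r$ by $\mathcal{J}$. From the definition of $\iota$, any vertex $\iota^{-1}(n,i)\in V(G)$ is of the form $(nr+j_i,s_i)$ for fixed $j_i,s_i$ depending only on $i\in\{1,\dots,r+s(r)\}$. Hence
\[
\tau_r(\iota^{-1}(n,i)) = ((n+1)r + j_i, s_i) = \iota^{-1}(n+1,i),
\]
so that
\[
(\mathcal{J} T_r \psi)(n)_i = \psi(\tau_r(\iota^{-1}(n,i))) = \psi_i(n+1) = (\mathcal{J}\psi)(n+1)_i.
\]
Thus $\mathcal{J} T_r \mathcal{J}^{-1} = S$, where $S$ is the unit left shift on $\ell^2(\mathbb{Z};\mathbb{C}^{r+s(r)})$. (Unitarity of $\mathcal{J}$ with respect to the weighted inner product defined via $D$ is automatic, since $\mathrm{deg}(\iota^{-1}(n,i))=d_i$ is independent of $n$, and $\tau_r$ preserves degrees.)

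Next I would compute the Fourier conjugate of $S$. For $\phi\in\ell^2(\mathbb{Z};\mathbb{C}^{r+s(r)})$,
\[
(\mathcal{F}S\phi)(k) = \frac{1}{\sqrt{2\pi}}\sum_{n\in\mathbb{Z}} e^{-ikn}\phi(n+1) = e^{ik}(\mathcal{F}\phi)(k),
\]
so $\mathcal{F}S\mathcal{F}^{-1} = M_{e^{ik}}$. Combining the two steps,
\[
(\mathcal{F}\mathcal{J})\, T_r\,(\mathcal{F}\mathcal{J})^{-1} = M_{e^{ik}}.
\]
Finally, since $P=(\mathcal{F}\mathcal{J})^{-1}M_k(\mathcal{F}\mathcal{J})$ is self-adjoint with $\sigma(P)=[-\pi,\pi]$, the bounded Borel functional calculus gives $e^{iP}=(\mathcal{F}\mathcal{J})^{-1}M_{e^{ik}}(\mathcal{F}\mathcal{J})$, which equals $T_r$ by the displayed identity.

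There is no serious obstacle here; the only point requiring a small bit of care is verifying that $\mathcal{J}$ is unitary in the weighted inner product, i.e.\ that the map $i\mapsto d_i$ is well-defined (independent of $n$), which follows because $\tau_r$ restricts to a bijection between the cells $V_{nr,r}(G)$ and $V_{(n+1)r,r}(G)$ preserving the role of each coordinate under $\iota$.
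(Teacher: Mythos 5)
Your proposal is correct and takes essentially the same route as the paper: both reduce $T_r=e^{iP}$ via the functional calculus to the identity $(\mathcal{F}\mathcal{J})T_r=M_{e^{ik}}(\mathcal{F}\mathcal{J})$, verified through the key relation $\tau_r(\iota^{-1}(n,i))=\iota^{-1}(n+1,i)$ together with the standard Fourier shift computation $e^{ik}\widehat{\psi}_i(k)=\widehat{\psi_i(\cdot+1)}(k)$. The only cosmetic difference is that you factor the conjugation into two explicit stages (first $\mathcal{J}T_r\mathcal{J}^{-1}$ as the unit shift, then its Fourier diagonalization), whereas the paper carries out the combined calculation in one pass.
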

\begin{proof}
Since we know from the functional calculus that
$e^{iP} = (\mathcal{F}\mathcal{J})^{-1} M_{e^{ik}} (\mathcal{F}\mathcal{J})$, 
it suffices to prove that $(\mathcal{F}\mathcal{J}) T_r =  M_{e^{ik}} (\mathcal{F}\mathcal{J})$.
Because
\[ e^{ik}\widehat{\psi}_i(k) = (2\pi)^{-1/2}\sum_{n \in \mathbb{Z}} e^{-ik (n-1)} \psi_i(n)
	= \widehat{\psi_i(\cdot + 1)}(k), \]
we have
\begin{align*}
(M_{e^{ik}}(\mathcal{F}\mathcal{J})\psi)(k) 
= \begin{pmatrix} \widehat{\psi_1(\cdot + 1)}(k) \\ 
			\mbox{\rotatebox{90}{$\cdots$}}\\ \widehat{\psi_{r+s(r)}(\cdot + 1)}(k) 
		\end{pmatrix} 
= (\mathcal{F}\mathcal{J}T_r\psi)(k),
\end{align*}
where we have used the following:
\begin{align*}
\psi_i(n+1) 
	& = \psi(\iota^{-1}(n+1,i)) \\
	& = \begin{cases}
		\psi((n+1)r+j,0), & i = j~ (j=1, \cdots, r) \\
		\psi((n+1)r+j,1), & i = r+p~ (j=i_p,~ p =1, \cdots, s(r))
		\end{cases}
\end{align*}
and 
\begin{align*}
\psi((n+1)r+j,s) = \psi(\tau_r(nr+j,s)) = (T_r\psi)(nr+j,s)= (T_r\psi)_i(n).
\end{align*}
\end{proof}
From Lemma \ref{24/03/04/1:17}, 
we know that $T_r$ is diagonalized by $\mathcal{U} := (\mathcal{F}\mathcal{J})^{-1}$
and $\sigma(T_r) = \{ e^{ik} \mid k\in [-\pi,\pi] \}$. 
These facts imply that $\mathcal{U}$ can decompose $L_G$ into a direct integral over the spectrum $[-\pi,\pi]$ of 
the quasi momentum operator $P$:
\begin{proposition}
\label{24/03/04/1:39}
{\rm
Let $G \in \mathscr{G}$ and let $r$ satisfy \eqref{02/24/23:02}.
Then we have
\begin{equation}
\label{24/03/10/23:25} 
\mathcal{U}^{-1}L_G \mathcal{U} = \int_{[-\pi,\pi]}^{\oplus} L_k dk, 
\end{equation}
where $L_k$ is an operator on $\mathcal{H}_k = \mathbb{C}^{r+s(r)}$ ($k \in [-\pi,\pi]$) and 
\[ \mathcal{U}^{-1}\ell^2(V(G)) = \int_{[-\pi,\pi]}^{\oplus} \mathcal{H}_k dk. \]
}
\end{proposition}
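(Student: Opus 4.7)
The plan is to leverage the commutation relation $[L_G,T_r]=0$ established just above, together with Lemma \ref{24/03/04/1:17}, which gives $T_r = e^{iP}$. Under the unitary $\mathcal{U}^{-1} = \mathcal{F}\mathcal{J}$, the quasi momentum operator $P$ is diagonalized as multiplication by $k$ on $L^2([-\pi,\pi];\mathbb{C}^{r+s(r)})$; hence the conjugated operator $\widetilde{L} := \mathcal{U}^{-1} L_G \mathcal{U}$ commutes with multiplication by $e^{ik}$ and, by a standard polynomial/Weierstrass approximation argument, with multiplication by every bounded measurable scalar function on $[-\pi,\pi]$. By the standard characterization of decomposable operators on $L^2([-\pi,\pi];\mathcal{H}_0)$, $\widetilde{L}$ must then be of the form $\int^{\oplus}_{[-\pi,\pi]} L_k\,dk$ for some essentially bounded measurable family of operators $L_k$ on $\mathcal{H}_k = \mathbb{C}^{r+s(r)}$, which is precisely \eqref{24/03/10/23:25}.

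A more concrete and arguably more useful approach is to compute $\widetilde{L}$ directly, since explicit formulas for $L_k$ will be needed for the later spectral analysis. The key geometric input is that, by (G$_1$) and (G$_2$), every vertex of the cell $V_{nr,r}(G)$ has neighbors contained in $V_{(n-1)r,r}(G) \cup V_{nr,r}(G) \cup V_{(n+1)r,r}(G)$, with the only cross-cell edges being those joining $(nr,0)$ with $(nr+1,0)$. Translating the action of $L_G$ through the identification $\iota$, one finds that each component $(\mathcal{J}L_G\psi)_i(n)$ is a fixed linear combination of $\psi_{i'}(n-1)$, $\psi_{i'}(n)$ and $\psi_{i'}(n+1)$ with $n$-independent coefficients; applying the discrete Fourier transform then converts the shifts $n\pm1$ into multiplication by $e^{\pm ik}$, so $\widetilde{L}$ becomes multiplication by a matrix-valued symbol $L_k \in M_{r+s(r)}(\mathbb{C})$.

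The main obstacle is organizational rather than conceptual. One must keep careful track of which coordinates of $\mathbb{C}^{r+s(r)}$ index elements of $X_r(G)$, which index elements of $U_r(G)$, and which index end vertices $(n,1)$; in the resulting matrix $L_k$, only the entries linking the coordinates $j=1$ and $j=r$ carry nontrivial $e^{\pm ik}$ factors, while the end vertices produce only off-diagonal couplings between $\psi_j$ ($j \in X_r(G)$) and $\psi_{r+p}$, with entries dictated by the degree normalisation $1/\mathrm{deg}(n,0)$. Self-adjointness of each $L_k$ with respect to the weighted inner product $\langle \cdot , D \cdot \rangle_{\mathbb{C}^{r+s(r)}}$ then follows automatically from the self-adjointness of $L_G$ and the unitarity of $\mathcal{U}$.
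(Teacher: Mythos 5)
Your proposal is correct, and it really contains two proofs. The concrete computation sketched in your second and third paragraphs is essentially the paper's own argument: the paper computes $(\mathcal{J}L_G\psi)(n)$ componentwise, observes that only the coordinates $j=1$ and $j=r$ couple to the neighbouring cells (the shifts $\psi_r(n-1)$ and $\psi_1(n+1)$), applies $\mathcal{F}$ to convert these shifts into the factors $e^{\mp ik}$, and reads off the explicit block matrix $L_k$ in \eqref{24/03/29/11:38.0}--\eqref{24/03/29/11:38.4}; your bookkeeping of the coordinates in $X_r(G)$, $U_r(G)$, the end vertices, and the weighted inner product $\langle\cdot, D\,\cdot\rangle_{\mathbb{C}^{r+s(r)}}$ matches that setup exactly. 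Your first, abstract argument is genuinely different from anything in the paper: from $[L_G,T_r]=0$ and Lemma \ref{24/03/04/1:17} you deduce that $\widetilde{L}=\mathcal{U}^{-1}L_G\mathcal{U}$ commutes with $M_{e^{ik}}$, hence with all trigonometric polynomials and, by approximation and weak limits, with every scalar multiplication $M_g$, $g\in L^\infty([-\pi,\pi])$, and then invoke the commutant characterization of decomposable operators (this is \cite[Theorem XIII.84]{RS4}, a reference the paper itself uses elsewhere). This route is valid, with one step worth making explicit: to generate the full $*$-algebra of trigonometric polynomials you also need $[\widetilde{L}, M_{e^{-ik}}]=0$, which follows by taking adjoints, using that $\widetilde{L}$ is self-adjoint (or simply that $M_{e^{ik}}$ is unitary, so commuting with it implies commuting with its inverse). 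The trade-off between the two routes is clear: the abstract argument yields \eqref{24/03/10/23:25} with almost no computation, but produces only a measurable family $(L_k)_k$ with no formula, whereas the paper's direct computation is needed anyway, since the explicit symbol $L_k$ is precisely the input to the characteristic-equation analysis of Proposition \ref{24/03/19.17:14} and to all of the subsequent spectral results. So your abstract proof is a correct and cleaner way to get the bare statement of Proposition \ref{24/03/04/1:39}, but it cannot replace the computation for the purposes of the rest of the paper.
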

In Subsetion \ref{ss.2.1} 
we provide an explicit formula 
for $L_k$ in Proposition \ref{24/03/04/1:39}.
Let $r$ satisfy \eqref{02/24/23:02} and $U_r(G) = \bigcup_{i=1}^{p(r)} U_i$ be the decomposition in \eqref{24/03/01/15:53}.
Then we use $q(r)$ to denote 
the number of the $U_i$ for which $l_i = \#  U_i$ is odd:
\begin{equation}
\label{24/03/24/21:37} 
q(r) = \# \{ U_i \mid \mbox{$l_i$ is odd} \}. 
\end{equation}
\begin{proposition}
\label{24/03/05/23:40.1}
{\rm
Let  $r$ satisfy \eqref{02/24/23:02}. Then 
\begin{equation} 
\label{24/03/29/15:15}
q(r) = {\rm dim} \ker L_k 
\end{equation}
for any $k \in [-\pi,\pi]$.
}
\end{proposition}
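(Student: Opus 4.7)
The plan is to solve the homogeneous equation $L_k \psi = 0$ directly by exploiting the block structure of the backbone together with the explicit formula for $L_k$ from Subsection \ref{ss.2.1}. Write $\psi = (\psi_1, \ldots, \psi_r, \psi_{r+1}, \ldots, \psi_{r+s(r)}) \in \mathbb{C}^{r+s(r)}$, where the first $r$ entries correspond to the backbone vertices $(1,0), \ldots, (r,0)$ and the last $s(r)$ to the pendant vertices $(i_1,1), \ldots, (i_{s(r)},1)$. The pendant rows of $L_k$ are trivially $k$-independent: since each pendant has degree one with unique neighbor the X-vertex $(i_p,0)$, the row indexed by $r+p$ reads $(L_k\psi)_{r+p} = \psi_{i_p}$, so $L_k\psi = 0$ immediately forces $\psi_{i_p} = 0$ for every $p$; that is, $\psi$ vanishes on $X_r(G)$.

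Next, the row at each X-vertex $(i_p,0)$ (degree three) takes the form $\psi_{i_p - 1} + \psi_{i_p + 1} + \psi_{r+p} = 0$, possibly with a wrap-around phase $e^{\pm i k}$ on a backbone term when $i_p \in \{1, r\}$. Since $\psi_{r+p}$ occurs only in this single equation, it is freely determined by the backbone values and places no further constraint. The row at a $U$-vertex $j$ (degree two) reduces to the two-term recurrence $\psi_{j-1} + \psi_{j+1} = 0$. The only wrap-around factors that could appear are at $j=r$ (a term $e^{i k}\psi_1$) or through the equation at $j=1$; but the normalization $(1,1) \in V(G)$ puts $1 \in X_r(G)$, so $\psi_1 = 0$ and these phases either vanish identically or are absorbed into a pendant-determining equation.

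Consequently, the backbone system decouples into the $p(r)$ blocks $U_i = \{m_1, m_1+1, \ldots, m_{l_i}\}$, each equipped with Dirichlet-type boundary values $\psi_{m_1-1} = \psi_{m_{l_i}+1} = 0$ inherited from the vanishing of $\psi$ on $X_r(G)$. Iterating $\psi_{m_{t+1}} = -\psi_{m_{t-1}}$ from $\psi_{m_0} = 0$ yields $\psi_{m_{2s}} = 0$ and $\psi_{m_{2s+1}} = (-1)^s \psi_{m_1}$, with $\psi_{m_1}$ the only free parameter. The right boundary condition $\psi_{m_{l_i+1}} = 0$ is automatically satisfied when $l_i$ is odd and forces $\psi_{m_1} = 0$ when $l_i$ is even. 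Summing over blocks, $\dim \ker L_k$ equals the number of odd-length blocks, namely $q(r)$, and this count is manifestly independent of $k \in [-\pi,\pi]$.

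The main subtlety is tracking the wrap-around phases $e^{\pm i k}$, which a priori could introduce $k$-dependent constraints and spoil the uniform dimension count. The structural observation that every such phase multiplies either $\psi_1 = 0$ or else a pendant coordinate $\psi_{r+p}$ (a slack variable simply solved for by the relevant X-equation) is precisely what preserves $k$-independence; once this bookkeeping is in place, the remaining argument is a routine per-block linear-algebra computation.
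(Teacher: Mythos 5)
Your proof is correct, but it takes a genuinely different and more elementary route than the paper's. The paper never solves $L_k \psi = 0$ directly: it works through the characteristic polynomial, identifying $q(r)$ with the lowest power of $\lambda$ in $\det A_k(\lambda)$ (Proposition \ref{24/03/27/10:36}, whose proof is the heavy combinatorial analysis of spanning subgraphs of $G_k$ with minimal loop number, via Lemmas \ref{24/03/29/09:59} and \ref{24/03/28/14:03}), then checking that the reduced polynomial satisfies $F(k,0) \not= 0$ for every $k$, so that the algebraic multiplicity of the root $0$ is exactly $q(r)$ independently of $k$, and finally invoking the fact that for $L_k$, self-adjoint in the $D$-weighted inner product, geometric and algebraic multiplicities coincide. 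You instead compute $\ker L_k$ by hand: the pendant rows force $\psi$ to vanish on $X_r(G)$, the degree-three rows merely solve for the pendant coordinates (slack variables imposing no constraint), and the degree-two rows decouple into the $p(r)$ blocks $U_i$ with Dirichlet boundary data, where the two-term recurrence contributes a one-dimensional kernel exactly when $l_i$ is odd; your handling of the wrap-around phases is the right structural point, since the normalization $i_1 = 1$ of \eqref{24/03/27/16:20} (which the paper also adopts) guarantees every phase multiplies either $\psi_1 = 0$ or a slack pendant variable, and your parity bookkeeping within each block, including a block ending at $r$, checks out. What each approach buys: yours is self-contained, bypasses Propositions \ref{24/03/19.17:14} and \ref{24/03/27/10:36} and the multiplicity argument entirely, and even produces an explicit basis of $\ker L_k$ (alternating-sign vectors supported on odd blocks, with $k$-dependent pendant entries but $k$-independent dimension); the paper's route is longer for this one statement, but its machinery is needed anyway for the spectral gap in Subsection \ref{24/03/28/21:06} and for Theorem \ref{24/03/01/16:03}, so there the proposition falls out as a by-product. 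One small caveat, shared with the paper's proof: the argument requires $G \in \mathscr{G}_\times$ (at least one pendant edge, so that $1 \in X_r(G)$ and the blocks do not wrap cyclically), consistent with the standing assumption announced at the start of Section 2.
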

From Proposition \ref{24/03/05/23:40.1}, 
we know that the multiplicity of the zero eigenvalue of $L_k$ is independent of $k \in [-\pi, \pi]$ and equal to $q(r)$.
Thus, we know that there are at most $r+s(r)-q(r)$ 
non-zero eigenvalues of $L_k$.
Let $\lambda_i(k)$ ($i=1,2, \cdots, r+s(r)-q(r)$) be the non-zero eigenvalues of $L_k$
such that 
\[ \lambda_1(k) \geq \lambda_2(k) \geq \cdots \geq \lambda_{r+s(r) - q(r)}(k). \] 
Note that 
\begin{equation} 
\label{24/04/02/09:47}
\sigma(L_G) \setminus \{0\} = \bigcup_{i = 1}^{r+s(r)-q(r)} \{ \lambda_i(k) \mid k \in [-\pi,\pi] \}. 
\end{equation}
Because the number of end vertices equals the number of vertices connected to end vertices,
$q(r)$ is even (resp. odd) if $r+s(r)$ is even (resp. odd).
Hence we know that $r+s(r)-q(r)$ is always even.
Thus, we can define $t(r) \in \mathbb{N}$ as 
\begin{equation} 
\label{24/04/02/23:55}
t(r) = \frac{1}{2}(r+s(r)-q(r)). 
\end{equation}
We are now in a position to state our main result. 
\begin{theorem}
\label{24/03/05/23:40}
{\rm
Let $G \in \mathscr{G}_\times$ and $\lambda_i(k)$ ($i=1,2,\cdots,r+s(r)-q(r)$) be as given above.
Then 
\[ \sigma(L_G) \setminus \{0\} = \bigcup_{i=1}^{t(r)} [a_i, b_i] \cup [-b_i,-a_i], \]
where 
\[ a_i = \begin{cases} 
				\lambda_i(\pi) & \mbox{if $i$ is odd} \\
				\lambda_i(0) & \mbox{if $i$ is even}
			\end{cases}, \quad 
	b_i = \begin{cases} 
				\lambda_i(0) & \mbox{if $i$ is odd} \\
				\lambda_i(\pi) & \mbox{if $i$ is even}
			\end{cases} \]
for  $i = 1,\cdots, t(r)$.
}
\end{theorem}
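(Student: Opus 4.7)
The plan is to combine the direct-integral decomposition of Proposition \ref{24/03/04/1:39} with the bipartite $\pm$-symmetry of $G$ to reduce to the positive eigenvalue branches, then extract the band structure from a Floquet interlacing on $[0,\pi]$.

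First, I would lift the spectral symmetry of Theorem \ref{24/02/29/17:13}(1) to the fiber level. The bipartite sign $\psi \mapsto ((-1)^{n+s}\psi(n,s))$ commutes with the translation $T_r$ (using that $(G_3)$ is compatible with the $\pm 1$ coloring), so it descends through $\mathcal{U}$ to a fiberwise unitary $J_k$ on $\mathcal{H}_k$ with $J_k L_k J_k^{-1} = -L_k$. Together with Proposition \ref{24/03/05/23:40.1}, this pairs the $r+s(r)-q(r)$ non-zero eigenvalues of $L_k$ into $\pm$-pairs: $\lambda_{r+s(r)-q(r)+1-i}(k) = -\lambda_i(k)$ for $i = 1, \ldots, t(r)$. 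In particular the negative bands are exactly $-B_i$, where $B_i := \lambda_i([-\pi,\pi])$, and it suffices to identify the $t(r)$ positive bands.

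Second, standard perturbation theory for ordered eigenvalues of an analytic self-adjoint matrix family gives that each $\lambda_i : [-\pi,\pi] \to \mathbb{R}$ is continuous, and the explicit formula for $L_k$ from Subsection \ref{ss.2.1} satisfies $L_{-k} = \overline{L_k}$, so each $\lambda_i$ is even in $k$. Hence $B_i = \lambda_i([0,\pi])$ is a closed interval $[a_i, b_i]$.

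The heart of the proof, and the main obstacle, is showing that $\{a_i, b_i\} = \{\lambda_i(0), \lambda_i(\pi)\}$ with the specified alternation. My approach combines two ingredients. \emph{Monotonicity}: by Theorem \ref{24/02/29/17:13}(3) there are no eigenvalues of $L_G$ outside $\{0\}$, so no branch $\lambda_i$ can be constant on a subinterval of $[0,\pi]$; combined with analyticity of $\lambda_i$ in $e^{ik}$ away from crossings, and its evenness, one argues that $\lambda_i$ has no interior critical points on $(0,\pi)$, hence is strictly monotone. \emph{Alternation}: this follows from the Floquet-type interlacing
\[ \lambda_1(0) \geq \lambda_1(\pi) \geq \lambda_2(\pi) \geq \lambda_2(0) \geq \lambda_3(0) \geq \lambda_3(\pi) \geq \cdots, \]
which I would establish by analyzing the monodromy matrix (Floquet discriminant) of $L_k$ on a fundamental cell, adapting the classical argument for 1-D periodic Jacobi operators to the present graph setting. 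Combined with monotonicity, this interlacing yields $b_i = \lambda_i(0)$, $a_i = \lambda_i(\pi)$ for odd $i$ and $b_i = \lambda_i(\pi)$, $a_i = \lambda_i(0)$ for even $i$, completing the proof.
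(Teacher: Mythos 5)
Your overall architecture (reduce by $\pm$-symmetry, prove each band is an interval via monotonicity, fix endpoints by interlacing) parallels the paper's, but two of your three steps have genuine gaps. First, the fiberwise anticommutation $J_k L_k J_k^{-1} = -L_k$ is false when $r$ is odd: the sign operator $S\psi(n,s) = (-1)^{n+s}\psi(n,s)$ satisfies $S T_r = (-1)^r T_r S$, so for odd $r$ it does not descend to the fibers but instead intertwines the fiber at $k$ with the fiber at $k+\pi$. Concretely, for $G_{1,1}$ (where $r=1$) one has $\sigma(L_0) = \{1, -1/3\}$, which is not symmetric about zero, so your claimed pairing $\lambda_{r+s(r)-q(r)+1-i}(k) = -\lambda_i(k)$ fails at fixed $k$. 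The correct statement is the paper's Lemma \ref{24/04/03/01:43}: the relation is $-\lambda_i(k)$ only for $r$ even, and $-\lambda_i(k-\pi)$ for $r$ odd; at the level of band sets this can be repaired using evenness of $\lambda_i$, but your justification as written is wrong.

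Second, and more seriously, your monotonicity argument is circular and, even taken on its own terms, invalid. The paper proves part (3) of Theorem \ref{24/02/29/17:13} (absence of nonzero eigenvalues) in Subsection \ref{24/04/04/16:38} \emph{as a consequence of} the strict monotonicity of $\lambda_i$ established in the proof of Theorem \ref{24/03/05/23:40}, so you cannot feed (3) back in as an input. Moreover, even granting (3), the inference ``not constant on any subinterval, analytic, even $\Longrightarrow$ no interior critical points on $(0,\pi)$'' is false: a nonconstant analytic band function can perfectly well have a strict interior extremum, and in general matrix-valued Floquet problems it does. What actually forces monotonicity here is the scalar discriminant identity $D(\lambda_i(k)) = \cos k$ of \eqref{24/04/002/10:36}, obtained from the combinatorial determinant expansion in Propositions \ref{24/03/19.17:14} and \ref{24/03/27/10:36}: it gives $k = {\rm Arc}\cos D(\lambda_i(k))$ on $[0,\pi]$, whence injectivity and strict monotonicity at once. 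Your final ``Floquet interlacing via the monodromy matrix'' is precisely this missing construction, but it is only announced, not carried out — and it is the heart of the theorem, since the existence of a single real discriminant $D$ for an $(r+s(r))$-dimensional fiber is nontrivial (the paper gets it by eliminating the pendant coordinates and expanding over spanning subgraphs of $G_k$). Finally, the orientation of your asserted interlacing $\lambda_1(0) \geq \lambda_1(\pi) \geq \lambda_2(\pi) \geq \cdots$ requires extra input the sketch never supplies: the paper pins it down via $\lambda_1(0)=1$ (stochasticity of $L_0$, Lemma \ref{24.07/09/10:53}), the positivity of the leading coefficient $A_{(r+s(r)-\star)/2}$, and the sign $A_{m_0} = (-1)^{t(r)}(B_{m_0}+C_{m_0})$ together with the limits of $D(\lambda)$ as $\lambda \to 0$ from Lemma \ref{24/03/29/09:59}.
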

This theorem has the following corollary:
\begin{corollary}
{\rm 
Let $G \in \mathscr{G}_\times$. Then
$\sigma(L_G)$ has at least one and at most $2t(r)-1 (= r + s(r)-q(r))$
spectral gaps.
}
\end{corollary}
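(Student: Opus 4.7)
The plan is to deduce the corollary by assembling the preceding three structural theorems and then doing a simple component count. First I would invoke Theorem \ref{24/03/05/23:40} to write $\sigma(L_G)\setminus\{0\}$ as the union $\bigcup_{i=1}^{t(r)}[a_i,b_i]\cup[-b_i,-a_i]$ of $2t(r)$ closed intervals, which immediately bounds the number of connected components of the nonzero part of the spectrum by $2t(r)$. Next, Theorem \ref{24/03/01/16:03} tells me precisely whether $0\in\sigma(L_G)$: it is an eigenvalue exactly when $q(r)\geq 1$. Finally, Theorem \ref{24/02/29/17:13}(2) guarantees a punctured neighborhood of $0$ lying in $\rho(L_G)$, so whenever $0$ does belong to $\sigma(L_G)$, it contributes an \emph{isolated} extra component $\{0\}$.

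From here the upper bound is pure bookkeeping: if $q(r)=0$, then $\sigma(L_G)$ consists of at most $2t(r)$ disjoint closed intervals, giving at most $2t(r)-1$ spectral gaps; if $q(r)\geq 1$, the isolated point $\{0\}$ lies strictly between the innermost positive band $[a_{t(r)},b_{t(r)}]$ and its mirror $[-b_{t(r)},-a_{t(r)}]$ (by the symmetry statement in Theorem \ref{24/02/29/17:13}(1) together with the separation in Theorem \ref{24/02/29/17:13}(2)), so we pick up one extra component and hence one extra gap, still keeping the total controlled by $r+s(r)-q(r)=2t(r)$. For the lower bound, Theorem \ref{24/02/29/17:13}(2) furnishes a nonempty interval around $0$ in $\rho(L_G)$; since $G\in\mathscr{G}_\times$ is nontrivial the symmetric pair $\pm a_1$ must lie in $\sigma(L_G)$, so this interval genuinely separates spectrum on both sides and produces at least one spectral gap.

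I do not expect any real obstacle. The only thing that needs a line of care is verifying that $\{0\}$, when present, is placed strictly inside the gap $(-a_{t(r)},a_{t(r)})$ rather than coinciding with a band edge, so that it splits one existing gap into two rather than merging with a band. This follows because Proposition \ref{24/03/05/23:40.1} asserts $\dim\ker L_k=q(r)$ uniformly in $k$, so the nonzero eigenvalue branches $\lambda_i(k)$ are bounded away from $0$ by continuity and compactness of $[-\pi,\pi]$, giving $a_{t(r)}>0$; thus the component $\{0\}$ is indeed strictly isolated. With that in place, the counting argument above closes the proof.
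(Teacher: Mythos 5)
Your overall route is the same as the paper's: the corollary is stated there without a separate proof precisely because it is read off from Theorem \ref{24/03/05/23:40} (the nonzero spectrum is a union of the $2t(r)$ bands $[a_i,b_i]\cup[-b_i,-a_i]$, hence at most $2t(r)-1$ gaps between them) together with Theorem \ref{24/02/29/17:13}(2) (a punctured neighbourhood of $0$ in $\rho(L_G)$, hence at least one gap). Your lower-bound argument and your verification that $a_{t(r)}>0$ are fine and consistent with what the paper does.

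However, there is a genuine flaw in your upper-bound bookkeeping in the case $q(r)\geq 1$. You correctly observe that the isolated eigenvalue $\{0\}$ adds an extra connected component of $\sigma(L_G)$, but then you conclude the gap count is ``still controlled by $r+s(r)-q(r)=2t(r)$'' --- which is \emph{not} the claimed bound $2t(r)-1$. With the literal definition of a spectral gap as a maximal bounded open interval of $\rho(L_G)$ between points of the spectrum, $2t(r)+1$ components yield up to $2t(r)$ gaps, and this is not hypothetical: for $G_{2,1}$ one has $r=2$, $s(r)=1$, $q(r)=1$, so $t(r)=1$ and the corollary asserts at most one gap, yet $\sigma(L_{G_{2,1}})=[-1,-1/\sqrt{3}]\cup\{0\}\cup[1/\sqrt{3},1]$ has two maximal resolvent intervals, $(-1/\sqrt{3},0)$ and $(0,1/\sqrt{3})$. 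So under your counting convention your argument proves a bound of $2t(r)$, not $2t(r)-1$, and the stated bound would in fact be false. The corollary must therefore be read with the convention implicit in the paper: gaps are counted between consecutive bands of $\sigma(L_G)\setminus\{0\}$, i.e.\ the isolated zero eigenvalue (Theorem \ref{24/03/01/16:03}) is not regarded as subdividing the central gap. Once you adopt that convention, your entire paragraph about the extra component $\{0\}$ becomes unnecessary and the upper bound is immediate from the $2t(r)$ bands; as written, your proposal neither adopts the convention nor proves the stated inequality, so the final count does not close.
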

This corollary provides an estimate of the number of spectral gaps.
It remains an open problem to determine the exact number of spectral gaps.

The rest of this paper is organized as follows. 
In Section 2, we present the proof of Proposition \ref{24/03/04/1:39}
with the explicit formula for $L_k$
and derive the characteristic equation for $L_k$.
In Section 3, we prove Proposition \ref{24/03/05/23:40.1},
parts (1) and (2) of Theorem \ref{24/02/29/17:13} and Theorem \ref{24/03/01/16:03}.
In Section 4, we prove Theorem \ref{24/03/05/23:40} and part (3) of Theorem \ref{24/02/29/17:13}.

\section{Characteristic equation}
In subsequent sections, 
unless otherwise specified,
we take it for granted that $G \in \mathscr{G}_\times$.
\subsection{Decomposition of the transition operator} \label{ss.2.1}
In this subsection, we prove Proposition \ref{24/03/04/1:39}.
By definition, we know that
\begin{align*} 
(\mathcal{J}L_G \psi)(n) 
	= \begin{pmatrix} (L_G \psi)_1(n) \\ \mbox{\rotatebox{90}{$\cdots$}} \\ (L_G \psi)_{r+s(r)}(n) \end{pmatrix} 
	= \begin{pmatrix} 
			\frac{1}{d_1} (\psi_r(n-1) + \psi_2(n) + \phi_1(n)) \\ 
			\frac{1}{d_2} (\psi_{1}(n) + \psi_3(n) + \phi_2(n)) \\
				\mbox{\rotatebox{90}{$\cdots$}} \\
			\frac{1}{d_{r-1}} (\psi_{r-2}(n) + \psi_r(n) + \phi_{r-1}(n)) \\
			\frac{1}{d_r} (\psi_{r-1}(n) + \psi_1(n+1) + \phi_r(n)) \\
			\psi_{i_1}(n) \\
				\mbox{\rotatebox{90}{$\cdots$}} \\
			\psi_{i_{s(r)}}(n)
		\end{pmatrix},
\end{align*}
where 
\[ \phi_i(n) = \begin{cases} 
			\psi_{r+p}(n), & i = i_p \\ 
			0, & i \not= i_p
			\end{cases}, \quad p = 1,2, \cdots, s(r). \]
Hence, we have
\begin{align*}
(\mathcal{F}\mathcal{J}L_G \psi)(k)
	= \begin{pmatrix} 
			\frac{1}{d_1} (e^{-ik}\hat{\psi}_r(k) + \hat{\psi}_2(k) + \hat{\phi}_1(k)) \\ 
			\frac{1}{d_2} (\hat{\psi}_1(k) + \hat{\psi}_3(k) + \hat{\phi}_2(k)) \\
				\mbox{\rotatebox{90}{$\cdots$}} \\
			\frac{1}{d_{r-1}} (\hat{\psi}_{r-2}(k) + \hat{\psi}_r(k) + \hat{\phi}_{r-1}(k)) \\
			\frac{1}{d_r} (\hat{\psi}_{r-1}(k) + e^{ik}\hat{\psi}_1(k) + \hat{\phi}_r(k)) \\
			\hat{\psi}_{i_1}(k) \\
				\mbox{\rotatebox{90}{$\cdots$}} \\
			\hat{\psi}_{i_{s(r)}}(k)
		\end{pmatrix} 
		= L_k (\mathcal{F}\mathcal{J}\psi)(k),
\end{align*}
where
\begin{equation}
\label{24/03/29/11:38.0}
 L_k = \begin{pmatrix} L_{11} & L_{12} \\ L_{21} & L_{22} \end{pmatrix} 
\end{equation}
with
\begin{align}
\label{24/03/29/11:38.1}
L_{11}
	= \begin{pmatrix} 
		0 & \frac{1}{d_1} & 0 &  \cdots & 0 & \frac{e^{-ik}}{d_1} \\ 
		\frac{1}{d_2} & 0 & \mbox{\rotatebox{140}{$\cdot\cdots$}} &  0 & \cdots & 0 \\
		0 & \mbox{\rotatebox{140}{$\cdot\cdots$}} & \mbox{\rotatebox{140}{$\cdot\cdots$}} & \mbox{\rotatebox{140}{$\cdot\cdots$}} 
			& \mbox{\rotatebox{140}{$\cdot\cdots$}} & \mbox{\rotatebox{90}{$\cdots$}} \\
		\mbox{\rotatebox{90}{$\cdots$}} & \mbox{\rotatebox{140}{$\cdot\cdots$}} & \mbox{\rotatebox{140}{$\cdot\cdots$}} 
			& \mbox{\rotatebox{140}{$\cdot\cdots$}} & \mbox{\rotatebox{140}{$\cdot\cdots$}}  & 0\\
		0 & \cdots & 0 & \mbox{\rotatebox{140}{$\cdot\cdots$}} & \mbox{\rotatebox{140}{$\cdot\cdots$}} & \frac{1}{d_{r-1}} \\
		\frac{e^{ik}}{d_r} & 0 & \cdots & 0 & \frac{1}{d_r} & 0
		\end{pmatrix}, 
\\
\label{24/03/29/11:38.2}
L_{22}
	= \begin{pmatrix}
		0 & \frac{1}{d_{r+1}} & 0 &  \cdots & 0 \\ 
		\frac{1}{d_{r+2}} & 0 & \mbox{\rotatebox{140}{$\cdot\cdots$}} & \mbox{\rotatebox{140}{$\cdot\cdots$}} & \mbox{\rotatebox{90}{$\cdots$}}  \\
		0 & \mbox{\rotatebox{140}{$\cdot\cdots$}} & \mbox{\rotatebox{140}{$\cdot\cdots$}} & \mbox{\rotatebox{140}{$\cdot\cdots$}} 
			& 0 \\
		\mbox{\rotatebox{90}{$\cdots$}} & \mbox{\rotatebox{140}{$\cdot\cdots$}} & \mbox{\rotatebox{140}{$\cdot\cdots$}} 
			& \mbox{\rotatebox{140}{$\cdot\cdots$}} & \frac{1}{d_{r+s(r)-1}} \\
		0 & \cdots & 0 & \frac{1}{d_{r+s(r)}} &  0 \\
		\end{pmatrix}.
\end{align}
Here $L_{12} = (b_{i.j})$ and $L_{21}=(c_{i,j})$ are given by
\begin{align}
\label{24/03/29/11:38.3}
& b_{i,j} = \begin{cases} \frac{1}{d_i}, & i = i_p, ~ j = p, \quad p = 1, \cdots, s(r) \\ 0, & \mbox{otherwise}, \end{cases} \\
\label{24/03/29/11:38.4}
& c_{i,j} = \begin{cases} \frac{1}{d_{r+i}}, & i = p, ~ j = i_p, \quad p = 1, \cdots, s(r) \\ 0, & \mbox{otherwise}. \end{cases} 
\end{align}
Thus, we know that $\mathcal{U}^{-1}L_G \mathcal{U} \cdot (\mathcal{U}^{-1}\psi)(k) = L_k  (\mathcal{U}^{-1}\psi)(k)$,
which establishes Proposition \ref{24/03/04/1:39}.

We conclude  this subsection with the following lemma:
\begin{lemma} \label{24.07/09/10:53}
{\rm
Let $k=0$. Then $L_0$ is a stochastic matrix,
{\it i.e.} the elements of $L_0$ are all nonnegative and the sum of each row (resp. column) of $L_0$ is $1$.
}
\end{lemma}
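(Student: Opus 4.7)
The plan is to read off the lemma directly from the block description of $L_k$ derived in Subsection \ref{ss.2.1}, evaluated at $k=0$. At $k = 0$ the corner phase factors $e^{\pm ik}$ appearing in $L_{11}$ collapse to $1$, so every nonzero entry of $L_0$ takes the form $1/d_i$ for some vertex degree $d_i \in \mathbb{N}$. Since each $d_i \geq 1$, nonnegativity is immediate.

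For the row-sum statement, the conceptually cleanest approach is to exploit the probabilistic meaning of $L_G$. The constant function $\mathbf{1}_V \equiv 1$ on $V(G)$ is an eigenfunction of $L_G$ with eigenvalue $1$: indeed $(L_G \mathbf{1}_V)(x) = ({\rm deg}\, x)^{-1} \sum_{y \sim x} 1 = 1$. Under $\mathcal{J}$ this $\mathbf{1}_V$ corresponds to the sequence that is identically the all-ones vector $\mathbf{1} \in \mathbb{C}^{r+s(r)}$, independent of $n$. Feeding this into the identity $(\mathcal{F}\mathcal{J} L_G \psi)(k) = L_k (\mathcal{F}\mathcal{J}\psi)(k)$ at the $k = 0$ Bloch fiber yields $L_0 \mathbf{1} = \mathbf{1}$, which is exactly the assertion that each row of $L_0$ sums to $1$. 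The dual column-sum statement is the transposed assertion, which is tautologically equivalent.

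Equivalently, one can verify the row sums directly against the block description. For $1 \leq i \leq r$, the nonzero entries of row $i$ of $L_0$ lie at columns $i-1$ and $i+1$ (read cyclically modulo $r$ via the $k=0$ wraparound) and, when $i = i_p$ for some $p$, at column $r+p$; these are precisely the $d_i$ graph-neighbors of $(i,0)$, each contributing the value $1/d_i$. For $1 \leq p \leq s(r)$ the row $r+p$ has its sole nonzero entry at column $i_p$ with value $1/d_{r+p} = 1$. Either way the row sum is $1$.

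There is no substantive obstacle to the proof; the only point requiring a little bookkeeping is identifying the wraparound entries $e^{-ik}/d_1$ and $e^{ik}/d_r$ of $L_{11}$ at $k = 0$ with the cross-cell neighbors of the boundary vertices $(1,0)$ and $(r,0)$ of the cell, so that the count of nonzero entries in each of the first $r$ rows correctly matches $d_i$.
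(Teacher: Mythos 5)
Your nonnegativity argument and your direct row-sum verification are correct, and together they amount to exactly the verification the paper leaves implicit: the lemma is stated immediately after the explicit block formula \eqref{24/03/29/11:38.0}--\eqref{24/03/29/11:38.4} with no proof given, the intended reading being that at $k=0$ one counts, for each $1\le i\le r$, the $d_i$ nonzero entries $1/d_i$ in row $i$ (the two cyclic neighbours, with the wraparound phases $e^{\pm ik}$ collapsed to $1$, plus column $r+p$ when $i=i_p$), and observes that row $r+p$ has the single entry $1/d_{r+p}=1$ at column $i_p$. One caveat on your first, ``probabilistic'' route: the constant function $\mathbf{1}_V$ does not belong to $\ell^2(V(G))$, so it cannot literally be fed through $\mathcal{F}\mathcal{J}$ (its Fourier transform would be a delta at $k=0$, not an element of $L^2([-\pi,\pi];\mathbb{C}^{r+s(r)})$); what survives of that heuristic is precisely your second, entrywise computation, which establishes $L_0\mathbf{1}=\mathbf{1}$ directly and rigorously.

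The genuine gap is the column-sum step. Your claim that it is ``the transposed assertion, which is tautologically equivalent'' is false: transposition identifies the column sums of $L_0$ with the row sums of $^{\rm t}L_0$, a \emph{different} matrix, and a row-stochastic matrix need not be column-stochastic. Here $L_0=D^{-1}A_0$ with $A_0$ symmetric, so the sum of column $j$ is $\sum_{i\sim j}1/d_i$, which equals $1$ only if the degrees are all equal --- and for $G\in\mathscr{G}_\times$ they never are, since $d_{r+p}=1$ while $d_i\in\{2,3\}$ for $1\le i\le r$. Concretely, for $G_{1,1}$ one has
\[ L_0=\begin{pmatrix} 2/3 & 1/3 \\ 1 & 0 \end{pmatrix}, \]
whose column sums are $5/3$ and $1/3$. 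So no transposition argument can close this step; indeed the parenthetical column claim in the lemma cannot hold as literally stated for any $G\in\mathscr{G}_\times$. Note that the only use of the lemma downstream (the Perron--Frobenius argument showing $\lambda_1(0)=1$ with $\dim\ker(L_0-\lambda_1(0))=1$) requires just row-stochasticity together with irreducibility, and that much your direct computation does establish; but as a proof of the statement as written, your argument fails at the column assertion, and you should have flagged that the assertion itself is untenable rather than dismissed it as a tautology.
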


\subsection{Characteristic equation}

Let 
\[ \mathcal{U}^{-1} L_G \mathcal{U} = \displaystyle \int^\oplus_{[-\pi,\pi]} L_k dk \]
be the decomposition in \eqref{24/03/10/23:25}.
In this subsection, 
we investigate the characteristic polynomial for $L_k$ 
and prove the following:
\begin{proposition}
\label{24/03/19.17:14}
{\rm
Let $G \in \mathscr{G}$.
Then there exist a positive constant $A_{(r+s(r)-\star)/2}$ 
and  nonnegative constants $A_m$ ($m=0,1,2, \dots, (r+s(r)-\star)/2-1$)
independent of $k \in [-\pi,\pi]$ such that
$\lambda = \lambda(k)$ is an eigenvalue of $L_k$
if and only if 
\begin{equation}
\label{24/03/22/20:51} 
\sum_{m=0}^{ (r+s(r)-\star)/2} A_m \lambda^{2m+\star} - 2 (\cos k ) \lambda^{s(r)} = 0, 
\end{equation}
where 
\begin{equation} 
\label{24/04/22/12:03}
\star = \begin{cases} 
		0, & \mbox{if $r+s(r)$ is even}, \\
		1, & \mbox{if $r+s(r)$ is odd}.
		\end{cases}
\end{equation}
}
\end{proposition}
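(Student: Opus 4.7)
The plan is to compute $\det(L_k - \lambda I)$ directly from the block form of $L_k$ together with \eqref{24/03/29/11:38.1}--\eqref{24/03/29/11:38.4}, to isolate the $\cos k$ dependence through the cyclic tridiagonal structure of $L_{11}$, and then to read off the parity from the specific Laurent-polynomial structure of the diagonal of the reduced $r\times r$ matrix.

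First I would apply the Schur complement. Because two end vertices are never adjacent to each other, the lower-right block of $L_k - \lambda I$ reduces to $-\lambda I_{s(r)}$, and for $\lambda\ne 0$
\[
\det(L_k - \lambda I_{r+s(r)}) \;=\; (-\lambda)^{s(r)} \det\!\Bigl( L_{11} - \lambda I_r + \tfrac{1}{\lambda}\, L_{12} L_{21}\Bigr).
\]
A short computation from \eqref{24/03/29/11:38.3}--\eqref{24/03/29/11:38.4} identifies $L_{12}L_{21}$ with the diagonal $r\times r$ matrix whose only nonzero entries are $1/d_{i_p}$ at positions $(i_p, i_p)$, $p = 1,\dots, s(r)$. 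Writing $M(\lambda, k) := L_{11} - \lambda I_r + \lambda^{-1} L_{12}L_{21}$, the matrix $M(\lambda, k)$ is cyclic tridiagonal with all $k$-dependence confined to the two corner entries $e^{-ik}/d_1$ at $(1,r)$ and $e^{ik}/d_r$ at $(r,1)$.

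Next I would expand $\det M(\lambda, k)$ via the Leibniz formula. The sparsity of $M$ restricts the contributing permutations to either (i) products of adjacent transpositions---possibly including the wrap-around transposition $1\leftrightarrow r$---with fixed points elsewhere, or (ii) one of the two length-$r$ cyclic permutations that traverses the cycle once. Type (i) contributes a $k$-independent Laurent polynomial $P(\lambda)$, whereas type (ii) uses exactly one corner entry and together contributes $(-1)^{r-1}(e^{ik}+e^{-ik})/\prod_{i=1}^{r}d_i = (-1)^{r-1}\cdot 2\cos k/\prod_i d_i$. Multiplying through by $\lambda^{s(r)}\prod_i d_i$ to clear the Laurent denominators from the diagonal entries of $M$ then converts $\det(L_k - \lambda I) = 0$ into the polynomial identity $Q(\lambda) - 2\cos k\cdot \lambda^{s(r)} = 0$, in which $Q$ is a $k$-independent polynomial of degree $r+s(r)$ whose leading coefficient---coming uniquely from the identity permutation and the $(-\lambda)^r$ piece of $\prod_i M_{ii}$---is positive after normalisation.

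The remaining point is the parity $Q(\lambda) = \sum_m A_m \lambda^{2m+\star}$ with $\star$ as in \eqref{24/04/22/12:03}. Because $L_{11}$ has vanishing diagonal, every entry $M_{ii}$ is either $-\lambda$ or $-\lambda + 1/(d_{i_p}\lambda)$, a Laurent polynomial with only odd exponents; and each transposition factor $-M_{i,i+1}M_{i+1,i} = -1/(d_i d_{i+1})$ (and $-M_{1,r}M_{r,1} = -1/(d_1 d_r)$) is $\lambda$-independent. Tracking these parities through the Leibniz sum forces every term in $P(\lambda)$ to have common parity $r\pmod 2$ in $\lambda$, so after the $\lambda^{s(r)}$ prefactor all powers of $\lambda$ in $Q$ have parity $\star \equiv r+s(r)\pmod 2$. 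The main obstacle is the nonnegativity of the lower coefficients $A_m$ for $m < (r+s(r)-\star)/2$: the signed Leibniz sum must reorganise, after the overall factor $(-1)^{r+s(r)}$, into a nonnegative combination of products of the positive weights $1/d_i$ and $1/d_{i_p}$. I expect this to follow by induction on $r$ via the standard three-term recurrence for the tridiagonal determinant $\det M^{\mathrm{tri}}(\lambda)$, splitting cases according to whether the terminal diagonal entry carries a pendant contribution, but the careful sign bookkeeping across the pendant positions is where I anticipate the principal technical difficulty.
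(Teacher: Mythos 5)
Your architecture is sound and is, modulo packaging, the same as the paper's: the paper expands $\det A_k(\lambda)$ with $A_k(\lambda)=D(L_k-\lambda)$ over the full index set $\{1,\ldots,r+s(r)\}$ by the Leibniz formula, organized through a bijection between permutations and spanning unions of cycles of an auxiliary digraph $G_k$; the only $k$-dependent contributions are the two Hamiltonian cycles on $\{1,\ldots,r\}$ (your type (ii), producing $2(-1)^{r+s(r)-1}(\cos k)\lambda^{s(r)}$), the $2$-cycle through $[1,r]$ and $[r,1]$ contributes $e^{ik}e^{-ik}=1$ (your wrap-around transposition), and every other cycle is a loop or a transposition, which gives the parity claim exactly as in your fixed-point count. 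Your Schur-complement reduction to the $r\times r$ matrix $M(\lambda,k)$ is legitimate and is in fact the reduction the paper itself performs later for the multiplicity lemma, where $A_{22}=-\lambda I$ and $E=A_{11}-\lambda^{-1}\mathrm{diag}(\delta_1,\ldots,\delta_r)$ appear; in particular your reading $L_{22}=0$ is the correct one (the displayed \eqref{24/03/29/11:38.2} is inconsistent with the computation preceding it and with $A_{22}=-\lambda I$). The restriction $\lambda\neq 0$ in the Schur step is harmless since both sides of the resulting identity are polynomials in $\lambda$, and the small-$r$ degeneracies (for $r\le 2$ the corner and off-diagonal entries merge) still aggregate to a single $2\cos k$ term.

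The genuine gap is precisely the step you defer: nonnegativity of the lower coefficients $A_m$. No induction will close it, because the claim is false as literally stated, and the paper's own proof does not establish it either. For $G_{1,1}$ (so $r=s(r)=1$, $\star=0$) the normalized equation is $3\lambda^2-1-2(\cos k)\lambda=0$, hence $A_0=-1<0$ while $A_1=3>0$. What the paper's proof actually delivers is the signed formula \eqref{24/04/03/14:11}, namely $A_m=(-1)^{(r+s(r)-2m-\star)/2}(B_m+C_m)$ with $B_m,C_m\geq 0$ and only the leading coefficient $A_{(r+s(r)-\star)/2}=\prod_i d_i$ shown positive; everything downstream uses only the magnitudes $|A_0|=1$, $|A_{m_0}|>2$ (Lemma \ref{24/03/29/09:59}) or squares $(A_{m_0}/2)^2$, together with the sign of $A_{m_0}$. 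Your own expansion yields this sign law with no additional work: starting from the top-degree term, each degree-two drop---replacing a pair of diagonal factors $-\lambda$ by a transposition weight, or replacing $-\lambda$ at a pendant index $i_p$ by the $+\lambda^{-1}/d_{i_p}$ part of $M_{i_p i_p}$---flips the overall sign exactly once, so a term of degree $2m+\star$ in the normalized polynomial carries sign $(-1)^{(r+s(r)-2m-\star)/2}$ times a nonnegative weight. Replace your proposed induction by this count and weaken ``nonnegative'' to the signed statement; that is both what is true and what the rest of the paper uses.
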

In this proposition, 
$r+s(r)-\star$ is always even; hence, $(r+s(r)-\star)/2 \in \mathbb{N}$.
To prove this proposition, we introduce a directed weighted graph $G_k = (V(G_k), A(G_k))$ for $L_k$ as follows.
We use the ordered pair $[x,y]$ of two vertices $x$ and $y$
to denote the directed edge whose initial and terminal vertices are $x$ and $y$, respectively.
The set of vertices $V(G_k)$ of $G_k$ is $V(G_k) = \{ 1,2,\cdots,r+s(r) \}$
and the set of directed edges of $G_k$ is $A(G_k) = \{ [i,i^\prime] \mid i \sim i^\prime,~ i,i^\prime \in V(G_k) \}$,
where $i \sim i^\prime$ if $i$ and $i^\prime$ satisfy one of the following three conditions:
\begin{itemize}
\item[(a)] There exists an $n$ such that $\iota^{-1}(n,i) \sim \iota^{-1}(n,i^\prime)$ as vertices of $G$.
\item[(b)] $(i,i^\prime) = (1,r)$ or $(i,i^\prime) = (r,1)$.
\item[(c)] $i = i^\prime$, $i = 1,2, \cdots, r+s(r)$. 
\end{itemize}
Note that in the case of (c), the edge $[i,i]$ is a loop.
We define the weight $a_{i,i^\prime}$ of the edge $[i,i^\prime]$ as
\[ a_{i,i^\prime} 
	= \begin{cases} 
		-d_i \lambda & \mbox{if $i = i^\prime$,} \\ 
		1 & \mbox{if $\iota^{-1}(n,i) \sim \iota^{-1}(n,i^\prime)$,} \\ 
		e^{ik} & \mbox{if $(i,i^\prime) = (r,1)$,} \\ 
		e^{-ik} & \mbox{if $(i,i^\prime) = (1,r)$}  \end{cases} \]
with $\lambda \in \mathbb{C}$.
As a notational convention, we set $a_{i,i^\prime} = 0$ if $[i,i^\prime] \not\in A(G_k)$.
We define square matrices of order $r+s(r)$ by
\[ A_k(\lambda) = A_k - \lambda D, \quad \lambda \in \mathbb{C},\]
where $A_k = (a_{i,i^\prime})$ and $D = {\rm diag}(d_1, \cdots, d_{r+s(r)})$.
We observe that from direct calculation,
\[ A_k(\lambda) = (a_{i,i^\prime}).
\]

We now establish
\begin{lemma}
\label{24/03/22/16:38}
{\rm
$\lambda$ is an eigenvalue of $L_k$ if and only if $A_k(\lambda)$ has a zero eigenvalue.
}
\end{lemma}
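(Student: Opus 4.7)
The plan is to reduce the problem to a routine verification that $A_k(\lambda)$ is essentially $D L_k - \lambda D$, so that $\ker(\lambda I - L_k) = \ker A_k(\lambda)$. Concretely, since $D = \mathrm{diag}(d_1,\dots,d_{r+s(r)})$ is invertible (every $d_i \ge 1$), the eigenvalue equation $L_k \psi = \lambda \psi$ is equivalent to $(D L_k - \lambda D)\psi = 0$. What remains is to recognise the matrix $D L_k - \lambda D$ as $A_k(\lambda)$, after which the claim reduces to the finite-dimensional fact that a square matrix has a nontrivial kernel iff $0$ is an eigenvalue.

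The key step I would carry out first is the identification $D L_k = A_k$ (where by $A_k$ I mean the off-diagonal part of the matrix $(a_{i,i'})$, so that $A_k(\lambda) = A_k - \lambda D$ has $-\lambda d_i$ on the diagonal and the off-diagonal weights $1$, $e^{ik}$, $e^{-ik}$). I would verify this block by block against formulas \eqref{24/03/29/11:38.0}--\eqref{24/03/29/11:38.4}. In the $L_{11}$ block, premultiplication by $D$ cancels the $1/d_i$ factors: the tridiagonal entries become $1$ and the corner entries become $e^{\pm ik}$, matching conditions (a) and (b) in the definition of $G_k$. In the $L_{12}$ block, the nonzero entry at $(i_p,p)$ becomes $d_{i_p}\cdot (1/d_{i_p}) = 1$, which corresponds exactly to the edge in $G$ joining $\iota^{-1}(n,i_p)$ to the pendant end vertex $\iota^{-1}(n,r+p)$, and symmetrically for $L_{21}$. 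Finally $L_{22}$ only has entries of the form $1/d_{r+p} = 1$ (since pendant vertices have degree one, $d_{r+p} = 1$), again in agreement with the incidence relations in $G_k$.

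Once $D L_k = A_k$ has been established, the logical chain is immediate:
\begin{align*}
L_k \psi = \lambda \psi
	&\iff D L_k \psi = \lambda D \psi \\
	&\iff (A_k - \lambda D)\psi = 0 \\
	&\iff A_k(\lambda)\psi = 0,
\end{align*}
where the first equivalence uses that $D$ is invertible. Hence $\lambda \in \sigma_{\mathrm p}(L_k)$ iff $\ker A_k(\lambda) \neq \{0\}$, which, since $A_k(\lambda)$ is a finite square matrix, is equivalent to $0$ being an eigenvalue of $A_k(\lambda)$, i.e.\ to $\det A_k(\lambda) = 0$.

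There is no real obstacle here beyond the bookkeeping of the four blocks of $L_k$ and the sign conventions in the definition of the weights $a_{i,i'}$; the heart of the lemma is simply the observation that $L_k$ and $D^{-1}A_k$ coincide, so their spectra coincide and eigenvectors of $L_k$ with eigenvalue $\lambda$ are precisely the null vectors of $A_k(\lambda) = A_k - \lambda D$. This lemma then serves as the gateway to Proposition~\ref{24/03/19.17:14}: expanding $\det A_k(\lambda)$ along the pendant rows/columns will produce the polynomial in $\lambda$ with the $-2(\cos k)\lambda^{s(r)}$ contribution coming from the two corner entries $e^{\pm ik}$.
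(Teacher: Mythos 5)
Your proof is correct and follows essentially the same route as the paper: the paper's proof is exactly the identity $A_k(\lambda) = D(L_k-\lambda)$ together with the invertibility of $D$ (phrased via determinants, $\det(L_k-\lambda) = \det(D^{-1})\det A_k(\lambda)$, where you phrase it via kernels — an immaterial difference in finite dimensions). Your block-by-block verification of $DL_k = A_k$ is a welcome addition the paper omits; just note that since pendant vertices are not adjacent to one another, the block $L_{22}$ is in fact zero (consistent with $A_{22} = -\lambda I$ in the paper's Multiplicity subsection), so your remark about its entries $1/d_{r+p}$ really pertains to $L_{21}$.
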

\begin{proof}
Since $D L_k = A_k$, we have
\begin{equation}
\label{24/04/04/16:59} 
A_k(\lambda) = D(L_k - \lambda). 
\end{equation}
$\lambda$ is an eigenvalue of $L_k$ if and only if $ {\rm det} (L_k - \lambda) = 0$. 
Because $D$ is invertible and ${\rm det}(L_k - \lambda) = {\rm det} (D^{-1}) \cdot {\rm det}( A_k(\lambda))$,
we have the desired result.
\end{proof}
From Lemma \ref{24/03/22/16:38}, we know that the solution of the equation 
\[ {\rm det}(A_k(\lambda)) = 0 \]
is an eigenvalue of $L_k$. 
Note that
\begin{equation}
\label{24/03/33/16:35} 
{\rm det}(A_k(\lambda)) 
	= \sum_{\sigma \in \mathfrak{S}_{r+s(r)}} A_\sigma, 
\end{equation}
where  $\mathfrak{S}_n$ denotes the set of all permutations of degree $n$ and 
\[ A_\sigma = {\rm sgn} (\sigma) \prod_{i=1}^{r+s(r)} a_{i,\sigma(i)},
	\quad \sigma \in \mathfrak{S}_{r+s(r)}.\]

For $\sigma \in \mathfrak{S}_{r+s(r)}$,
there exist integers $1 \leq n \leq  r+s(r)$ and $n_j \geq 1$ ($j=1, \cdots, n$) such that
$r+s(r) = \sum_{j = 1}^n \mu_j$ and
\begin{equation}
\label{24/03/19/14:39}
\sigma = (i_1^{(1)}, \cdots, i_{\mu_1}^{(1)}) \cdots (i_1^{(n)}, \cdots, i_{\mu_n}^{(n)}),
\end{equation}
where, for $\mu_j \geq 2$, $(i_1^{(j)}, \cdots, i_{\mu_j}^{(j)})$ denotes
a cycle such that $i_1^{(j)}, \cdots, i_{\mu_j}^{(j)}$ satisfy
\[ \sigma(i_1^{(j)}) = i_2^{(j)}, \quad \sigma(i_2^{(j)}) = i_3^{(j)}, \cdots, \sigma(i_{\mu_j}^{(j)}) = i_1^{(j)}, \] 
and for $\mu_j=1$, $\sigma(i_{1}^{(j)}) = i_{1}^{(j)}$.
For $\sigma \in \mathfrak{S}_{r+s(r)}$, we define $n$ disjoint graphs $G_k^{(j)}$ ($j=1,2, \cdots,n$) by
\begin{equation}
\label{24/03/19/15:32} 
V(G_k^{(j)}) = \{ i_1^{(j)}, \cdots, i_{\mu_j}^{(j)} \}, 
	\quad A(G_k^{(j)}) = \{ [i_{\nu}, \sigma(i_{\nu})] \mid \nu = 1, \cdots, \mu_j \}. 
\end{equation}
We say that 
a finite directed graph $H=(V(H),E(H))$ with vertices $V(H) = \{h_1, \cdots, h_m \}$ 
is a cycle of length $m$
if  
\[ A(H) = \{ [h_1, h_2], [h_2, h_3], \cdots, [h_m,h_1] \}. \]
In particular, we call a cycle of length $1$ a loop.
We use $H_1 \dot{\cup} H_2$ to denote the disjoint union of two disjoint graphs $H_1$ and $H_2$,
{\rm i.e.} a graph whose vertices  and edges are $V(H_1 \dot{\cup} H_2) = V(H_1) \cup V(H_2)$
and $A(H_1 \dot{\cup} H_2) = A(H_1) \cup A(H_2)$, respectively.
We say that $H$ is a spanning subgraph of $G$ if $V(H) = V(G)$ and $A(H) \subset A(G)$.
From definition, we know that
$G_k^{(i)}$ ($j=1,\cdots,n$) are cycles, and
if $\mu_j = 1$, then $G_k^{(j)}$ is a loop.
Because $r+s(r) = \sum_{j = 1}^n \mu_j$ and the graphs $G_k^{(j)}$ ($j=1,2, \cdots,n$) are disjoint,
we have $V(G_k) = \cup_{j=1}^n V(G_k^{(j)})$.
Thus, we know that each $\sigma \in \mathfrak{S}_{r+s(r)}$
corresponds to a disjoint union $G_\sigma = G_k^{(1)} \dot{\cup} \cdots \dot{\cup} G_k^{(n)}$
with $G_k^{(j)}$ defined by \eqref{24/03/19/15:32}.
Let $\mathscr{G}_k$ be the set of graphs $\tilde{G}$ that satisfy $V(\tilde{G}) = V(G_k)$. 
Then we define a map $Q:\mathfrak{S}_{r+s(r)} \rightarrow \mathscr{G}_k$ as
\[ Q(\sigma) = G_\sigma, \quad \sigma \in \mathfrak{S}_{r+s(r)}. \] 
From its construction, the map $Q$ is injective.
Conversely, we suppose that disjoint graphs $G_k^{(j)}$ ($j=1,2, \cdots,n$) are cycles 
with vertices $\{ i_1^{(j)}, \cdots, i_{\mu_j}^{(j)} \} \subset V(G_k)$
such that $r+s(r) = \sum_{j = 1}^n \mu_j$.
Then the disjoint union $G_k^{(1)} \dot{\cup} \cdots \dot{\cup} G_k^{(n)}$
satisfies $V(G_k) = \cup_{j=1}^n V(G_k^{(j)})$.
Let $\sigma \in \mathfrak{S}_{r+s(r)}$ be a permutation defined by \eqref{24/03/19/14:39}.
Then we have $Q(\sigma) = G_\sigma = G_k^{(1)} \dot{\cup} \cdots \dot{\cup} G_k^{(n)}$.
Hence, we know that $Q$ is bijective.
The following lemma also holds:

\begin{lemma}
\label{2013/04/15/12:37}
{\rm 
Let $\sigma \in \mathfrak{S}_{r+s(r)}$, $A_\sigma$ and 
$G_\sigma = G_k^{(1)} \dot{\cup} \cdots \dot{\cup} G_k^{(n)}$ be as above.
Then $A_\sigma \not=0$ if and only if 
$G_\sigma$ is a spanning subgraph of $G_k$.
}
\end{lemma}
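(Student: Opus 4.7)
The plan is to prove this lemma by a direct unpacking of definitions, since both conditions turn out to be literal restatements of each other. The forward and backward implications will be handled together via a chain of equivalences.

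First, I would observe that by construction the edge set of $G_\sigma$ is precisely
\[ A(G_\sigma) = \bigcup_{j=1}^n A(G_k^{(j)}) = \{ [i,\sigma(i)] \mid i = 1, 2, \ldots, r+s(r) \}, \]
since walking around each cycle $(i_1^{(j)}, \ldots, i_{\mu_j}^{(j)})$ produces exactly the edges $[i, \sigma(i)]$ for $i \in V(G_k^{(j)})$ (including the case $\mu_j = 1$, which contributes the loop $[i_1^{(j)}, i_1^{(j)}]$). On the vertex side, $V(G_\sigma) = \bigcup_{j=1}^n V(G_k^{(j)}) = V(G_k)$, so $G_\sigma$ is automatically a spanning subgraph of some graph on $V(G_k)$; the only question is whether its edges lie in $A(G_k)$.

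Next, I would use the convention $a_{i,i'} = 0$ whenever $[i,i'] \notin A(G_k)$, together with the definition
\[ A_\sigma = \mathrm{sgn}(\sigma) \prod_{i=1}^{r+s(r)} a_{i,\sigma(i)}, \]
to deduce the chain of equivalences: $A_\sigma \neq 0$ iff every factor $a_{i,\sigma(i)}$ is nonzero (since $\mathrm{sgn}(\sigma) = \pm 1$), iff $[i,\sigma(i)] \in A(G_k)$ for every $i = 1, \ldots, r+s(r)$, iff $A(G_\sigma) \subset A(G_k)$, iff $G_\sigma$ is a spanning subgraph of $G_k$.

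There is no real obstacle here; the only subtlety worth flagging is the loop case, where one must check that loops $[i,i]$ are indeed edges of $G_k$ (condition (c) in the definition of $A(G_k)$) so that fixed points of $\sigma$ contribute nonzero diagonal weights $a_{i,i} = -d_i \lambda$. Since every $i \in V(G_k)$ carries such a loop, fixed points of $\sigma$ never obstruct the equivalence, and the lemma follows.
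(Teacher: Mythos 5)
Your proof is correct and follows essentially the same route as the paper's: both reduce $A_\sigma \neq 0$ to the nonvanishing of every factor $a_{i,\sigma(i)}$ (since $\mathrm{sgn}(\sigma) = \pm 1$) and then invoke the convention that $a_{i,i'} \neq 0$ exactly when $[i,i'] \in A(G_k)$. Your added checks --- that $V(G_\sigma) = V(G_k)$ automatically and that fixed points of $\sigma$ correspond to loops in $A(G_k)$ with weight $-d_i\lambda$ (nonzero as a polynomial in $\lambda$) --- merely make explicit what the paper leaves implicit.
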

\begin{proof}
It is clear that $A_\sigma \not= 0$ if and only if $a_{i,\sigma(i)} \not= 0$ for any $i \in \{ 1,2, \cdots, r+s(r) \}$.
Because, by definition, $a_{i,\sigma(i)} \not= 0$ if and only if $[i,\sigma(i)] \in A(G_k)$,
we have the desired result.
\end{proof}

We can now present the 
\begin{proof}[Proof of Proposition \ref{24/03/19.17:14}]
{\rm
By virtue of Lemma \ref{2013/04/15/12:37}, it suffices to consider the case where 
$G_\sigma = G_k^{(1)} \dot{\cup} \cdots \dot{\cup} G_k^{(n)}$ is a spanning subgraph of $G_k$.

Let $H_k$ be a cycle given by
\begin{align*} 
V(H_k) =  \{ 1,2, \cdots,r \},
	\quad A(H_k) = \{ [1,2], [2,3], \cdots, [r,1] \}
\end{align*}
We first consider a spanning subgraph $G_k^{(1)} \dot{\cup} \cdots \dot{\cup} G_k^{(n)}$ of $G_k$
that includes a cycle $H_k$.
Without loss of generality, we can set $H_k = G_k^{(1)}$.
In this case, $G_k^{(j)}$ ($j\geq 2$) cannot be cycles of length greater than $2$ 
because $V(G_k^{(j)}) \subset \{r+1, \cdots, r+s(r)\}$.
Hence, we observe that  $n = s(r)+1$ and $G_k^{(J+1)}$ ($j =1, \cdots, s(r)$) 
are loops with $V(G_k^{(J+1)}) = \{r+j\}$.
Let $\sigma_+ \in \mathfrak{S}_{r+s(r)}$ be 
the permutation such that $G_{\sigma_+} = G_k^{(1)} \dot{\cup} \cdots \dot{\cup} G_k^{(s(r)+1)}$
with $G_k^{(j)}$ ($j=1, \cdots, s(r) +1$) as mentioned above. 
Then we have $\sigma_+(1)=2$, $\sigma_+(2) = 3$, $\cdots$, $\sigma_+(r)=1$ and
\[ \prod_{i=1}^r a_{i,\sigma_+(i)} = e^{ik}. \]  
Because $\sigma_+(i)=i$ holds for $r+1 \leq i \leq r+s(r)$, 
we obtain 
\[ \prod_{i=r+1}^{r+s(r)} a_{i,\sigma_+(i)} = (-\lambda)^{s(r)}. \] 
Note that ${\rm sgn}\sigma_+ = (-1)^{r-1}$ since the length of the cycle $H_k$ is $r$.
Thus, we have
\[ A_{\sigma_+} = (-1)^{r-1}(-\lambda)^{s(r)} e^{ik}. \]
Similarly, we set $\sigma_- \in \mathfrak{S}_{r+s(r)}$ such that
$G_{\sigma_-} = G_k^{(1)} \dot{\cup} \cdots \dot{\cup} G_k^{(s(r)+1)}$
with $G_k^{(1)}$ replaced by $H_k^\prime$, 
where $H_k^\prime$ is a cycle given by
\begin{align*} 
 V(H_k^\prime) = \{ 1,2, \cdots,r \},
 	\quad A(H_k^\prime) = \{ [1,r], [r,r-1], \cdots, [2,1] \}. 
\end{align*}
Then we have
\[ A_{\sigma_-} = (-1)^{r-1}(-\lambda)^{s(r)} e^{-ik}. \]
Hence, we also have
\begin{equation}
\label{24/03/22/16:21} 
A_{\sigma_+} + A_{\sigma_-} = 2(-1)^{r+s(r)-1}(\cos k) \lambda^{s(r)}. 
\end{equation}
Note that $H_k$ (resp. $H_k^\prime$) is the only cycle of length greater than 3 
that contains $[1,r]$ (resp. $[r,1]$);
a cycle $H_0$ of length 2 that contains $[1,r]$ or $[r,1]$
is given by 
\[ V(H_0) = \{1,r\}, \quad A(H) = \{[1,r], [r,1] \}. \]
Hence, we know that if $A_\sigma$ depends on $k \in [-\pi,\pi]$,
then the edges $A(G_\sigma)$ of the spanning subgraph $G_\sigma$ includes either $H_k$ or $H_k^\prime$.
Let
\[ \mathfrak{S}_{r+s(r)}^\prime = \{ \sigma \in \mathfrak{S}_{r+s(r)} \mid \mbox{$\sigma(1) =r$, $\sigma(r)=1$} \}. \]
From the argument presented above, we have
\[
 \{ \sigma \in \mathfrak{S}_{r+s(r)} \mid \mbox{$\sigma(1) =r$ or $\sigma(r)=1$} \}
	= \mathfrak{S}_{r+s(r)}^\prime \cup \{ \sigma_+, \sigma_- \}. 
\]

Next we consider a cycle whose edges include both $[1,r]$ and $[r,1]$.
The cycle $H_0$ defined above is the only cycle that includes both $[1,r]$ and $[r,1]$.
We set $G_k^{(1)} = H_0$. Let $G_k^{(j)}$ ($j=2,\cdots,n$) be disjoint cycles and
assume that $\tilde{G} := H_{(0)} \dot{\cup} G_k^{(2)} \dot{\cup} \cdots \dot{\cup} G_k^{(n)}$
is a spanning subgraph of $G_k$. 
Because there is no cycle of length grater than 3 that does not include edges $[1,r]$ and $[r,1]$, 
the length of each $G_k^{(j)}$ ($j=2,\cdots,n$) is at most 2.
We now claim that the number of loops included in the spanning subgraph $\tilde{G}$ is even if $r+s(r)$ is even.
Note that if $r +s(r)$ is even, 
then $\#  V(G_k) \setminus V(H_0) = r+s(r)-2$, which is even. 
If the spanning subgraph $\tilde{G}$ has $2\nu+1$ loops, 
then cycles of length at most 2 must be constructed from the remaining $r+s(r)-2-(2m+1)$ vertices.
Because $r+s(r)-2-(2m+1)$ is odd, 
these cycles include an odd number of loops.
Thus, the claim is proven.
Similarly, we can prove that the number of loops included in the spanning subgraph $\tilde{G}$ is odd if $r+s(r)$ is odd.
Let $2m+\star$ ($0 \leq m \leq (r+s(r)-2-\star)/2$) be the number of loops included in the spanning subgraph $\tilde{G}$,
where $\star$ is defined as given in \eqref{24/04/22/12:03}.
Then the number of cycles of length $2$ included in the spanning subgraph $\tilde{G}$  is $(r+s(r)-(2m+\star))/2$.
Let $\sigma_{\tilde{G}}$ be the permutation which corresponds to $\tilde{G}$.
Then $\sigma_{\tilde{G}} \in \mathfrak{S}_{r+s(r)}$ and
we have ${\rm sgn}\sigma_{\tilde{G}} = (-1)^{1+(r+s(r)-(2m+\star))/2}$.
Because $\sigma_{\tilde{G}}(1) = r$ and $\sigma_{\tilde{G}}(r) = 1$,
it follows that $a_{1,\sigma_{\tilde{G}(1)}}a_{r,\sigma_{\tilde{G}(r)}} = e^{-ik} \cdot e^{ik} = 1$.
Hence, we can write
\begin{align} 
\sum_{\sigma \in \mathfrak{S}_{r+s(r)}^\prime}A_\sigma 
& = \sum_{m=0}^{(r+s(r)-2-\star)/2} B_m (-\lambda)^{2m+\star}(-1)^{(r+s(r)-(2m+\star))/2} \notag \\
\label{24/03/22/16:25} 
& = \sum_{m=0}^{(r+s(r)-2-\star)/2} B_m (-1)^{(r+s(r)+\star-2m)/2} \lambda^{2m+\star} 
\end{align}
with constants $B_m \geq 0$ that are independent of $k \in [-\pi,\pi]$.

Let us calculate $A_\sigma$ for $\sigma \in \mathfrak{S}_{r+s(r)}$ 
such that $\sigma \not\in \mathfrak{S}_{r+s(r)}^\prime$
and $\sigma \not=\sigma_\pm$.
In this case, we have $\sigma(1) \not=r$ and $\sigma(r)\not=1$.
By reasoning similar to that mentioned above, 
we can assume that there are $2m + \star$ loops 
and $(r+s(r) - (2m+\star))/2$ cycles of length $2$ in the spanning subgraph that corresponds to $\sigma$,
where $0 \leq m \leq (r+s(r)-\star)/2$.
Hence, we obtain
\begin{align}
\sum_{\sigma: \sigma(1)\not=r, \sigma(r)\not=1}A_\sigma 
& = \sum_{m=0}^{(r+s(r)-\star)/2} C_m (-\lambda)^{2m+\star}(-1)^{(r+s(r)-(2m+\star))/2} \notag \\
\label{24/03/22/16:26} 
& = \sum_{m=0}^{(r+s(r)-\star)/2} C_m (-1)^{(r+s(r)+\star-2m)/2} \lambda^{2m+\star}
\end{align}
with $C_m \geq 0$ independent of $k \in [-\pi,\pi]$.
We observe that 
\begin{equation}
\label{24/03/22/21:41} 
C_{(r+s(r)-\star)/2} = \prod_{i=1}^{r+s(r)} d_i > 0 
\end{equation}
because the cycles included in the corresponding spanning subgraph are loops.  

From \eqref{24/03/22/16:21}, \eqref{24/03/22/16:25} and \eqref{24/03/22/16:26},
we have
\begin{align*} 
\sum_{\sigma \in \mathfrak{S}_{r+s(r)}}A_\sigma 
	& =  C_{(r+s(r)-\star)/2} (-1)^\star \lambda^{r+s(r)} \\
		& \quad + \sum_{m=0}^{(r+s(r)-2-\star)/2} (B_m + C_m)(-1)^{(r+s(r)+\star-2m)/2} \lambda^{2m+\star} \\
			& \quad + 2(-1)^{r+s(r)-1}(\cos k) \lambda^{s(r)}
\end{align*}
Based on Lemma \ref{24/03/22/16:38} and \eqref{24/03/33/16:35}, 
we know that $\lambda$ is an eigenvalue of $L_k$
if and only if 
\begin{align*} 
& C_{(r+s(r)-\star)/2} \lambda^{r+s(r)} \\
		& \quad + \sum_{m=0}^{(r+s(r)-2-\star)/2} (-1)^{-\star + (r+s(r)+\star-2m)/2} (B_m + C_m) \lambda^{2m+\star} \\
			& \quad  - 2 (-1)^{-\star + r+s(r)}(\cos k) \lambda^{s(r)} = 0.
\end{align*}
Let $A_{(r+s(r)-\star)/2} = C_{(r+s(r)-\star)/2}$ and 
\begin{equation}
\label{24/04/03/14:11}
A_m =  (-1)^{(r+s(r)-2m-\star)/2}(B_m + C_m), \quad 0 \leq m \leq (r+s(r)-2-\star)/2.
\end{equation}
From \eqref{24/03/22/21:41}, we know that $A_{(r+s(r)-\star)/2}$ is strictly positive.
Since $-\star + r + s(r)$ is always even and $(-1)^{-\star + r+s(r)} = 1$,
Proposition \ref{24/03/19.17:14} is proven.
}
\end{proof}

\subsection{Lowest degree of the characteristic polynomial}
In this subsection, we prove the following proposition:
\begin{proposition}
\label{24/03/27/10:36}
{\rm
Let $A_m$ be as in Proposition \ref{24/03/19.17:14}
and let $q(r)$ be defined by \eqref{24/03/24/21:37}.
Then the following holds:  
\[ 
q(r) = \inf\{ 2m+\star \mid A_m \not= 0 \}.
\] 
}
\end{proposition}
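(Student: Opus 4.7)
The plan is to interpret $A_m$ combinatorially via the proof of Proposition~\ref{24/03/19.17:14} and then reduce the claim to a maximum-matching problem on an auxiliary tree.

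From that proof, for $0 \le m < (r+s(r)-\star)/2$ one has $A_m = (-1)^{(r+s(r)-2m-\star)/2}(B_m+C_m)$, and in addition $A_{(r+s(r)-\star)/2} = C_{(r+s(r)-\star)/2} = \prod_{i=1}^{r+s(r)} d_i > 0$. Crucially, each $B_m$ and $C_m$ is a sum with strictly positive weights $\prod_{i\,\text{loop}} d_i$, the sum ranging over spanning subgraphs $G_\sigma$ of $G_k$ that consist only of loops and length-$2$ cycles, exclude the long cycles $H_k$ and $H_k^\prime$, and contain exactly $2m+\star$ loops. Positivity then gives $A_m \neq 0$ if and only if at least one such spanning subgraph exists, so
\[
\inf\{2m+\star : A_m \neq 0\} = \min_\sigma \#\{\text{loops in } G_\sigma\},
\]
the minimum being taken over all admissible spanning subgraphs of $G_k$.

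Next I would rephrase this minimum as a matching problem. Let $T$ be the undirected graph on $\{1,\ldots,r+s(r)\}$ with path edges $\{j,j+1\}$ for $j=1,\ldots,r-1$, pendant edges $\{i_p,r+p\}$ for $p=1,\ldots,s(r)$, and the optional edge $\{1,r\}$ (to accommodate the $H_0$-cycle). An admissible spanning subgraph $G_\sigma$ corresponds bijectively to a matching $M$ in $T$, with its loops playing the role of unmatched vertices; the loop count equals $r+s(r)-2|M|$, so minimizing loops is the same as maximizing $|M|$.

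The canonical matching pairs each pendant $r+p$ with its base $i_p$ and then matches greedily inside each remaining path segment $U_i$; this uses $s(r)+\sum_i \lfloor l_i/2 \rfloor$ edges and leaves exactly $\sum_i(l_i \bmod 2) = q(r)$ unmatched vertices, giving the upper bound $\min_\sigma \#\{\text{loops}\} \le q(r)$. The main obstacle is optimality: I must show via an exchange argument that neither breaking a pendant edge so that $i_p$ absorbs an adjacent $U_i$-vertex, nor including the edge $\{1,r\}$ to form an $H_0$-cycle, can push the unmatched count below $q(r)$. The key is a parity observation — diverting $i_p$ into $U_i$ shifts the effective length of that segment by one (swapping any odd-parity surplus to a different location) while simultaneously orphaning the pendant $r+p$ — so every such exchange preserves or increases the unmatched total. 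This will establish the matching lower bound $q(r)$ and complete the identification with $\inf\{2m+\star : A_m \neq 0\}$.
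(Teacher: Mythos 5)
Your proposal is correct in substance and, once unwrapped, runs substantively parallel to the paper's proof: your canonical matching (pendants matched to their bases, greedy matching inside each segment $U_i$) is exactly the paper's spanning subgraph $K$ of \eqref{24/03/28/16:13}, giving $\inf\{2m+\star \mid A_m \neq 0\} \leq q(r)$ as in Lemma \ref{24/03/29/09:59}; and your positivity observation (no cancellation among the weights $\prod_{i\,\mathrm{loop}} d_i$, so $A_m \neq 0$ iff an admissible spanning subgraph with $2m+\star$ loops exists) is the paper's \eqref{24/03/22/16:25}--\eqref{24/03/22/16:26}. Where you genuinely repackage things is the matching language: the paper never names the loops-and-$2$-cycles subgraphs as matchings, and instead proves the lower bound by the nested-minimum induction of Lemma \ref{24/03/28/14:03} with its case analysis (i)--(iv) -- which is precisely your ``divert $i_p$ into an adjacent segment'' exchange, with the same parity bookkeeping (odd segment: net $0$; even segment: net $+2$). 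The one soft spot in your write-up is that this final step is a promissory note: a local claim that ``no single exchange helps'' does not by itself bound an \emph{arbitrary} matching, since several bases may divert simultaneously and the edge $\{1,r\}$ merges the first and last segments -- exactly the complications the paper's induction over the families $\mathcal{K}(\{H_k^{(j_i)}\})$ is designed to absorb. The real payoff of your reformulation is that it can be closed in one line, more cleanly than either your exchange sketch or the paper's Lemma: deleting the base set $X_r(G)$ (of size $s(r)$) from your auxiliary graph $T$ leaves the $s(r)$ isolated pendant vertices plus the path segments $U_i$, of which $q(r)$ are odd, so the Berge--Tutte deficiency formula gives
\[ \#\{\mbox{unmatched}\} \;\geq\; o(T - X) - |X| \;=\; \bigl(s(r) + q(r)\bigr) - s(r) \;=\; q(r), \]
matching your upper bound and completing the proof without any case analysis. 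I recommend you finish the argument this way rather than via exchanges.
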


To prove this proposition, let $U_r(G) = \bigcup_{i=1}^{p(r)} U_i$, where
\[ U_i = \left\{ \left(m_{1+ \sum_{j=1}^{i-1} l_j},0 \right), \left(m_{2+ \sum_{j=1}^{i-1} l_j},0 \right),
	\cdots, \left(m_{\sum_{j=1}^{i} l_j},0 \right) \right\} \]
is the decomposition defined in \eqref{24/03/01/15:53} and \eqref{24/03/24/15:56}.
Then we define a corresponding decomposition for $G_k$ by
\[ U_k = \bigcup_{i=1}^{p(r)} U_k^{(i)}, \]
where
\[ U_k^{(i)} = \left\{ m_{1+ \sum_{j=1}^{i-1} l_j}, m_{2+ \sum_{j=1}^{i-1} l_j},
	\cdots, m_{\sum_{j=1}^{i} l_j} \right\}. \]
For each $U_k^{(i)}$, we define cycles $K^{(i)}_\alpha$ 
whose vertices are included in $U_k^{(i)}$ as follows: 
If $\#  U_k^{(i)} = \#  U_i$ is even, then we set
\[ V(K^{(i)}_\alpha) = \{ m_{2\alpha-1 + \sum_{j=1}^{i-1} l_j}, m_{2\alpha+ \sum_{j=1}^{i-1} l_j} \},
	\quad  
	1 \leq \alpha \leq m_{\sum_{j=1}^{i} l_j}/2. \]
If $\#  U_k^{(i)} = \#  U_i$ is odd, then we define a loop $K^{(i)}_1$ 
with $V(K^{(i)}_1) = \{ m_{1+ \sum_{j=1}^{i-1} l_j} \}$
and  cycles $K^{(i)}_\alpha$ of length 2 
with
\[ V(K^{(i)}_\alpha) = \{ m_{2\alpha + \sum_{j=1}^{i-1} l_j}, m_{1 + 2\alpha+ \sum_{j=1}^{i-1} l_j} \},
	\quad  
	1 \leq \alpha \leq (m_{\sum_{j=1}^{i} l_j}-1)/2. \]
Let $1 = i_1 < i_2 < \cdots < i_{s(r)}$ be as in \eqref{24/03/27/16:20}.
Then, by definition, we know that $i_j \not\in U_k$ and 
\[ [i_j, r+j], [r+j,i_j] \in A(G_k), \quad j=1, \cdots,s(r). \]
We define cycles $H_k^{(j)}$ ($j=1, \cdots,s(r)$) of length $2$ with
\[ V(H_k^{(j)}) = \{ i_j, r+j \}, \quad A(H_k^{(j)}) = \{[i_j, r+j], [r+j,i_j] \}. \]
Since $V(G_k) \setminus ( \cup_j V(H_k^{(j)})) = U_k$, we have a spanning graph
\begin{equation} 
\label{24/03/28/16:13}
K = \left( \dot{\bigcup_j} H_k^{(j)} \right) \dot{\cup} \left( \dot{\bigcup_{i,\alpha}} K_\alpha^{(i)} \right). 
\end{equation}
By definition, $K$ has exactly $q(r)$ loops. 
In general, $K$ is not a unique spanning subgraph of $G_k$ that has exactly $q(r)$ loops. 
Indeed,  if $q(r) \geq 1$, then we can construct such a spanning subgraph of $G_k$ as follows:
Since $q(r) \geq 1$, there is a number $i_0$ such that $\#  U_k^{(i)}$ is odd. 
Hence, by definition, $ K_1^{(i_0)}$ is a loop with the vertex $m_{1+ \sum_{j=1}^{i-1} l_j}$.
Then there is a cycle $H_k^{(j_0)}$ with vertices $i_{j_0}, r+j_0$ such that 
\[ [i_{j_0}, m_{1+ \sum_{j=1}^{i-1} l_j} ] \in A(G_k). \]
Hence, we can define a spanning subgraph $K^\prime$ with exactly $q(r)$ loops by
\[ K^\prime
= \left( \dot{\bigcup_{j \not= j_0} } H_k^{(j)} \right) \dot{\cup} \left( \dot{\bigcup_{(i,\alpha) \not= (i_0,1)}} K_\alpha^{(i)} \right)
	\dot{\cup} M, \]
where $M$ is the disjoint union of a loop with vertex $r+j_0$ 
and a cycle with vertices $i_{j_0}$ and $m_{1+ \sum_{j=1}^{i-1} l_j}$.

Let 
\[ m_0 = \frac{1}{2}(q(r) - \star). \]
Since $q(r) -\star$ is always even, we know that $m_0$ is a nonnegative integer.
In particular, $m_0 = 0$ if $q(r)=0$ or $q(r)= 1$.
We shall now prove
\begin{lemma}
\label{24/03/29/09:59}
{\rm
The following hold:
\begin{itemize}
\item[(a)] If $q(r) = 0$, then $|A_0| = 1$.
\item[(b)] If $q(r)\geq 1$, then  $|A_{m_0}| > 2$.
\end{itemize}
In particular, we have 
\begin{equation}
\label{24/03/28/10:40.1}  
\inf\{ 2m+\star \mid A_m \not= 0 \} \leq q(r). 
\end{equation}
}
\end{lemma}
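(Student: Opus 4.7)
The plan is to interpret $B_m+C_m$ combinatorially in terms of spanning subgraphs of $G_k$. From the proof of Proposition~\ref{24/03/19.17:14}, any $\sigma$ with $\sigma(1)\neq r$, $\sigma(r)\neq 1$ or with $\sigma(1)=r$, $\sigma(r)=1$ corresponds under $Q$ to a spanning subgraph whose components are loops and cycles of length $2$ (with $H_0$ absent, respectively present). Each loop contributes a factor $-d_i\lambda$, each length-$2$ cycle contributes $+1$ (the factors $e^{\pm ik}$ in $H_0$ cancel), and $\mathrm{sgn}(\sigma)$ depends only on the number of length-$2$ cycles. A direct computation parallel to that behind \eqref{24/03/22/16:25}--\eqref{24/03/22/16:26} then shows that $B_m$ and $C_m$ are precisely the sums of $\prod_{i\in\mathrm{loops}}d_i$ over the spanning subgraphs with exactly $2m+\star$ loops that do, respectively do not, contain $H_0$. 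Both quantities are nonnegative, and consequently $|A_m|=B_m+C_m$.

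For part~(a), $q(r)=0$ forces every $l_i$ to be even, $r+s(r)$ to be even, and $\star=m_0=0$. Each pendant $r+j$ has only $i_j$ as its non-loop neighbor, so a $0$-loop spanning subgraph must include $H_k^{(j)}$ for every $j$; the leftover vertices split into disjoint even-length paths $U_k^{(i)}$, each admitting a unique perfect matching into consecutive pairs. Moreover $H_0$ cannot appear in a $0$-loop subgraph, because pairing $\{1,r\}$ via $H_0$ isolates the pendant $r+1$ (whose only neighbor in $G_k$ is $1$) and forces a loop there. Thus $K$ is the unique $0$-loop spanning subgraph, so $B_0=0$, $C_0=1$ (the empty product over no loops), and $|A_0|=1$.

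For part~(b) with $q(r)\ge 1$, I will exhibit two distinct spanning subgraphs with exactly $q(r)=2m_0+\star$ loops. The canonical $K$ from \eqref{24/03/28/16:13} places its $q(r)$ loops at the first vertex of each odd-length $U_k^{(i)}$; these are non-pendant backbone vertices of degree $2$, so $K$ contributes $2^{q(r)}$ to $C_{m_0}$. The modified subgraph $K'$ described just below \eqref{24/03/28/16:13} moves one loop from a backbone vertex to the pendant $r+j_0$ of degree $1$, and so contributes $2^{q(r)-1}$. Consequently
\[ |A_{m_0}|=B_{m_0}+C_{m_0}\ge 2^{q(r)}+2^{q(r)-1}=3\cdot 2^{q(r)-1}\ge 3>2. \]
Since $2m_0+\star=q(r)$ and $A_{m_0}\neq 0$ in either case, \eqref{24/03/28/10:40.1} follows. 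The main subtlety lies in the first paragraph: one must confirm that the sign of $A_\sigma$ over all spanning subgraphs consisting solely of loops and length-$2$ cycles is the uniform factor $(-1)^{(r+s(r)-2m-\star)/2}$, independent of which edges are used, so that the degree-weighted count really equals $|A_m|$. Once this uniform sign is established, the uniqueness argument in (a) and the two-subgraph comparison in (b) are direct bookkeeping.
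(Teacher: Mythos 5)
Your proof is correct and follows essentially the same route as the paper: the paper likewise reduces $|A_{m_0}|$ to a sign-uniform sum of degree-weighted counts $a_{\sigma_{\tilde K}}$ over spanning subgraphs with exactly $q(r)$ loops, proves (a) by showing $K$ is the unique loop-free spanning subgraph, and proves (b) by comparing the two subgraphs $K$ and $K'$. Your bookkeeping is in fact slightly sharper (the explicit weights $2^{q(r)}$ and $2^{q(r)-1}$ give $|A_{m_0}|\geq 3\cdot 2^{q(r)-1}$, where the paper only records $a_{\sigma_K}\geq 2$ and $a_{\sigma_{K'}}\geq 1$), but the underlying argument is the same.
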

\begin{proof}
Let $a_{\sigma}$ denote the absolute value of the coefficient of $A_\sigma$.
Then for any spanning subgraph $\tilde{K}$ of $G_k$  which has exactly $q(r)$ loops,
we have $A_{\sigma_{\tilde{K}}} = ({\rm sgn}\sigma_{\tilde{K}}) a_{\sigma_{\tilde{K}}} \lambda^{q(r)}$.
We assume that $\sigma_{\tilde{K}} \not= \sigma_\pm$ if $q(r) = s(r)$.
Because every $\tilde{K}$ has the same number of cycles of length $2$ as $K$, 
we know that $\sigma_{\tilde{K}} =  \sigma_{K}$.
Hence we have $|A_{m_0}| = \sum_{\tilde{K}} a_{\sigma_{\tilde{K}}}$.

To prove (a), we assume that $q(r) = 0$. 
Then $\#  U_k^{(i)}$ is even for $i=1,\cdots,p(r)$
and all disjoint subgraphs included in $K$ are cycles of length $2$.
Since all the spanning subgraphs of $G_k$ that do not have loops include $H_k^{(j)}$ ($j=1,\cdots,s(r)$), 
$K$ is a unique spanning subgraph of $G_k$ without loops.
Hence $|A_0| = a_{\sigma_K} = 1$. 

To prove (b), we assume that $q(r)\geq 1$. 
Then, as mentioned above, $K$ and $K^\prime$ have exactly $q(r)$ loops.
We know $a_{\sigma_K} \geq 2$, 
because $K_1^{(i_0)}$ is the loop with the vertex $m_{1+ \sum_{j=1}^{i-1} l_j}$
and ${\rm deg}~ m_{1+ \sum_{j=1}^{i-1} l_j} = 2$.
Hence, $|A_{m_0}| \geq a_{\sigma_K} + a_{\sigma_{K^\prime}} > 2$.

Because $q(r) = 2m_0+\star$,
\eqref{24/03/28/10:40.1} is obtained from (a) and (b).
\end{proof}

We use $ \mathcal{K}(\{M^{(i)} \}_{i=1}^s) $ to denote the set of all spanning subgraphs of $G_k$ that include 
disjoint graphs $M^{(i)}$ ($i=1, \cdots,s$).
Let $q(\tilde{G})$ be the number of loops included in a spanning subgraph $\tilde{G}$ of $G_k$.
For any spanning subgraph $\tilde{G}$ in $\mathcal{K}(\{H_k^{(j)}\}_{j=1}^{s(r)})$,
the lengths of the cycles included in $\tilde{G}$ are all less than or equal to $2$
because $[1,r+1] \in A(H_k^{(1)})$.
Hence, $q(\tilde{G})$ equals the number of loops with a vertex in $U_k$.
If $l_i = \#  U_k^{(i)}$ is odd, 
then $\tilde G \in \mathcal{K}(\{H_k^{(j)}\}_{j=1}^{s(r)})$ has at least one loop with a vertex of $U_k^{(i)}$;
if $l_i = \#  U_k^{(i)}$ is odd, 
then $\tilde G$ contains no loop with a vertex of $U_k^{(i)}$.
Therefore, $q(r)$ is the minimum of the number of loops included in a spanning subgraph in $\mathcal{K}(\{H_k^{(j)}\}_{j=1}^{s(r)})$:
\begin{equation} 
\label{24/03/38/14:29}
q(r) = \min \{ q(\tilde{G}) \mid \tilde{G} \in \mathcal{K}(\{H_k^{(j)}\}_{j=1}^{s(r)}) \}.
\end{equation}
Let $1 \leq j_1 < j_2 < \cdots < j_s \leq s(r)$ ($1 \leq s \leq s(r)$)
and $i_0 = 1,\cdots,s$.
Because $\mathcal{K}(\{H_k^{(j_i)}\}_{i = 1}^s) \subset \mathcal{K}(\{H_k^{(j_i)}\}_{i \not= i_0})$,
we have 
\begin{equation}
\label{24/03/28/17:05} 
\min \{  q(\tilde{G}) \mid 
		\tilde{G} \in \mathcal{K}(\{H_k^{(j_i)}\}_{i \not= i_0}) \}
	\leq \min \{  q(\tilde{G}) \mid \tilde{G} \in \mathcal{K}(\{H_k^{(j_i)}\}_{i = 1}^s) \}. 
\end{equation}
Indeed, the converse inequality of \eqref{24/03/28/17:05} holds as well as the following:
\begin{lemma}
\label{24/03/28/14:03}
{\rm
Let $1 \leq j_1 < j_2 < \cdots < j_s \leq s(r)$ ($1 \leq s \leq s(r)$).
Then 
\begin{align*} 
\min_{i_0=1,2,\cdots,s} 
	\min \{  q(\tilde{G}) \mid 
		\tilde{G} \in \mathcal{K}(\{H_k^{(j_i)}\}_{i \not= i_0}) \} 
	= \min \{  q(\tilde{G}) \mid \tilde{G} \in \mathcal{K}(\{H_k^{(j_i)}\}_{i = 1}^s) \}. 
\end{align*}
}
\end{lemma}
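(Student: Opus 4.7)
The plan is to establish the reverse inequality to \eqref{24/03/28/17:05}. Since $\mathcal{K}(\{H_k^{(j_i)}\}_{i=1}^s) \subset \mathcal{K}(\{H_k^{(j_i)}\}_{i \ne i_0})$, the direction $\leq$ is already in hand, so it suffices to show that for every $i_0 \in \{1,\ldots,s\}$ and every $\tilde G \in \mathcal{K}(\{H_k^{(j_i)}\}_{i \ne i_0})$ there is a modified spanning subgraph $\tilde G' \in \mathcal{K}(\{H_k^{(j_i)}\}_{i=1}^s)$ with $q(\tilde G') \leq q(\tilde G)$. Denoting the two minima in the statement by $A_{i_0}$ and $B$, this will immediately give $A_{i_0} \geq B$ for every $i_0$, and hence equality.

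To construct $\tilde G'$, I first note that if $\tilde G$ already contains $H_k^{(j_{i_0})}$ then I may simply take $\tilde G' = \tilde G$, so I may assume otherwise. Because vertex $r + j_{i_0}$ has degree one in $G$, the only cycles of $G_k$ passing through it are the loop at $r+j_{i_0}$ and the $2$-cycle $H_k^{(j_{i_0})}$; consequently $\tilde G$ must contain the loop at $r+j_{i_0}$, and $i_{j_{i_0}}$ must lie in some other cycle $C$ of $\tilde G$. I will then run a case analysis on the length of $C$, using the structural fact (implicit in the proof of Proposition \ref{24/03/19.17:14}) that the only cycles of $G_k$ of length greater than two are $H_k$ and $H_k^\prime$.

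If $C$ is the loop at $i_{j_{i_0}}$, I replace both loops (at $i_{j_{i_0}}$ and at $r + j_{i_0}$) by $H_k^{(j_{i_0})}$, yielding $q(\tilde G') = q(\tilde G) - 2$. If $C$ is a $2$-cycle sharing a lattice neighbor $v$ of $i_{j_{i_0}}$, I delete $C$ and the loop at $r+j_{i_0}$, then insert $H_k^{(j_{i_0})}$ together with the loop at $v$; this keeps $q$ constant. In both situations the modified $\tilde G'$ lies in $\mathcal{K}(\{H_k^{(j_i)}\}_{i=1}^s)$, as required.

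The delicate case is when $C$ has length greater than two. By the structural fact above, $C \in \{H_k, H_k^\prime\}$ and $V(C) = \{1,\ldots,r\}$; since $C$ absorbs every $i_{j_i}$, this case arises only when $s=1$, in which situation every $r+j$ is forced to be a loop in $\tilde G$ and so $q(\tilde G) = s(r)$. I will then abandon the local swap and instead define $\tilde G'$ as the disjoint union of $H_k^{(1)}, \ldots, H_k^{(s(r))}$ together with an optimal covering of $U_r(G)$ by loops and $2$-cycles; this lies in $\mathcal{K}(\{H_k^{(j_1)}\})$ and satisfies $q(\tilde G') = q(r)$. The required inequality then reduces to the combinatorial bound $q(r) \leq s(r)$, which I would derive from $q(r) \leq p(r) \leq s(r)$: in the cyclic order on $\{1,\ldots,r\}$ each connected block of $U_r(G)$ is separated from the next by at least one vertex of $X_r(G)$. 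The main obstacle is precisely this third case, where the local loop-swap breaks down and must be replaced by a global rebuild; the bound $q(r) \leq s(r)$ is what makes the rebuild cost-free.
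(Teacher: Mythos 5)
Your proof is correct, but it takes a genuinely different route from the paper's. The paper also reduces the lemma to the converse of \eqref{24/03/28/17:05}, but then (working only with $s=s(r)$ and $1<j_0<s(r)$, the remaining cases being declared similar) it splits $\mathcal{K}(\{H_k^{(j)}\}_{j\not=j_0})$ into $\mathcal{K}(\{H_k^{(j)}\}_{j=1}^{s(r)})$ and the family $\mathcal{K}(\{\tilde{H}_k^{(j)}\}_{j=1}^{s(r)})$ of subgraphs forced to contain the loop at $r+j_0$, and proves the key inequality $q_0(r)\geq q(r)$ by a four-case analysis on whether the neighbours $i_{j_0}\pm 1$ lie in blocks of $U_k$ or in adjacent pendant cycles, explicitly recomputing the number of odd blocks ($\tilde{q}$, $\hat{q}$) after merging $i_{j_0}$ into the neighbouring $U_k^{(i)}$'s. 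You instead argue pointwise by surgery: every $\tilde{G}$ in the relaxed family is converted into a member of the constrained family without increasing the loop count, your two local swaps (loop at $i_{j_{i_0}}$ plus loop at $r+j_{i_0}$ replaced by $H_k^{(j_{i_0})}$, and a $2$-cycle $\{i_{j_{i_0}},v\}$ plus the loop at $r+j_{i_0}$ replaced by $H_k^{(j_{i_0})}$ plus the loop at $v$) substituting for the paper's parity bookkeeping; your degenerate case $C\in\{H_k,H_k^\prime\}$, which forces $s=1$ by vertex-disjointness of the cycles of $\tilde{G}$, is closed by rebuilding with the subgraph $K$ of \eqref{24/03/28/16:13} and the bound $q(r)\leq p(r)\leq s(r)$, which the paper itself records in Subsection \ref{24/03/28/21:06}. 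Your route buys uniformity — it handles all $1\leq s\leq s(r)$ and the boundary positions $j_0=1,s(r)$ at once, and it recovers the paper's estimate \eqref{24.07.09.9:33} as a byproduct of the loop--loop swap — whereas the paper's computation gives finer quantitative information on how the loop count changes in each configuration. Two one-line verifications are worth writing into your argument, though neither is a gap: in the $2$-cycle case the neighbour $v$ cannot equal $i_{j_i}$ for any $i\not=i_0$, since $H_k^{(j_i)}\subset\tilde{G}$ already covers that vertex and the cycles of $\tilde{G}$ are disjoint, so inserting the loop at $v$ (which exists by condition (c)) leaves all required cycles intact; and the structural fact that the only cycles of $G_k$ of length greater than $2$ are $H_k$ and $H_k^\prime$ is exactly the observation established in the proof of Proposition \ref{24/03/19.17:14}, so you may cite it rather than treat it as implicit.
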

Before proving Lemma \ref{24/03/28/14:03}, we complete the proof of Proposition \ref{24/03/27/10:36}:
\begin{proof}
{\rm
We only require to prove the converse of the inequality \eqref{24/03/28/10:40.1}.
From Lemma \ref{24/03/28/14:03} and \eqref{24/03/38/14:29}, 
we know that
\[ \min_{s=1, \cdots, s(r)}~\min_{1 \leq j_1 < \cdots < j_s \leq s(r)} 
		\min \{  q(\tilde{G}) \mid \tilde{G} \in \mathcal{K}(\{H_k^{(j_i)}\}_{i = 1}^s) \} = q(r). \]
Note that the set of all spanning subgraphs of $G_k$ is
\[ \{ G_{\sigma_+}, G_{\sigma_-} \} \cup 
		\left[ \bigcup_{s=1}^{s(r)}~\bigcup_{1 \leq j_1 < \cdots < j_s \leq s(r)}
			\mathcal{K}(\{H_k^{(j_i)} \}_{i=1}^s) \right]. \]
Since the number of loops included in a spanning subgraph $\tilde{G}$ equals
the degree of $A_{Q^{-1}(\tilde{G})}$ in $\lambda$,
we have
\[ \inf\{ 2m+\star \mid A_m \not= 0 \} \geq q(r). \]
This completes the proof of Proposition \ref{24/03/27/10:36}.
}
\end{proof}
\begin{proof}[Proof of Lemma \ref{24/03/28/14:03}.]
We prove the lemma for the case where $s=s(r)$;
the other case can be similarly proven.
It suffices to prove that
\begin{equation}
\label{24/03/28/15:03} 
\min \{  q(\tilde{G}) \mid \tilde{G} \in \mathcal{K}(\{H_k^{(j)}\}_{j \not= j_0}) \} \geq q(r). 
\end{equation}
If $\tilde{G} \in \mathcal{K}(\{H_k^{(j)}\}_{j \not= j_0}) \setminus \mathcal{K}(\{H_k^{(j)}\}_{j =1}^{s(r)})$,
then $\tilde{G}$ includes the loop  with the vertex $r+j_0$,
because $\tilde{G}$ does not include $H_k^{(j_0)}$.
Hence, we have
\[ \mathcal{K}(\{H_k^{(j)}\}_{j \not= j_0}) 
	= \mathcal{K}(\{H_k^{(j)}\}_{j =1}^{s(r)}) \cup \mathcal{K}(\{\tilde{H}_k^{(j)}\}_{j =1}^{s(r)}), \]
where $\tilde{H}_k^{(j)} = H_k^{(j)}$ ($j \not= j_0$)
and $\tilde{H}_k^{(j_0)}$ denotes the loop with vertex $r+j_0$. 
Thus, we observe that 
\[ \min \{  q(\tilde{G}) \mid \tilde{G} \in \mathcal{K}(\{H_k^{(j)}\}_{j \not= j_0}) \} 
	= \min \{ q(r), q_0(r) \}, \]
where 
\[ q_0(r) =  \min \{  q(\tilde{G}) \mid \tilde{G} \in \mathcal{K}(\{ \tilde{H}_k^{(j)}\}_{j=1}^{s(r)}) \}. \]
Therefore, \eqref{24/03/28/15:03} will be proven if the following holds:
\begin{equation}
\label{24/03/29/0:46}
q_0(r) \geq q(r). 
\end{equation}

We will only prove \eqref{24/03/29/0:46} for the case where $1 < j_0 < s(r)$;
\eqref{24/03/29/0:46} can be proven in a similar manner for $i_0=1, s(r)$.
The following cases are possible for $\tilde{G} \in \mathcal{K}(\{\tilde{H}_k^{(j)}\}_{j =1}^{s(r)})$:
\begin{itemize}
\item[(i)] There exists an $i_1$ such that $i_{j_0}- 1 \in U_k^{(i_1)}$ and $i_{j_0}+1 \in U_k^{(i_1+1)}$;
\item[(ii)] There exists an $i_1$ such that $i_{j_0} - 1 \in H_k^{(j_0-1)}$ and $i_{j_0} + 1 \in H_k^{(j_0+1)}$;
\item[(iii)] There exists an  $i_1$ such that $i_{j_0} - 1 \in U_k^{(i_1)}$ and $i_{j_0} + 1 \in H_k^{(j_0+1)}$;
\item[(iv)] There exists an $i_1$ such that $i_{j_0} - 1 \in H_k^{(j_0-1)}$ and $i_{j_0} + 1  \in U_k^{(i_1)}$.
\end{itemize}
We consider case (i) first. 
Since there is no cycle of length grater than $2$, 
we need only consider the following three subcases:
$\tilde{G} \in \mathcal{K}(\{\tilde{H}_k^{(j)}\}_{j = 1}^{s(r)})$
includes 
(a) the loop with the vertex $i_{j_0} \in V(\tilde{G})$,
(b) the cycle with the vertices $i_{j_0}$ and $i_{j_0}-1$, and
(c) the cycle with the vertices $i_{j_0}$ and $i_{j_0}+1$. 
If (a) holds, then so does the following:
\begin{equation}
\label{24.07.09.9:33}
q(\tilde{G}) \geq 2 + q(r).
\end{equation} 
Next, we consider the cases (b) and (c).
We define
\[ \tilde{U}_k^{(i)} 
	: = \begin{cases}
		U_k^{(i)}, & 1 \leq i \leq i_1 -1, \\
		U_k^{i_1} \cup \{ i_{j_0} \} \cup U_k^{i_1 +1}, & i=i_1, \\
		U_k^{(i+1)}, & i \geq i_1+1.
		\end{cases} \]
Then we have
\[ U_k \cup \{ i_{j_0} \} = \bigcup_{i=1}^{p(r)-1} \tilde{U}_k^{(i)}.  \]
Let 
\begin{equation*}
\tilde{q} = \#  \{ \tilde{U}_k^{(i)} \mid \mbox{$\#  \tilde{U}_k^{(i)}$ is odd} \}. 
\end{equation*}
By an argument  similar to the above, we can show that
\begin{equation}
\label{24/03/29/0:56} 
q(\tilde{G}) \geq \tilde{q}+1.
\end{equation}
Let us calculate $\tilde{q}$. 
Since $\#  \tilde{U}_k^{(i_1)}$ is odd if and only if 
both $\#  U_k^{(i_1)}$ and $\#  U_k^{(i_1+1)}$ are even or  both are odd,
we have
\[ \tilde{q}
	= \begin{cases}
		q(r)-1, & \mbox{if both $\#  U_k^{(i_1)}$ and $\#  U_k^{(i_1+1)}$ are odd,} \\
		q(r)+1, & \mbox{if both $\#  U_k^{(i_1)}$ and $\#  U_k^{(i_1+1)}$ are even,} \\
		q(r),   & \mbox{otherwise.}
		\end{cases}
\]
From \eqref{24/03/29/0:56}, we know that if (b) or (c) holds, then
\begin{equation}
\label{24.07.09.9:38} 
q(\tilde{G}) \geq q(r). 
\end{equation}
From \eqref{24.07.09.9:33} and \eqref{24.07.09.9:38},
we obtain \eqref{24/03/29/0:46} for case (i).

For case (ii), we observe that
$\tilde{G} \in \mathcal{K}(\{\tilde{H}_k^{(j)}\}_{j = 1}^{s(r)})$
includes the loop with the vertex $i_{j_0}$.
By an argument similar to the above, we have
\[ q(r) + 2 
	= q_0(r). \]
This establishes \eqref{24/03/29/0:46} for case (ii).

By an argument similar to the above, we know that in case (iii) 
$\tilde{G} \in \mathcal{K}(\{\tilde{H}_k^{(j)}\}_{j = 1}^{s(r)})$
must include include the cycle with vertices $i_{j_0}$ and $i_{j_0}-1$.
We set
\[ \hat{U}_k^{(i)} 
	: = \begin{cases}
		U_k^{(i)}, & i \not=i_1, \\
		U_k^{(i_1)} \cup \{ i_{j_0} \}, & i=i_1
		\end{cases} \]
and $\hat{q} = \#  \{ \hat{U}^{(i)}_k \mid \mbox{ $\#   \hat{U}^{(i)}_k$ is odd } \}$.
Then $q(\tilde{G}) \geq \hat{q} + 1$ holds. 
Since $\hat{U}_k^{(i)} $ is odd if and only if $U_k^{(i_1)}$ is even,
we have
\[ \hat{q} = \begin{cases} 
				q(r) + 1, & \mbox{if $U_k^{(i_1)}$ is even,} \\
				q(r)-1, & \mbox{if $U_k^{(i_1)}$ is odd.} 
		\end{cases} \]
This establishes \eqref{24/03/29/0:46} for case (iii),
and \eqref{24/03/29/0:46} is proven for case (iv) holds in a similar manner.
\end{proof}

\section{Spectrum}
\label{24/04/14/22:54}

In this section, we prove  parts (1) and (2) of Theorem \ref{24/02/29/17:13},
along with Theorem \ref{24/03/01/16:03}.

\subsection{Symmetry of the spectrum}
We first prove part (1) of Theorem \ref{24/02/29/17:13}.
Let $\lambda \in \sigma(L_G) \setminus \{0\}$.
It suffices to show that $-\lambda \in \sigma(L_G)$. 
From Proposition \ref{24/03/19.17:14}, 
$\lambda \in \sigma(L_G)$ if and only if there exists a $k \in [-\pi,\pi]$
such that \eqref{24/03/19.17:14} holds. 
Hence, we have
\[ \sum_{m=0}^{ (r+s(r)-\star)/2} A_m \lambda^{2m+\star} = 2 (\cos k) \lambda^{s(r)} \]
for some $k \in [-\pi,\pi]$. Then
\begin{align}
\sum_{m=0}^{ (r+s(r)-\star)/2} A_m (-\lambda)^{2m+\star} 
	& = (-1)^\star \sum_{m=0}^{ (r+s(r)-\star)/2} A_m \lambda^{2m+\star} \notag \\
\label{24/03/22/21:14}
	& = 2 (-1)^{s(r) + \star}(\cos k) (- \lambda)^{s(r)}. 
\end{align}
If $s(r) + \star$ is even, then \eqref{24/03/22/21:14} means that $-\lambda \in \sigma(L)$.
If $s(r)+\star$ is odd, then 
\[ \sum_{m=0}^{ (r+s(r)-\star)/2} A_m (-\lambda)^{2m+\star} 
	= 2 \cos (k\pm \pi) (- \lambda)^{s(r)},  \]
which also implies that $-\lambda \in \sigma(L_G)$.

\subsection{Spectral gap}
\label{24/03/28/21:06}

Next, we prove part (2) of Theorem \ref{24/02/29/17:13}.
From part (1) of Theorem \ref{24/02/29/17:13}, 
we know that $\rho(L_G)$ is symmetric with respect to $0$.
It suffices to show that
\begin{equation} 
\label{24/03/29/11:23}
 (0,\epsilon) \subset \rho(L_G)
\end{equation} 
for some (possibly small) $\epsilon > 0$.
From Propositions \ref{24/03/19.17:14} and \ref{24/03/27/10:36}, 
we know that $\lambda \in \rho(L_G)$ 
if and only if
\begin{equation}
\label{24/03/29/09:45} 
\sum_{m=m_0}^{ (r+s(r)-\star)/2} A_m \lambda^{2m+\star} \not= 2 (\cos k) \lambda^{s(r)} \quad
\mbox{for any $k \in [-\pi,\pi]$, }
\end{equation}
where $m_0 = (q(r)-\star)/2$.
Suppose that $\lambda > 0$. Then \eqref{24/03/29/09:45} is equivalent to
\begin{equation}
\label{24/03/29/09:57} 
\left( \sum_{m=m_0}^{ (r+s(r)-\star)/2} \frac{A_m}{2} \lambda^{2m+\star-s(r)} \right)^2 >  1.  
\end{equation}

We first suppose that $q(r)=0$. Then, from (a) of  Lemma \ref{24/03/29/09:59}, 
we know that
 $\star=0$ and the left hand side of \eqref{24/03/29/09:57} is
\[ \left( \sum_{m=0}^{ (r+s(r)-\star)/2} \frac{A_m}{2} \lambda^{2m-s(r)} \right)^2 
	= C_0 \lambda^{-2s(r)} + O(\lambda^{4-2s(r)}) \]
with $C_0 = (A_{m_0}/2)^2 = 1/4$.
Since $s(r) \geq 1$, 
this establishes \eqref{24/03/29/09:57} for sufficiently small $\lambda>0$.

Let us next assume that $q(r) \geq 1$. 
Then, from part (b) of  Lemma \ref{24/03/29/09:59},
it follows that
the left hand side of \eqref{24/03/29/09:57} is
\begin{equation}
\label{24/03/29/10:44} 
\left(  \sum_{m=m_0 + 1}^{ (r+s(r)-\star)/2} \frac{A_m}{2} \lambda^{2m+\star-s(r)} \right)^2 
	= C_1 \lambda^{2q(r)-2s(r)} + O(\lambda^{4+2q(r)-2s(r)}) 
\end{equation}
with $C_1 = (A_{m_0}/2)^2 > 1$. 
Since, by definition, $q(r) \leq p(r)$ and $p(r) \leq s(r)$,
we know that $q(r) \leq s(r)$.
If $q(r) = s(r)$, then the left hand side of \eqref{24/03/29/10:44} is
equal to $C_1  + O(\lambda^{4})$. 
In this case, since $C_1>1$, \eqref{24/03/29/09:57} holds for sufficiently small $\lambda>0$.
If $q(r) < s(r)$, then, because $2q(r)-2s(r) \leq -2$ and $C_1>0$, 
\eqref{24/03/29/09:57} holds for sufficiently small $\lambda>0$.
Thus, \eqref{24/03/29/11:23} is proven.


\subsection{Existence of the zero eigenvalue} \label{ss.3.3}

In this subsection, we prove Theorem \ref{24/03/01/16:03} and Proposition \ref{24/03/05/23:40.1}.
From Propositions \ref{24/03/19.17:14} and \ref{24/03/27/10:36},
we know that $\lambda$ is an eigenvalue of $L_k$ if and only if
\begin{equation}
\label{24/03/29/14:13} 
\lambda^{q(r)} F(k,\lambda) = 0,
\end{equation}
where $m_0 = (q(r)-\star)/2$ and
\begin{align} 
F(k,\lambda) 
& = \sum_{m=m_0}^{ (r+s(r)-\star)/2} A_m \lambda^{2(m-m_0)} - 2 (\cos k) \lambda^{s(r)-q(r)}
	\notag \\
\label{24/03/29/16:11}
& =  \sum_{m=0}^{ (r+s(r)-q(r))/2} A_{m+m_0} \lambda^{2m} - 2 (\cos k) \lambda^{s(r)-q(r)}.
\end{align}
From Lemma \ref{24/03/29/09:59}, we have
\[ F(k,0) = A_{m_0} - 2 \delta_{s(r),q(r)} \cos k \not=0, \
	\quad k \in [-\pi,\pi], \] 
where $$\delta_{s(r),q(r)} = \begin{cases} 1, & \mbox{if $s(r)=q(r)$,} \\ 0, & \mbox{if $s(r) > q(r)$.} \end{cases}$$
From \cite[Theorem XIII.85]{RS4}, 
we know that $L_G$ has $0$ as an eigenvalue if and only if
the Lebesgue measure of the set 
\[ \{ k \in [-\pi,\pi] \mid \mbox{$L_k$ has $0$ as an eigenvalue} \} \]
is positive. 
Hence, we observe that $0 \in \sigma_{\rm p}(L_G)$ is equivalent to $q(r) \geq 1$.
Since $q(r) \geq 1$ if and only if $\#  U_i$ is odd for some $i\in \{1,2,\cdots,p(r)\}$,
Theorem \ref{24/03/01/16:03} is proven.

To prove \eqref{24/03/29/15:15} in Proposition \ref{24/03/05/23:40.1}, 
we review some basic facts.
Let $L$ be a self-adjoint operator on a Hilbert space $\mathcal{L}$
and $\lambda_0$ an eigenvalue of $L$.
Then the geometric multiplicity $m_{\rm g}(\lambda_0) = {\rm dim}\ker (L-\lambda_0)$ of $\lambda_0$ 
is equal to the algebraic multiplicity  $m_{\rm a}(\lambda_0)$.
If $\mathcal{L}$ is a finite dimensional space
and $\lambda_i$ are eigenvalues of $L$, then
\[ {\rm det}(L - \lambda) = \prod_i(\lambda_i - \lambda)^{m_{\rm a}(\lambda_i)} \]
(see \cite[p.42]{Kato}).
From this general fact, we know that \eqref{24/03/29/15:15} holds.

\section{Band structure of the spectrum}

\subsection{Band functions}

We consider the solution $\lambda=\lambda(k)$ of 
\begin{equation}
\label{24/03/29/16:04} 
F(k, \lambda) = 0,
\end{equation}
where $F(k,\lambda)$ is a polynomial of degree $r+s(r)-q(r)$ as defined in \eqref{24/03/29/16:11}.
Let $\lambda_i(k)$ ($i=1,\cdots,r+s(r)-q(r)$) be solutions of \eqref{24/03/29/16:04}
and assume that
\[ \lambda_1(k) \geq \lambda_2(k) \geq \cdots \geq \lambda_{r+s(r)-q(r)}. \]
Then
\begin{equation}
\label{24/04/03/10:39} 
F(k,\lambda) = A_{(r+s(r)-q(r))/2} \prod_{i=1}^{r+s(r)-q(r)} (\lambda_i(k)-\lambda). 
\end{equation}

We also have the following lemmas:
\begin{lemma}
{\rm
$\lambda_1(0) = 1$ and 
\[ {\rm dim} \ker (L_0 - \lambda_1(0)) = 1. \]
}
\end{lemma}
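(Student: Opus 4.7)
The plan is to extract both claims from Lemma \ref{24.07/09/10:53} together with a Perron--Frobenius argument on the associated finite quotient graph.

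First I would observe that by Lemma \ref{24.07/09/10:53}, $L_0$ is stochastic, so the constant vector $\mathbf{1}=(1,1,\ldots,1)^T \in \mathbb{C}^{r+s(r)}$ satisfies $L_0 \mathbf{1} = \mathbf{1}$; hence $1 \in \sigma(L_0)$. On the other hand, part (1) of Theorem \ref{24/02/29/17:13} (proved in Subsection 3.1), together with the direct integral decomposition in Proposition \ref{24/03/04/1:39}, gives $\sigma(L_k) \subset \sigma(L_G) \subset [-1,1]$. Therefore $1$ is the largest eigenvalue of $L_0$, which (since $1 \neq 0$ and $\lambda_1(0)$ is by definition the largest nonzero eigenvalue of $L_0$) forces $\lambda_1(0) = 1$.

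For the second claim, I would interpret $L_0$ as the transition operator on the finite quotient graph obtained by identifying vertices of $G$ under the translation $\tau_r$. Setting $k=0$ in the block matrix form \eqref{24/03/29/11:38.0}--\eqref{24/03/29/11:38.4} replaces the phases $e^{\pm ik}$ by $1$, so the wrap-around entries $1/d_1$ and $1/d_r$ at positions $(1,r)$ and $(r,1)$ play the role of ordinary adjacencies of the quotient, and the off-diagonal blocks $L_{12}, L_{21}$ encode the pendant edges. This quotient graph is connected because the ambient graph $G \in \mathscr{G}_\times$ is connected and $(\mathrm{G}_3)$ acts on it by translation.

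Connectedness of the quotient means that the nonnegative matrix $L_0$ is irreducible. By Perron--Frobenius, the spectral radius of $L_0$ (which equals $1$ by the first paragraph) is a simple eigenvalue, so $\dim\ker(L_0 - 1) = 1$. The only real obstacle is checking irreducibility cleanly; an alternative purely computational route, avoiding Perron--Frobenius, is to solve $L_0 \psi = \psi$ directly: the rows indexed by $r+1,\ldots,r+s(r)$ give $\psi_{r+p} = \psi_{i_p}$ for each pendant, and the remaining rows express $\psi_i$ as an average of its neighbours, which (by a standard maximum principle on a finite connected weighted graph) forces $\psi$ to be constant. Either approach yields $\dim \ker(L_0 - \lambda_1(0)) = 1$.
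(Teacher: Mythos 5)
Your proposal is correct, and its core is the same as the paper's: the paper's entire proof reads ``From Lemma \ref{24.07/09/10:53}, $L_0$ is a stochastic matrix. The well-known fact that the largest eigenvalue of a stochastic matrix is unique and equal to $1$ yields the desired result.'' What you add is genuinely worth having: stochasticity alone does \emph{not} make the eigenvalue $1$ simple (the identity matrix is stochastic with $\dim\ker(I-1)=n$), so the ``well-known fact'' as invoked in the paper tacitly requires irreducibility. Your verification that $L_0$ is the transition matrix of the finite quotient graph, that this quotient is connected (the cycle $1\to 2\to\cdots\to r\to 1$ closed up by the $k=0$ wrap-around entries, with each pendant vertex $r+p$ attached at $i_p$), and hence that $L_0$ is irreducible, is precisely the missing hypothesis for Perron--Frobenius, and your alternative route --- pendant rows force $\psi_{r+p}=\psi_{i_p}$, the remaining rows make $\psi$ harmonic on a finite connected graph, and the maximum principle forces $\psi$ constant --- gives $\dim\ker(L_0-1)=1$ with no appeal to Perron--Frobenius at all. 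Two small simplifications are available in your first paragraph: you do not need Theorem \ref{24/02/29/17:13} (1) or the inclusion $\sigma(L_0)\subset\sigma(L_G)$ (which for the direct integral requires the continuity of the band functions from Lemma \ref{24/04/03/16:01}); since $L_0$ has row sums $1$, its spectral radius is at most $\|L_0\|_\infty = 1$ directly, and $L_0\mathbf{1}=\mathbf{1}$ then pins the top eigenvalue at $1$, whence $\lambda_1(0)=1$ because $\lambda_1(0)$ is by definition the largest nonzero eigenvalue of $L_0$.
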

\begin{proof}
From Lemma \ref{24.07/09/10:53}, $L_0$ is a stochastic matrix.
The well-known fact that the largest eigenvalue of a stochastic matrix is unique and equal to $1$
yields the desired result.  
\end{proof}
\begin{lemma}
\label{24/03/29/21:47}
{\rm
Let $\lambda_i(k)$ ($i=1,\cdots,r+s(r)-q(r)$) be as described above. 
Then the following hold for $i=1,2, \cdots, r+s(r)-q(r)$:
\begin{itemize}
\item[(1)] For any  $k \in (-\pi,0) \cup (0,\pi)$, ${\rm dim} \ker (L_k - \lambda_i(k)) =1$.
In particular, we have
\begin{equation} 
\label{24/034/03//10:19}
\lambda_1(k) > \lambda_2(k) > \cdots > \lambda_{r+s(r)-q(r)}(k),
	\quad k \in (-\pi,0) \cup (0,\pi). 
\end{equation}
\item[(2)] For $k=0, \pm \pi$, ${\rm dim} \ker (L_k - \lambda_i(k)) \leq 2$.
\end{itemize}
}
\end{lemma}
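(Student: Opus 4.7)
The plan is to reduce the eigenvalue equation $L_k v = \lambda v$ with $\lambda = \lambda_i(k) \neq 0$ to a three-term recurrence on $\{1, \ldots, r\}$ with Bloch boundary conditions, and analyze it via a monodromy matrix in $SL(2,\mathbb{C})$. First, the pendant equations $v_{i_p} = \lambda v_{r+p}$ allow me to eliminate $v_{r+p} = v_{i_p}/\lambda$ (using $\lambda \neq 0$), so an eigenvector is determined by its non-pendant coordinates $v_1, \ldots, v_r$. Substituting these into the remaining $r$ equations yields
\begin{equation*}
v_{j+1} = c_j(\lambda) v_j - v_{j-1}, \qquad j = 1, \ldots, r,
\end{equation*}
with $c_j(\lambda) = 2\lambda$ if $j \notin \{i_1, \ldots, i_{s(r)}\}$ and $c_j(\lambda) = 3\lambda - \lambda^{-1}$ otherwise, subject to the Bloch conditions $v_0 = e^{-ik} v_r$ and $v_{r+1} = e^{ik} v_1$ inherited from the $e^{\pm ik}$ corners of the block $L_{11}$ in \eqref{24/03/29/11:38.1}.

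Next, I would introduce the transfer matrices $T_j(\lambda) := \begin{pmatrix} 0 & 1 \\ -1 & c_j(\lambda) \end{pmatrix}$, each of which satisfies $\det T_j(\lambda) = 1$ at both pendant and non-pendant sites, so that the monodromy $M(\lambda) := T_r(\lambda) \cdots T_1(\lambda)$ lies in $SL(2,\mathbb{C})$ and sends $(v_0, v_1)^T$ to $(v_r, v_{r+1})^T$. The Bloch condition then becomes
\begin{equation*}
M(\lambda)\begin{pmatrix} v_0 \\ v_1 \end{pmatrix} = e^{ik}\begin{pmatrix} v_0 \\ v_1 \end{pmatrix},
\end{equation*}
so that $\dim \ker (L_k - \lambda_i(k))$ equals $\dim \ker (M(\lambda_i(k)) - e^{ik})$.

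Finally, because the two eigenvalues $\mu, \mu'$ of $M(\lambda) \in SL(2,\mathbb{C})$ satisfy $\mu\mu' = 1$, whenever $e^{ik}$ is one of them the other must be $e^{-ik}$. For $k \in (-\pi, 0) \cup (0, \pi)$ these differ, so $M(\lambda_i(k))$ has two distinct simple eigenvalues and $\dim \ker (M - e^{ik}) = 1$; combined with the self-adjointness of $L_k$ in the weighted inner product, the simplicity of each $\lambda_i(k)$ forces the strict chain $\lambda_1(k) > \cdots > \lambda_{r+s(r)-q(r)}(k)$, proving (1). At $k \in \{0, \pm\pi\}$ the two roots collapse to $\pm 1$, so $M(\lambda_i(k))$ is either $\pm I$ (eigenspace of dimension $2$) or has a nontrivial Jordan block (eigenspace of dimension $1$); either way $\dim \ker (M - e^{ik}) \leq 2$, giving (2). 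The main delicacy I would watch for is verifying that the $\lambda^{-1}$ correction to $c_j(\lambda)$ at pendant sites does not spoil the identity $\det T_j(\lambda) = 1$; this is precisely the ingredient that forces the $\{e^{ik}, e^{-ik}\}$ pairing of the eigenvalues of $M(\lambda)$, and hence the whole argument.
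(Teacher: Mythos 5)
Your proposal is correct, and its skeleton coincides with the paper's proof: both use $\lambda_i(k)\neq 0$ to eliminate the pendant coordinates (the paper via the Schur complement $A_{11}-A_{12}A_{22}^{-1}A_{21}$ of the block matrix $A_k(\lambda)=D(L_k-\lambda)$, you via the substitution $v_{r+p}=v_{i_p}/\lambda$), reduce to the same second-order difference equation on $\{1,\dots,r\}$ with Bloch boundary phases $e^{\pm ik}$ coming from the corners of $L_{11}$ in \eqref{24/03/29/11:38.1}, and then settle everything with a $2\times 2$ transfer-matrix computation. The difference lies in the concluding mechanism for part (1). The paper parameterizes kernel vectors by $(x_1,x_r)$, derives the relation $(e^{ik}I-T)\,{}^{\rm t}(x_1,x_r)=0$ of Lemma \ref{25.04.15.20:02}, and exploits the fact that the entries $g_i$, $K_{ij}$, and in particular $T_{22}=-K_{11}$ are \emph{real}, so that $e^{ik}-T_{22}\neq 0$ when $e^{ik}\notin\mathbb{R}$, forcing $x_r$ to be a fixed multiple of $x_1$. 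You instead parameterize by $(v_0,v_1)$, observe that each transfer matrix has determinant $1$ regardless of the $\lambda^{-1}$ correction at pendant sites, and use $\det M(\lambda)=1$ to force the Floquet pairing $\{e^{ik},e^{-ik}\}$ of the eigenvalues of the monodromy matrix; for $k\in(-\pi,0)\cup(0,\pi)$ these are distinct, so $\dim\ker(M(\lambda_i(k))-e^{ik})=1$. Your mechanism is arguably cleaner and more robust: it is the standard $SL(2,\mathbb{C})$ discriminant argument, it makes part (2) entirely trivial (the matrix is $2\times 2$), and it does not depend on the realness of the coefficients. It also sidesteps a small sign slip in the paper: since the displayed $A_{22}$ equals $-\lambda I$, its inverse is $-\lambda^{-1}I$ rather than $\lambda^{-1}I$, so the correct recurrence coefficient at a pendant-bearing site is $3\lambda-\lambda^{-1}$, exactly as you computed by hand; the paper's argument survives because it only uses realness, and yours because it only uses the determinant. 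Your passage from geometric simplicity to the strict chain \eqref{24/034/03//10:19} via self-adjointness of $L_k$ in the $D$-weighted inner product (so geometric equals algebraic multiplicity, and the $r+s(r)-q(r)$ nonzero roots of $F(k,\cdot)$ are pairwise distinct) is also sound and matches the facts the paper records in Subsection \ref{ss.3.3}.
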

We postpone the proof of Lemma \ref{24/03/29/21:47} until the next subsection.

Because we know, from \eqref{24/03/29/16:11}, that $F(k,\lambda) = F(-k,\lambda)$, 
we have
\begin{equation} 
\label{24/04/02/10:32}
\lambda_i(-k) = \lambda_i(k), \quad k \in [-\pi,\pi]. 
\end{equation}
From \eqref{24/04/02/09:47} and \eqref{24/04/02/10:32}, we have
\begin{equation}
\label{24/04/03/01:39} 
\sigma(L_G) \setminus \{0\} 
	= \bigcup_{i = 1}^{r+s(r)-q(r)} \{ \lambda_i(k) \mid k \in [0,\pi] \}. 
\end{equation}
Therefore, we only require to study the behaviour of the solutions $\lambda_i(k)$ for $k \in [0,\pi]$.
\begin{lemma}
\label{24/04/03/16:01}
{\rm
Let $\lambda_i(k)$ ($i=1,\cdots,r+s(r)-q(r)$) be as described above.
Then $\lambda_i$ is analytic in a region that includes $(0,\pi)$ and is continuous in $[0,\pi]$. 
}
\end{lemma}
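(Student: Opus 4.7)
The plan is to exploit the strict simplicity of the roots of $F(k,\lambda)$ on the open interval $(0,\pi)$ guaranteed by Lemma \ref{24/03/29/21:47}(1), combined with the analytic dependence of the coefficients of $F$ on $k$, and then to recover continuity at the endpoints from the standard fact that roots of a polynomial depend continuously on its coefficients.

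First I would observe that by \eqref{24/03/29/16:11} the polynomial $F(k,\lambda)$ is jointly entire in $(k,\lambda)$, because the only $k$-dependence is through the single term $-2(\cos k)\lambda^{s(r)-q(r)}$. Fix $k_{0}\in(0,\pi)$ and let $\lambda_{i}(k_{0})$ be any one of the roots guaranteed by \eqref{24/03/29/16:11}. By Lemma \ref{24/03/29/21:47}(1) together with \eqref{24/034/03//10:19}, $\lambda_{i}(k_{0})$ is a simple root, so $\partial_{\lambda}F(k_{0},\lambda_{i}(k_{0}))\neq 0$. The holomorphic implicit function theorem then produces an analytic function $\mu_{i,k_{0}}:U\to\mathbb{C}$ defined on a complex neighborhood $U$ of $k_{0}$ such that $F(k,\mu_{i,k_{0}}(k))=0$ and $\mu_{i,k_{0}}(k_{0})=\lambda_{i}(k_{0})$.

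Next I would glue these local analytic branches into a single analytic function on a region containing $(0,\pi)$. Strict monotonicity \eqref{24/034/03//10:19} ensures that on the real segment $(0,\pi)$ the branch $\mu_{i,k_{0}}$ coincides with $\lambda_{i}$ (the $i$th root in decreasing order), and shrinking each $U$ if necessary we may assume that on $U\cap\mathbb{R}$ the branch stays separated from the remaining roots. Uniqueness of analytic continuation lets one patch the $\{\mu_{i,k_{0}}\}_{k_{0}\in(0,\pi)}$ into a single holomorphic function defined on an open complex neighborhood $\Omega\supset(0,\pi)$, which by construction agrees with $\lambda_{i}$ on $(0,\pi)$.

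For continuity on the closed interval $[0,\pi]$ I would use the classical fact that the zeros of a monic polynomial of fixed degree, labeled in decreasing order, depend continuously on its coefficients (a Hurwitz/Rouch\'e or compactness-of-root-multisets argument). Since $F(k,\lambda)/A_{(r+s(r)-q(r))/2}$ is, by Proposition \ref{24/03/19.17:14} and Proposition \ref{24/03/27/10:36}, monic of fixed degree $r+s(r)-q(r)$ in $\lambda$ with coefficients continuous (even analytic) in $k$, and the roots $\lambda_{1}(k)\geq\cdots\geq\lambda_{r+s(r)-q(r)}(k)$ are labeled in decreasing order, each $\lambda_{i}$ is continuous on all of $[-\pi,\pi]$, in particular on $[0,\pi]$.

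The main obstacle is the behavior at the endpoints $k=0,\pm\pi$, where Lemma \ref{24/03/29/21:47}(2) permits double roots and where consequently the analytic branches produced above may collide. Near such a collision a Puiseux-type branching (e.g.\ a square-root singularity) is possible in principle, so analyticity can genuinely fail at the endpoints; this is precisely why the lemma asserts only analyticity on a region containing the open interval $(0,\pi)$ while asserting mere continuity on the closed interval $[0,\pi]$. The ordered-root continuous-selection argument above bridges this potential loss of regularity, so the statement stands as claimed.
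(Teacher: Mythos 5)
Your proof is correct, and its first half is essentially the paper's argument: both establish analyticity on a region containing $(0,\pi)$ by observing that the strict separation \eqref{24/034/03//10:19} from Lemma \ref{24/03/29/21:47} (1) forces $\partial_\lambda F(k,\lambda_i(k))\not=0$ and then invoking the implicit function theorem for the polynomial $F$ of \eqref{24/03/29/16:11}, whose coefficients depend on $k$ only through $\cos k$. Where you genuinely diverge is the endpoint continuity: the paper notes via Lemma \ref{24/03/29/21:47} (2) that $\lambda_i(0)$ (resp.\ $\lambda_i(\pi)$) is a discrete eigenvalue of multiplicity at most $2$ and appeals to the fact that $\{L_k\}_k$ is an analytic family in the sense of Kato near $k=0,\pi$ (citing \cite[Theorem XII.13]{RS4}), whereas you use the elementary fact that the decreasingly ordered roots of a monic polynomial of fixed degree depend continuously on its coefficients. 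Your route is more self-contained --- it needs no perturbation theory and, notably, does not even use the multiplicity bound of Lemma \ref{24/03/29/21:47} (2), while the paper's Kato-family argument buys potentially stronger information (analytic eigenvalue branches through the degenerate points, after relabeling). Two small points you should make explicit for your argument to be airtight: the ordered-root labeling requires all roots of $F(k,\cdot)$ to be real, which holds here because they are eigenvalues of $L_k$, similar to the Hermitian matrix $D^{1/2}L_k D^{-1/2}$; and the leading coefficient $A_{(r+s(r)-\star)/2}=\prod_i d_i>0$ is independent of $k$ by \eqref{24/03/22/21:41}, so dividing by it indeed yields a monic polynomial of fixed degree $r+s(r)-q(r)$ with coefficients continuous in $k$. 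Your closing remark correctly identifies why analyticity can fail at $k=0,\pi$ (possible root collisions with Puiseux branching), which is exactly why the lemma claims only continuity there.
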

\begin{proof}
We first prove the first half of the lemma.
From \eqref{24/04/03/10:39} and \eqref{24/034/03//10:19}, we have
\[ \frac{\partial}{\partial \lambda}F(k,\lambda)|_{\lambda = \lambda_i(k)} 
	= -A_{(r+s(r)-q(r))/2} \prod_{j\not=i} (\lambda_j(k)-\lambda_i(k)) \not=0,
	\quad k \in (0,\pi) \]
for $i=1,\cdots,r+s(r)-q(r)$. 
Because $F(k,\lambda_i(k))=0$, 
we know from the implicit function theorem that 
$\lambda_i(k)$ is a unique analytic function in a region which includes $(0,\pi)$. 

We next prove the second half of the lemma. 
From Lemma \ref{24/03/29/21:47}, $\lambda_i(0)$ is a discrete eigenvalue of multiplicity at most $2$.
Because $\{ L_k \}_k$ is an analytic family  in the sense of Kato for $k$ near $0$
(see \cite[Theorem XII.13]{RS4}), we know that $\lambda_i(k)$ is continuous at $k=0$. 
The proof of the continuity at $k=\pi$ is similar.
\end{proof}
Now let 
\[ D(\lambda) 
	= \frac{1}{2} \sum_{m=0}^{t(r)} A_{m+m_0} \lambda^{2m-s(r)+q(r)}, \]
where $t(r)$ is defined in \eqref{24/04/02/23:55}.
We observe that $D(\lambda)$ is an even (resp. odd) function if $r$ is odd (resp. even),
because $s(r)-q(r)$ is even (resp. odd) if $r$ is odd (resp. even).
From \eqref{24/03/29/16:11} and \eqref{24/03/29/16:04}, 
we know that $\lambda = \lambda(k)$ is a solution of \eqref{24/03/29/16:04}
if and only if
\begin{equation} 
\label{24/04/002/10:36}
D(\lambda(k)) = \cos k.
\end{equation}
\begin{lemma}
{\rm
\label{24/04/03/01:43}
Let $\lambda_i(k)$ ($i = 1, \cdots, r+s(r)-q(r)$) be as described above.
Then:
\begin{itemize}
\item[(1)] If $r$ is even, 
\[ \lambda_{2t(r)+1-i}(k) = - \lambda_i(k),
\quad k \in [0,\pi], \quad i=1, \cdots,t(r). \]
\item[(2)] If $r$ is odd,
\[ \lambda_{2t(r)+1-i}(k) = - \lambda_i(k-\pi), 
\quad k \in [0,\pi], \quad i=1, \cdots,t(r). \]
\end{itemize}
}
\end{lemma}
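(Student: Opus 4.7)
The plan is to exploit the parity (in $\lambda$) of the function $D(\lambda)$ defined before \eqref{24/04/002/10:36}. Each exponent $2m-s(r)+q(r)$ appearing in $D(\lambda)$ has the same parity as $s(r)-q(r)$, and since $r+s(r)-q(r) = 2t(r)$ is even, $s(r)-q(r)$ has the same parity as $r$. Consequently $D(-\lambda) = D(\lambda)$ when $r$ is even, and $D(-\lambda) = -D(\lambda)$ when $r$ is odd. This is the one observation that drives both parts.

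For part (1), fix $k \in (0,\pi)$. Each $\lambda_i(k)$ satisfies $D(\lambda_i(k)) = \cos k$ by \eqref{24/04/002/10:36}. When $r$ is even, $D(-\lambda_i(k)) = \cos k$ as well, so $-\lambda_i(k)$ is again a root of $F(k,\cdot)$. Hence the multiset $\{\lambda_i(k) : i=1,\dots,2t(r)\}$ is invariant under $\lambda \mapsto -\lambda$. By Lemma~\ref{24/03/29/21:47}~(1) the roots are pairwise distinct on $(0,\pi)$, and since negation reverses the decreasing order, the $i$-th largest element of $\{-\lambda_j(k)\}_j$ equals $-\lambda_{2t(r)+1-i}(k)$; matching with the $i$-th largest on the original list gives $\lambda_{2t(r)+1-i}(k) = -\lambda_i(k)$ on $(0,\pi)$. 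The identity then extends to the endpoints $k=0,\pi$ by the continuity asserted in Lemma~\ref{24/04/03/16:01}.

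For part (2), fix $k \in (0,\pi)$ and take $r$ odd. Then $D(-\lambda_i(k)) = -D(\lambda_i(k)) = -\cos k = \cos(k-\pi)$, so $\{-\lambda_i(k)\}_i$ is precisely the multiset of roots of $D(\lambda) = \cos(k-\pi)$, i.e.\ the multiset $\{\lambda_j(k-\pi)\}_j$ (using that these are defined by the same polynomial equation). Since $k-\pi \in (-\pi,0)$, the roots are again distinct by Lemma~\ref{24/03/29/21:47}~(1). The same order-reversing argument as in part (1) produces $-\lambda_i(k) = \lambda_{2t(r)+1-i}(k-\pi)$, i.e.\ $\lambda_{2t(r)+1-i}(k) = -\lambda_i(k-\pi)$ on $(0,\pi)$, and continuity extends this to $[0,\pi]$. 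Note that $\lambda_i(k-\pi) = \lambda_i(\pi-k)$ by the evenness in $k$ established at \eqref{24/04/02/10:32}, so the right-hand side is well defined and continuous on $[0,\pi]$.

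The only delicate point is the ordering step, since $\lambda \mapsto -\lambda$ reverses the decreasing order only when roots are simple. This is guaranteed in the open interval $(0,\pi)$ by Lemma~\ref{24/03/29/21:47}~(1); extending to $\{0,\pi\}$ (where Lemma~\ref{24/03/29/21:47}~(2) allows multiplicity up to $2$) relies on the continuity of each $\lambda_i$ supplied by Lemma~\ref{24/04/03/16:01}, together with the analyticity/Kato-analytic-family argument used there. I do not expect any other difficulty; no new identity is needed beyond the parity of $D$.
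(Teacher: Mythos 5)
Your proof is correct and takes essentially the same route as the paper: exploit the parity of $D(\lambda)$ in $\lambda$ (using $\cos(k-\pi)=-\cos k$ in the odd case), then invoke the simplicity of the bands from Lemma~\ref{24/03/29/21:47} to match the reordered roots, with continuity handling the endpoints $k=0,\pi$ — your treatment of the order-reversal and endpoint step is just more explicit than the paper's. One remark: your parity claim ($D$ even for $r$ even, odd for $r$ odd) is the correct one; the paper's sentence preceding \eqref{24/04/002/10:36} states the parities the other way around, evidently a slip, since the paper's own proof of the lemma uses the same parity you do.
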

\begin{proof}
We first prove (i). Suppose that $r$ is even.
Because, as mentioned above, $D(\lambda)$ is an even function in this case,
we have
\[ D(-\lambda_i(k)) = \cos k. \]
Hence, $-\lambda_i(k)$ is also a solution of \eqref{24/03/29/16:04}.
From Lemma \ref{24/03/29/21:47}, we obtain the desired result.

In the case where $r$ is odd, from an argument similar to that mentioned above,
we have
\[ D(-\lambda_i(k-\pi)) 
	= \cos k. \]
This proves (ii), again by Lemma \ref{24/03/29/21:47}.
\end{proof}

Next we present the

\begin{proof}[Proof of Theorem \ref{24/03/05/23:40}]
Because, from \eqref{24/04/002/10:36}, we have
\[ {\rm Arc} \cos D(\lambda_i(k)) = k, \quad k \in [0,\pi], \]
we know that 
$\lambda_i(k)$ ($ i=1,2,\cdots,r+s(r)-q(r)$) are strictly monotone functions in $k \in [0,\pi]$. 
From Lemma \ref{24/04/03/16:01}, we have
\[ \{ \lambda_i(k) \mid k \in [0,\pi] \} = [ a_i,b_i] \]
with $a_i = \min \{ \lambda_i(0). \lambda_i(\pi)\}$ and $b_i = \max \{ \lambda_i(0). \lambda_i(\pi)\}$.
Because we know from the first half of Theorem \ref{24/02/29/17:13} (2)
and  Lemma \ref{24/04/03/01:43} that
\[ \lambda_{t(r)}(k) \geq a_{t(r)} > 0 > b_{t(r)+1} \geq \lambda_{t(r)+1}(k), \]
it follows from \eqref{24/04/03/01:39} and Lemma \ref{24/04/03/01:43} once again that
\[ \sigma(L_G) \setminus\{0\} = \bigcup_{i=1}^{t(r)} [a_i, b_i] \cup [-b_i,-a_i]. \]
From \eqref{24/04/03/14:11}, we know that
\[ A_{m_0} = (-1)^{t(r)}(B_{m_0} + C_{m_0}).  \]
Let us first assume that $t(r)$ is even. 
Then, from Lemma \ref{24/03/29/09:59}, we have 
\[ \begin{cases} \lim_{\lambda \to 0} D(\lambda) = + \infty & \mbox{if $s(r) > q(r)$,} \\
	\lim_{\lambda \to 0} D(\lambda) >1 & \mbox{if $s(r) = q(r)$.}
	\end{cases} \]
From Proposition \ref{24/03/19.17:14}, we have $A_{t(r)+m_0} = A_{(r+s(r)-\star)/2} > 0$.
Hence, we obtain 
\begin{align*} 
& a_i 
	= \begin{cases}
	\lambda_i(\pi), & \mbox{if $i=1,3,\cdots, t(r)-1$,} \\
	\lambda_i(0), & \mbox{if $i=2,4,\cdots,t(r)$,} \end{cases} 
\\
& b_i 
	= \begin{cases}
	\lambda_i(0), & \mbox{if $i=1,3,\cdots, t(r)-1$,} \\
	\lambda_i(\pi), & \mbox{if $i=2,4,\cdots,t(r)$.} \end{cases}
\end{align*}
In the case where $t(r)$ is odd, we have
\[ \begin{cases} \lim_{\lambda \to 0} D(\lambda) = - \infty & \mbox{if $s(r) > q(r)$,} \\
	\lim_{\lambda \to 0} D(\lambda) <-1 & \mbox{if$s(r) = q(r)$.}
	\end{cases} \]
From an argument similar to that mentioned above, we know that
\begin{align*} 
& a_i 
	= \begin{cases}
	\lambda_i(\pi), & \mbox{if $i=1,3,\cdots, t(r)$,} \\
	\lambda_i(0), & \mbox{if $i=2,4,\cdots,t(r)-1$,} \end{cases} 
\\
& b_i 
	= \begin{cases}
	\lambda_i(0), & \mbox{if $i=1,3,\cdots, t(r)$,} \\
	\lambda_i(\pi), & \mbox{if $i=2,4,\cdots,t(r)-1$.} \end{cases}
\end{align*}
This  completes the proof of the theorem.
\end{proof}

\subsection{Absence of nonzero eigenvalues}
\label{24/04/04/16:38}

We now prove part (3) of Theorem \ref{24/02/29/17:13}.
From \cite[Theorem XIII.85]{RS4}, we may observe that 
$\lambda$ is an eigenvalue of $L_G$ if and only if the Lebesgue measure of the set 
$\{ k \in [-\pi,\pi] \mid \mbox{$\lambda$ is an eigenvalue of $L_k$} \}$
is positive. 
Hence, it suffices to show that
for any $\lambda \in \sigma(L_G) \setminus \{0\}$, the Lebesgue measure of 
\[ E_\lambda = \{ k \in [-k,k] \mid F(k,\lambda) = 0 \} \]
is 0.

Let us assume that $\lambda_0 \in \sigma(L_G)$. 
Then we know, from \eqref{24/04/03/01:39}, that there exists an $i$ 
such that $\lambda_0 \in \{ \lambda_i(k) \mid k \in [0,\pi] \}$.
Because $\lambda_i$ is strictly monotone in $[0,\pi]$, 
there exists a $k_0 \in [0,\pi]$ such that $\lambda_0  = \lambda_i(k_0)$
and $E_{\lambda_0} = \{ \pm k_0 \}$.
Hence, the Lebesgue measure of $E_{\lambda_0}$ is 0,
and part (3) of Theorem \ref{24/02/29/17:13} is proven.

\subsection{Multiplicity}

We still have to prove Lemma \ref{24/03/29/21:47}.
From \eqref{24/04/04/16:59}, we know that 
\[ {\rm dim}\ker(L_k - \lambda) = {\rm dim}\ker A_k(\lambda). \] 
By definition, we have
\begin{equation}
\label{24/04/04/21:06}
A_k(\lambda) 
= \begin{pmatrix}
	A_{11} & A_{12} \\ A_{21} & A_{22},
	\end{pmatrix}
\end{equation}
where $A_{11} = A_{11}(k,\lambda) \in M_r(\mathbb{C})$, 
$A_{21} =\!~ ^{\rm t}A_{12} \in M_{s,r}(\mathbb{C})$, $A_{22} \in M_s(\mathbb{C})$
and
\begin{align*}
& A_{11}(k,\lambda) 
= \begin{pmatrix} 
		-d_1 \lambda & 1 & 0 &  \cdots & 0 & e^{-ik} \\ 
		1 & -d_2 \lambda & \mbox{\rotatebox{140}{$\cdot\cdots$}} &  0 & \cdots & 0 \\
		0 & \mbox{\rotatebox{140}{$\cdot\cdots$}} & \mbox{\rotatebox{140}{$\cdot\cdots$}} & \mbox{\rotatebox{140}{$\cdot\cdots$}} 
			& \mbox{\rotatebox{140}{$\cdot\cdots$}} & \mbox{\rotatebox{90}{$\cdots$}} \\
		\mbox{\rotatebox{90}{$\cdots$}} & \mbox{\rotatebox{140}{$\cdot\cdots$}} & \mbox{\rotatebox{140}{$\cdot\cdots$}} 
			& \mbox{\rotatebox{140}{$\cdot\cdots$}} & \mbox{\rotatebox{140}{$\cdot\cdots$}}  & 0\\
		0 & \cdots & 0 & \mbox{\rotatebox{140}{$\cdot\cdots$}} & \mbox{\rotatebox{140}{$\cdot\cdots$}} & 1 \\
		e^{ik} & 0 & \cdots & 0 & 1 & -d_r \lambda
		\end{pmatrix}, 
\\
& A_{22}
	= \begin{pmatrix}
		-\lambda  & 0 &  \cdots & 0 \\ 
		0 & -\lambda & \mbox{\rotatebox{140}{$\cdot\cdots$}} & \mbox{\rotatebox{90}{$\cdots$}}  \\
		 \mbox{\rotatebox{90}{$\cdots$}}  & \mbox{\rotatebox{140}{$\cdot\cdots$}} & \mbox{\rotatebox{140}{$\cdot\cdots$}} & 0 \\
		0 & \cdots & 0  &  -\lambda \\
		\end{pmatrix}.
\end{align*}
Here $A_{12} = (e_{i,j})$ is given by
\[ e_{i,j} = \begin{cases} 1, & i=i_p, j = p \\ 0, & \mbox{otherwise} \end{cases}. \]
From direct calculation, we have
\[ A_{12}A_{21} = {\rm diag}(\delta_1, \cdots, \delta_r), \]
where
\[ \delta_i = \begin{cases} 1, & i = i_p, ~ p=1, \cdots, s, \\ 0, & \mbox{otherwise}. \end{cases} \]
Henceforth, we assume that $\lambda \not=0$. 
Then $A_{22}$ has the inverse $A_{22}^{-1} = \lambda^{-1} I \in M_s(\mathbb{C})$.
Hence, we obtain 
\begin{equation}
\label{24/04/04/21:14} 
A_{12} A_{22}^{-1} A_{21} = \lambda^{-1} {\rm diag}(\delta_1, \cdots, \delta_r). 
\end{equation} 
Let ${\bm x} = \!~ ^{\rm t} ({\bm x}_1, {\bm x}_2) \in \mathbb{C}^r \oplus \mathbb{C}^s$.
From \eqref{24/04/04/21:06} and \eqref{24/04/04/21:14}, 
we know that 
$A_k(\lambda){\bm x} = 0$ if and only if 
\begin{equation}
\label{24/04/04/21:23}
\begin{cases}
A_{11}{\bm x}_1 =  \lambda^{-1} {\rm diag}(\delta_1, \cdots, \delta_r){\bm x}_1, \\
{\bm x}_2 = -A_{22}^{-1} A_{21}{\bm x}_1.
\end{cases}
\end{equation}
From \eqref{24/04/04/21:23}, we observe that 
\[ {\rm dim}\ker(A_{11} - \lambda^{-1} {\rm diag} (\delta_1, \cdots, \delta_r))
	= {\rm dim}\ker A_k(\lambda). \]
To prove Lemma \ref{24/03/29/21:47}, 
it therefor suffices to examine the dimension of the kernel of the matrix
$E:=A_{11} - \lambda^{-1} {\rm diag} (\delta_1, \cdots, \delta_r)$.
Let $E {\bm x}_1 = 0$ and 
${\bm x}_1 = \!~ ^{\rm t}(x_1, \cdots, x_r) \in \mathbb{C}^r \setminus\{0\}$.
Then, from \eqref{24/04/04/21:23}, we have 
\begin{align}
\label{24/04/05/11:22}
& x_1 = e^{-ik} g_r x_r - e^{-ik} x_{r-1} \\
\label{24/04/04/22:35}
& x_2 = g_1 x_1 - e^{-ik} x_r  \\
\label{24/04/05/12:49}
& x_i = \lambda^{-1} (\delta_i + d_{i-1} \lambda^2)x_{i-1} 	- x_{i-2}, \quad 3 \leq i \leq r,
\end{align}
where we have set $g_i \equiv g_i(\lambda) := \lambda^{-1}(\delta_i + d_i \lambda^2)$, 
$i = 1, \cdots, r$.
Because, from \eqref{24/04/04/22:35} and \eqref{24/04/05/12:49}, 
we can write
\[ x_i = c_i^{(1)} x_1 + c_i^{(r)} x_r, \quad i = 1, \cdots, r \]
for some $c_i^{(1)}$ and $c_i^{(r)}$,
we have
\begin{equation} 
\label{24/04/05/14:03}
{\bm x}_1 
	= x_1 {\bm c}_1 + x_r {\bm c}_r  
\end{equation}  
with ${\bm c}_1 = \!~ ^{\rm t}(c_1^{(1)}, \cdots, c_r^{(1)})$ and 
${\bm c}_r = \!~ ^{\rm t}(c_1^{(r)}, \cdots, c_r^{(r)})$.
Hence, ${\rm dim}\ker E \leq 2$
and part (2) of Lemma \ref{24/03/29/21:47} is proven.

Next we prove part (1) of the lemma.
Let $x_i$ ($i=1,\cdots,r$) be as described above. 
From \eqref{24/04/05/12:49}, we know that
\begin{equation}
\label{24/04/05/11:28}
\begin{pmatrix}
x_r \\ x_{r-1}
\end{pmatrix}
	= K \begin{pmatrix}
			x_2 \\ x_1
			\end{pmatrix},
\end{equation}
where
\[  K = \begin{pmatrix}  
		g_{r-1}  & -1 \\ 1 & 0 
		\end{pmatrix}
			\cdots \begin{pmatrix}  
		g_2  & -1 \\ 1 & 0 
		\end{pmatrix}. \]
We also have
\begin{lemma}
\label{25.04.15.20:02}
{\rm
The following holds:
\begin{equation}
\label{24/04/05/13:15} 
( e^{ik} I - T ) \begin{pmatrix} x_1 \\ x_r \end{pmatrix} = 0, 
\end{equation}
where $T = (T_{i,j})$ is given by
\begin{align*} 
& T_{11} = g_1(g_r K_{11}-K_{21}) + g_r K_{12}-K_{22}, \\
& T_{12} = -e^{-ik}(g_r K_{11} - K_{21}), \\
& T_{21} = e^{ik}(g_1 K_{11} + K_{12}), \\
& T_{22} = - K_{11}
\end{align*}
with $K = (K_{i,j})$.
}
\end{lemma}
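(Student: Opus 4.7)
The plan is to reduce the full system of relations on $(x_1,\dots,x_r)$ to a $2\times 2$ linear system in the boundary values $(x_1,x_r)$ by eliminating all intermediate coordinates using the transfer matrix $K$ and the two ``junction'' relations \eqref{24/04/04/22:35} and \eqref{24/04/05/11:22}. The claimed form \eqref{24/04/05/13:15} will then drop out after identifying the entries $T_{ij}$.

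First, I would expand \eqref{24/04/05/11:28} componentwise:
\[
x_r = K_{11}x_2 + K_{12}x_1, \qquad x_{r-1} = K_{21}x_2 + K_{22}x_1,
\]
and use \eqref{24/04/04/22:35} to eliminate $x_2$, obtaining
\[
x_r = (g_1 K_{11}+K_{12})x_1 - e^{-ik}K_{11}x_r,
\]
\[
x_{r-1} = (g_1 K_{21}+K_{22})x_1 - e^{-ik}K_{21}x_r.
\]
Multiplying the first identity by $e^{ik}$ and rearranging immediately yields the second row of \eqref{24/04/05/13:15}, namely $e^{ik}x_r = e^{ik}(g_1K_{11}+K_{12})x_1 - K_{11}x_r$, which is precisely $T_{21}x_1 + T_{22}x_r$.

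For the first row, I would rewrite \eqref{24/04/05/11:22} as $e^{ik}x_1 = g_r x_r - x_{r-1}$. Substituting $g_r x_r = g_r K_{11}x_2 + g_r K_{12}x_1$ and $x_{r-1} = K_{21}x_2 + K_{22}x_1$ and collecting terms gives
\[
e^{ik}x_1 = (g_r K_{11}-K_{21})x_2 + (g_r K_{12}-K_{22})x_1.
\]
Eliminating $x_2$ once more via \eqref{24/04/04/22:35} produces
\[
e^{ik}x_1 = \bigl[g_1(g_rK_{11}-K_{21}) + g_rK_{12}-K_{22}\bigr]x_1 - e^{-ik}(g_rK_{11}-K_{21})x_r,
\]
which is exactly $T_{11}x_1 + T_{12}x_r$ with the $T_{ij}$ as stated. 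Assembling the two rows yields \eqref{24/04/05/13:15}.

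This is a purely algebraic derivation with no real obstacle; the only thing to be careful about is the bookkeeping of the $e^{\pm ik}$ factors coming from the boundary relations (the only place where $k$-dependence enters) and ensuring that we consistently eliminate $x_2$ and $x_{r-1}$ in favour of $(x_1,x_r)$. The nontrivial structural content has already been loaded into the definition of $K$ via \eqref{24/04/05/11:28}; the lemma simply packages the two boundary relations into a single eigenvalue-type equation $T\binom{x_1}{x_r} = e^{ik}\binom{x_1}{x_r}$, which will presumably be used in the sequel to pin down ${\rm dim}\ker E = 1$ for $k \in (-\pi,0)\cup(0,\pi)$ by showing that $e^{ik}$ is a simple eigenvalue of $T$ generically.
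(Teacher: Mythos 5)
Your proposal is correct and takes essentially the same route as the paper: the paper substitutes \eqref{24/04/04/22:35} and the rearranged form of \eqref{24/04/05/11:22} into \eqref{24/04/05/11:28} to get the matrix identity \eqref{24/04/05/13:40} and then asserts that ``several calculations'' yield \eqref{24/04/05/13:15}, and your componentwise elimination of $x_2$ and $x_{r-1}$ is exactly those calculations (equivalently, multiplying \eqref{24/04/05/13:40} on the left by the adjugate $\bigl(\begin{smallmatrix} g_r & -1 \\ e^{ik} & 0 \end{smallmatrix}\bigr)$ of the left-hand matrix), with all $T_{ij}$ matching the lemma.
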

\begin{proof}
From \eqref{24/04/05/11:22}, 
we have
\begin{equation} 
\label{24/04/05/11:26}
	x_{r-1} = g_r x_r - e^{ik} x_1.
\end{equation}
Substituting \eqref{24/04/04/22:35} and \eqref{24/04/05/11:26} 
in \eqref{24/04/05/11:28},
we obtain
\begin{equation}
\label{24/04/05/13:40}
\begin{pmatrix}
0 & 1 \\ - e^{ik} &  g_r 
\end{pmatrix}
\begin{pmatrix}
x_1 \\ x_r 
\end{pmatrix}
 	= K 
\begin{pmatrix}
g_1  & - e^{-ik} \\ 1 & 0
\end{pmatrix} 
\begin{pmatrix}
x_1 \\ x_r
\end{pmatrix}. 
\end{equation}	
Several calculations establish \eqref{24/04/05/13:15}.	
\end{proof}			

By \eqref{24/04/05/11:26}, we have
\begin{align}
\label{24/04/05/13:52.1}
& (e^{ik} - T_{11}) x_1 - T_{12} x_r = 0, \\
\label{24/04/05/13:52.2}
& - T_{21} x_1 + (e^{ik}- T_{22}) x_r = 0.
\end{align}
By definition, $g_i$ ($i=1,\cdots,r$) are real numbers,
so are $K_{i,j}$, $T_{11}$ and $T_{22}$. 
Assume that $k \in (-\pi,0) \cup (0,\pi)$.
Then we have $(e^{ik} - T_{22}) \not=0$ and, by \eqref{24/04/05/13:52.2},
\begin{equation} 
\label{24/04/05/14:02}
x_r = \frac{T_{21}x_1}{ e^{ik} -T_{22}}. 
\end{equation}
If $T_{21}=0$, then $g_1 K_{11} + K_{12} = 0$.
Hence, we know from \eqref{24/04/05/13:40} that
\[ (1  + e^{-ik} K_{11} ) x_r = 0.  \]
Because $1  + e^{-ik} K_{11} \not=0$, we have $x_r = 0$.
From \eqref{24/04/05/13:52.1} and the fact that $e^{ik} - T_{11} \not=0$, 
we have $x_1=0$. 
However, this contradicts ${\bm x}_1 \not=0$.
Thus, we know that $T_{21} \not=0$. 
From \eqref{24/04/05/14:03} and \eqref{24/04/05/14:02},
we have ${\bm x}_1 = x_1 ({\bm c}_1 + (e^{ik}-T_{22})^{-1}T_{21} {\bm c}_r)$.
This completes the proof of Lemma \ref{25.04.15.20:02}.

\section{Two dimensional lattice}\label{twodim}

In this section, we first form a class $\mathscr{G}^d$ ($d \geq 1$) of graphs that are obtained from $\mathbb{Z}^d$
by adding pendant edges
in a manner similar to the way in which $G \in \mathscr{G}$ is obtained from $\mathbb{Z}$.
We will say that $G \in \mathscr{G}^d$ if $G$ satisfies the following:
\begin{itemize}
\item[(G$_1^{(d)}$)] $\{ ({\bm n},0) \in \mathbb{Z}^d \times \{ 0,1 \} 
		\mid {\bm n} \in \mathbb{Z}^d \} \subset V(G) \subset \mathbb{Z}^d \times \{ 0,1 \}$ 
		and a vertex $({\bm n},0) \in V(G)$ is connected to $({\bm n} \pm 1,0) \in V(G)$.
\item[(G$_2^{(d)}$)] 
	If $({\bm n},1) \in V(G)$, then $({\bm n},1)$ is an end vertex connected to $({\bm n},0)$ 
	and ${\rm deg}({\bm n},0)=2d + 1$.
	If $({\bm n},1) \not\in V(G)$, then ${\rm deg}({\bm n},0)=2d$.
\item[(G$_3^{(d)}$)] There exist vectors ${\bm a}_1, \dots, {\bm a}_d \in \mathbb{Z}^d$
linearly independent (as vectors in $\mathbb{R}^d$) such that
$({\bm n} + {\bm a}_1,1), \dots, ({\bm n} + {\bm a}_d,1)  \in V(G)$ 
	if and only if $({\bm n},1) \in V(G)$.
\end{itemize}
It is clear that $\mathscr{G}^1 = \mathscr{G}$ by definition.

Let $d=2$. 
We consider two graphs $G_j = (V(G_j), E(G_j)) \in \mathscr{G}^2$ ($j=1,2$)
satisfying the condition that
$({\bm n},1) \in V(G_j)$ if and only if
\[ {\bm n} = \sum_{i=1}^2 n_i {\bm a}_i^{(j)}, \quad (n_1, n_2) \in \mathbb{Z}^2, \]
where
${\bm a}_1^{(1)} = (2,0)$, ${\bm a}_2^{(1)} = (0,1)$,
${\bm a}_1^{(1)} = (2,0)$ and ${\bm a}_2^{(1)} = (1,1) \in \mathbb{Z}^2$.
Clearly, $G_j$ satisfies the condition (G$_3^{(d)}$) with ${\bm a}_i = {\bm a}_i^{(j)}$.
The following theorem reveals that the spectrum of graphs belonging to $\mathscr{G}^2$
is more complex than the spectrum of those belonging to $\mathscr{G}^1$.

\begin{theorem}
\label{05/11/16:46}
{\rm
Let $G_j \in \mathscr{G}^2$ ($j=1,2$) be
defined as above. Then:
\begin{itemize}
\item[(a)] $L_{G_1}$ has no spectral gap, 
	{\it i.e.} $\sigma(L_{G_1}) = [-1,1]$.
\item[(b)] $L_{G_2}$ has a spectral gap, 
	{\it i.e.} $\sigma(L_{G_2}) \not= [-1,1]$.
\end{itemize}
}
\end{theorem}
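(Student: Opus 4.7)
My plan is to extend the one-dimensional Bloch decomposition (Proposition \ref{24/03/04/1:39} and Subsection \ref{ss.2.1}) to the two-dimensional setting afforded by (G$_3^{(2)}$). The two commuting translations $T_{{\bm a}_1}, T_{{\bm a}_2}$ both commute with $L_{G_j}$, so a joint discrete Fourier transform on $\mathbb{Z}^2$ produces
\[
\mathcal{U}^{-1}L_{G_j}\mathcal{U} = \int_{[-\pi,\pi]^2}^{\oplus} L_k^{(j)}\, dk,
\]
with $L_k^{(j)}$ acting on $\mathbb{C}^3$ (three vertices per fundamental cell, of degrees $5, 4, 1$). Continuity of the bands $k \mapsto \sigma(L_k^{(j)})$ will give $\sigma(L_{G_j}) = \bigcup_k \sigma(L_k^{(j)})$. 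A direct calculation analogous to Subsection \ref{ss.2.1} yields
\[
L_k^{(1)} = \begin{pmatrix} \tfrac{2\cos k_2}{5} & \tfrac{1+e^{-ik_1}}{5} & \tfrac{1}{5} \\ \tfrac{1+e^{ik_1}}{4} & \tfrac{2\cos k_2}{4} & 0 \\ 1 & 0 & 0 \end{pmatrix}, \qquad
L_k^{(2)} = \begin{pmatrix} 0 & \tfrac{c(k)}{5} & \tfrac{1}{5} \\ \tfrac{\overline{c(k)}}{4} & 0 & 0 \\ 1 & 0 & 0 \end{pmatrix},
\]
where $c(k) = (1+e^{-ik_2})(1+e^{-i(k_1-k_2)})$.

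For (b) I would compute $\det(L_k^{(2)} - \lambda I) = \lambda\bigl(\lambda^2 - \tfrac{1}{5} - \tfrac{|c(k)|^2}{20}\bigr)$. Since $|c(k)|^2 = 4(1+\cos k_2)(1+\cos(k_1-k_2))$ sweeps out the interval $[0, 16]$ as $k$ ranges over the torus, the nonzero bands $\pm\sqrt{\tfrac{1}{5} + \tfrac{|c(k)|^2}{20}}$ trace $\pm[1/\sqrt{5}, 1]$, and $\sigma(L_{G_2}) = [-1, -1/\sqrt{5}] \cup \{0\} \cup [1/\sqrt{5}, 1]$, strictly smaller than $[-1, 1]$.

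For (a), setting $\mu = 2\cos k_2 \in [-2, 2]$ and $|\beta|^2 = 2(1 + \cos k_1) \in [0, 4]$, the analogous expansion produces the cubic
\[
P(\lambda; k) = 20\lambda^3 - 9\mu\lambda^2 + (\mu^2 - 4 - |\beta|^2)\lambda + \mu = 0.
\]
For $\lambda \neq 0$, solving $P = 0$ for $|\beta|^2$ gives $|\beta|^2 = \mu^2 + (1/\lambda - 9\lambda)\mu + 20\lambda^2 - 4$, and a direct verification shows the crucial factorization
\[
|\beta|^2 = (\mu - 4\lambda)\bigl(\mu - 5\lambda + 1/\lambda\bigr).
\]
Using this identity I would exhibit, for each $\lambda \in [-1, 1] \setminus \{0\}$, an admissible pair $(\mu, |\beta|^2) \in [-2, 2] \times [0, 4]$. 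Three choices suffice: take $\mu = 4\lambda$ with $|\beta|^2 = 0$ when $|\lambda| \le 1/2$; take $\mu = 5\lambda - 1/\lambda$ with $|\beta|^2 = 0$ when $|\lambda| \in [(\sqrt{6}-1)/5, (\sqrt{6}+1)/5]$; and take $\mu = 2\lambda$ with $|\beta|^2 = 6\lambda^2 - 2$ when $|\lambda| \in [1/\sqrt{3}, 1]$. These three ranges cover $[-1, 0) \cup (0, 1]$, while $\lambda = 0$ is realized at $\mu = 0$, $k_2 = \pi/2$. Combined with the trivial bound $\|L_{G_1}\| \le 1$, this yields $\sigma(L_{G_1}) = [-1, 1]$.

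The main obstacle is the factorization $|\beta|^2 = (\mu - 4\lambda)(\mu - 5\lambda + 1/\lambda)$: it reduces a potentially delicate question about the image of the cubic band system to a concrete rectangle-covering problem. Once this identity is in hand, each of the three subintervals can be checked mechanically; without it, there is no obvious reason the three bands of $L_k^{(1)}$ should jointly sweep out all of $[-1, 1]$.
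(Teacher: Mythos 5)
Your proposal is correct and takes essentially the same route as the paper: a Floquet--Bloch reduction to a $3\times 3$ fiber matrix whose characteristic cubic is exactly the paper's equation \eqref{05/11/12:48} (your factorization $|\beta|^2=(\mu-4\lambda)(\mu-5\lambda+1/\lambda)$ recasts the paper's substitution $\lambda=v/a$, and your branches $\mu=4\lambda$ with $|\beta|^2=0$ and $\mu=2\lambda$ with $|\beta|^2=6\lambda^2-2$ coincide with the paper's $a=2$ and $a=1$ families). The only differences are cosmetic: your middle branch $\mu=5\lambda-1/\lambda$ replaces the paper's $a=1/2$ family, and in (b) you compute $\sigma(L_{G_2})=[-1,-1/\sqrt{5}]\cup\{0\}\cup[1/\sqrt{5},1]$ exactly via the factorization of $c(k)$, which is slightly stronger than the paper's discriminant contradiction showing only $(0,1/\sqrt{5})\subset\rho(L_{G_2})$.
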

\begin{proof}
By an argument similar to that in the proof of of Proposition \ref{24/03/19.17:14},
we observe that $\lambda \in \sigma(L_{G_1})$ if and only if 
there exists  a pair $(k_1,k_2) \in [-\pi, \pi]^2$ such that 
\begin{equation}
\label{05/11/12:48}
20 \lambda^3 - 18 (\cos k_2) \lambda^2 -2(3- 2\cos^2 k_2) \lambda + 2 \cos k_2 - 2 (\cos k_1) \lambda = 0.
\end{equation}
Let $ u = \cos k_1$ and $v = \cos k_2$. Then we have
\begin{equation}
\label{06/19/16:17}
2 \lambda v^2 +  (1-9 \lambda^2) v +  \lambda (10 \lambda^2 - u - 3 ) = 0.
\end{equation}
We now consider a solution of the form $\lambda = v/a$. 
In this case, \eqref{06/19/16:17} reduces to
\begin{equation}
\label{06/20/8:54} 
v \{ (2a-5)(a-2) v^2 - a^2( u+3 -a) \} = 0 
\end{equation}
Let $a=2$. Then, from \eqref{06/20/8:54}  we observe that 
$\lambda = v/2$ is the solution of \eqref{06/19/16:17} 
for any $v$ in the range $-1 \leq v \leq 1$ if $u=-1$.
Hence we have $[-1/2, 1/2] \subset \sigma(L_{G_1})$.
Now let $a=1$. 
It is easy to observe from \eqref{06/20/8:54} 
that $\lambda = v$ is the solution of \eqref{06/19/16:17} whenever $|v| = \sqrt{(u+2)/3}$.
Since $-1 \leq u \leq 1$, we have $[-1, -1/\sqrt{3}] \cup [1/\sqrt{3},1] \subset \sigma(L_{G_1})$.
Finally, let $a = 1/2$.  
From \eqref{06/20/8:54} we can show that 
$\lambda = 2v$ is a solution of \eqref{06/20/8:54} if $|v| = \sqrt{(2u+5)/28}$.
Hence we have $[-\sqrt{7/12}, -1/2] \cup [1/2,\sqrt{7/12}] \subset \sigma(L_{G_1})$.
Thus, (a) is established.

Similarly to the proof of (a),
we observe that $\lambda \in \sigma(L_{G_2})$ if and only if 
there exists a pair $(k_1,k_2) \in [-\pi, \pi]^2$ such that 
\begin{equation}
\label{05/11/12:43}
\lambda \left\{ 20 \lambda^2 - 6-4 \cos k_2 - 4 \cos^2 k_2 - 2(1 + 2 \cos k_2) \cos k_1 \right\}= 0.
\end{equation}
We shall prove that $\lambda \in \rho(L_{G_1})$ if $0 < \lambda < 1/\sqrt{5}$.
We fix $\lambda_0 \in (0,1/\sqrt{5})$ and assume that $\lambda_0 \in \sigma(L_{G_1})$.
Let $u = \cos k_1$ and $v = \cos k_2$. 
Then we know, from \eqref{05/11/12:43}, that
there exist a pair $(u,v) \in [-1,1]^2$ such that
\[ v^2 + (1+u) v + \frac{3+u-10 \lambda_0^2}{2} = 0. \]
Because the above quadratic polynomial in $v$ has a solution, its discriminant satisfies
$u^2 + 20 \lambda_0^2 -5 \geq 0$.
On the other hand, by assumption, we have $u^2 + 20 \lambda_0^2 - 5 < 0$.
This is a contradiction, and (b) is therefore proven.
\end{proof}

{\bf Acknowledgments:} 
The author would like to thank Daiki Hayashi for useful discussions.
This work was supported by Grant-in-Aid for Young Scientists(Start-up) 22840022.

\end{document}